\def\prg#1{\paragraph*{{\bf #1}}}
\newcommand{\Z}{{\mathbb{Z}}}
\newcommand{\R}{{\mathbb{R}}}
\newcommand{\T}{{\mathbb{T}}}
\newcommand{\N}{\mathbb{N}}
\newcommand{\cd}{\textsc{cdisp}}
\newcommand{\st}{\hat{S}}
\global\long\def\id{1}%
\global\long\def\zl{\overline{Z}}%
\global\long\def\xl{\overline{X}}%
\definecolor{greenvva}{RGB}{0,128,0}%
\renewcommand{\dv}{\textsc{dv}}
\newcommand{\cv}{\textsc{cv}}
\newcommand{\dia}[1]{\boldsymbol{#1}}
\global\long\def\ts{T}%
\newtheorem{theorem}{Theorem}
\newtheorem{lemma}[theorem]{Lemma}
\newtheorem{definition}{Definition}
\newtheorem{remark}{Remark}
\newcommand{\vva}[1]{#1}
\newcommand{\sch}[1]{#1}
\begin{document}

\title{Hybrid Oscillator-Qudit Quantum Processors:\\ stabilizer states, stabilizer codes, symplectic operations, \vva{and non-commutative geometry}}
\author{Sayan Chakraborty}
\affiliation{Institute for Advancing Intelligence, TCG CREST, Sector V, Salt Lake, Kolkata 700091, India}
\author{Victor V. Albert}
\affiliation{Joint Center for Quantum Information and Computer Science,
NIST/University of Maryland, College Park, Maryland 20742, USA}

\begin{abstract}

We construct stabilizer states and error-correcting codes on combinations of discrete- and continuous-variable systems, generalizing the Gottesman-Kitaev-Preskill (GKP) quantum lattice formalism. Our framework absorbs the discrete phase space of a qudit into a hybrid phase space parameterizable entirely by the continuous variables of a harmonic oscillator. The unit cell of a hybrid quantum lattice grows with the qudit dimension, yielding a way to simultaneously measure an arbitrarily large range of non-commuting position and momentum displacements. Simple hybrid states can be obtained by applying a conditional displacement to a Gottesman-Kitaev-Preskill (GKP) state and a Pauli eigenstate, or by encoding some of the physical qudits of a stabilizer state into a GKP code. The states' oscillator-qudit entanglement cannot be generated using symplectic (i.e., Gaussian-Clifford) operations, distinguishing them as a resource from tensor products of oscillator and qudit stabilizer states. 

\vva{
Simple hybrid codes can be thought of as subsystem GKP codes whose gauge factor is entangled with a qudit.
Our numerical investigations suggest that such codes can sometimes outperform GKP codes against physical noise, and their decoders can be tuned to accommodate either more qudit or more oscillator errors.
We also relate stabilizer codes to non-commutative tori, identifying that a general construction of such tori yields multi-mode multi-qudit extensions of GKP codes.
We explicitly calculate these codes' logical dimension and logical operators by utilizing the Morita equivalence between their stabilizer and logical tori.
} We provide examples using commutation matrices, integer symplectic matrices, and binary codes.

\end{abstract}

\maketitle

\tableofcontents

\section{Introduction \& summary}\label{Sec:Intro}

Quantum effects have proven useful in select applications~\cite{shor1999polynomial,bennett2014quantum,huang2021power,nisqrmp,dalzell2023quantum,yamakawa2024verifiable,jordan2024optimization}, and their stabilization on a scale necessary for such applications is increasingly becoming a reality~\cite{arute2019quantum,bluvstein2024logical,google2025quantum,putterman2025hardware}.

Quantum effects can be realized using discrete or continuous degrees of freedom.
The smallest quantum system --- a qubit --- consists of two states and is a special case of a multi-level system called a qudit.
The label of a qudit's defining states can be thought of as a discrete-variable (\(\dv\)) ``position''.
On the other hand, the position of a continuous-variable (\(\cv\)) quantum system, like a particle on a line or a harmonic oscillator, is continuous.

\begin{figure}[t!]
    \centering
    \includegraphics[width=0.8\columnwidth]{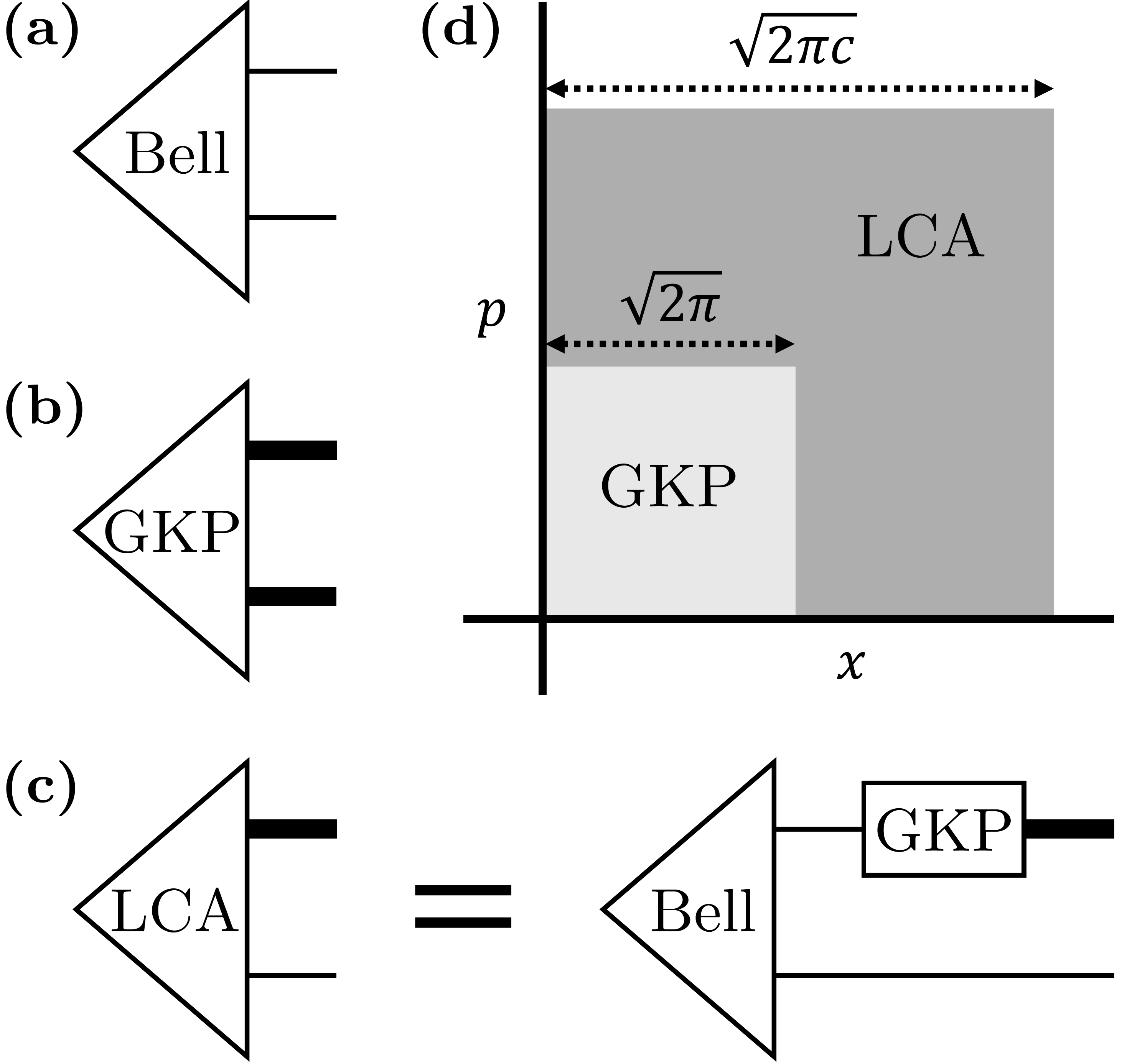}
    \caption{\small
    \textbf{(a)} Ordinary Bell states and \textbf{(b)} encoded GKP Bell states are defined on qudit (thin line) and oscillator (thick line) systems, respectively.
    \textbf{(c)} Simple locally compact Abelian (LCA) states interpolate between these purely \(\dv\) and \(\cv\) states since they can be obtained by encoding one half of a Bell state into a GKP code.
    LCA states cannot be obtained from separable oscillator-qudit states via symplectic (i.e., Gaussian-Clifford) operations, underscoring the non-Gaussianity of the GKP encoder (cf.~\cite{baragiola2019all}).
    \textbf{(d)} The LCA framework absorbs the qudit into the oscillator to form a hybrid phase space that is parameterizable by the oscillator's original variables. 
    The minimal area enclosed by two commuting displacements on this phase space is \(2\pi c\), where \(c\) is the qudit dimension.
    This means position and momentum can be simultaneously measured modulo \(\sqrt{2\pi c}\) for arbitrarily large \(c\).
    }
    \label{fig0_circuits}
\end{figure}

Quantum devices admit both discrete and continuous degrees of freedom.
The orbital, nuclear-spin, electronic-spin, and electronic orbital degrees of freedom of atoms (real  or artificial~\cite{girvin2014circuit}), ions, and molecules are typically thought of as \(\dv\) systems, while their vibrational and, in the case of molecules, rotational modes are best described by continuous variables.
The electromagnetic field has both an amplitude, described by continuous quadratures, and a polarization, described by two directions at a time.

Discrete-variable degrees of freedom are routinely used as ancillas for continuous-variable ones, and vice versa.
But using \(\cv\) and \(\dv\) degrees of freedom on a more equal footing can combine advantages superadditively to yield more powerful quantum devices~\cite{forn2019ultrastrong,wallquist2009hybrid,xiang2013hybrid,andersen2015hybrid,deng2017quantum,blais2021circuit}.
To name a few recent examples, hybrid systems can (in principle) factor integers with surprisingly low overhead~\cite{brenner2024factoring,brenner2025trading}, they can provide
increased error-correcting capability~\cite{park2012quantum,sychev2018entanglement,omkar2020resource,omkar2021highly,lee2024fault}, 
and they have been leveraged to simulate gauge theories~\cite{farrell2023preparations1,farrell2023preparations2,crane2024hybrid,araz2025hybrid,kemper2025hybrid} and vibrational spectra~\cite{huh2015boson,vu2025computational} with real devices~\cite{wang2020efficient,wang2023observation,huie2025three}. 

Most investigations of hybrid systems so far have been rooted in the particular details of their underlying quantum device. 
One exception to this line of thought is the manuscript~\cite{liu2024hybrid}, which distills computational primitives of hybrid systems from a general and platform-agnostic point of view.
However, other canonical primitives, such as uncertainty relations,
stabilizer states, and symplectic operations, remain to be fully fleshed out.

We continue the direction pushed by Ref.~\cite{liu2024hybrid} and collect hybrid analogues of  ``easy'' states and ``easy'' operations on hybrid \(\cv\)-\(\dv\) systems.
These are, respectively, stabilizer states and symplectic (i.e., Gaussian-Clifford) operations.
By \eczoohref[“stabilizer”]{stabilizer}, we mean states that form joint +1-eigenvalue eigenspaces of a commuting group of oscillator displacements and qudit Pauli strings.

\prg{LCA states}
We define a class of
hybrid stabilizer states and error-correcting codes that are a fusion of 
quantum lattice states [a.k.a.\@ \eczoohref[Gottesman-Kitaev-Preskill (GKP)]{quantum_lattice} or Zak basis states]~\cite{gel1950expansion,zak1967finite,aharonov1969modular,10.1103/physreva.64.012310} with \eczoohref[modular-qudit stabilizer]{qudit_stabilizer} states~\cite{gottesman1997stabilizer,gibbons2004discrete,hostens2005stabilizer,nadella2012stabilizer}. 
We examine the properties of a class of experimentally relevant single-mode single-qudit LCA states and codes we call \textit{simple}.
A qudit's phase space is discrete,
but the LCA framework absorbs it into a hybrid phase space parameterizable entirely by the continuous variables of the oscillator. 
This hybrid phase space allows one to measure a larger set of non-commuting oscillator displacements than is possible with a purely oscillator system~\cite{aharonov1969modular}, assuming no qudit errors occur during measurement.
This should yield more powerful protocols for measuring displacements than the recently realized~\cite{valahu2024quantum} GKP-based protocols~\cite{duivenvoorden2017single,zhuang2020distributed,labarca2025quantum}.

We review the structure of symplectic (i.e., stabilizer-preserving) operations, highlighting that they cannot entangle oscillators and qudits~\cite{prasad2008decomposition,bermejo2016normalizer}.
This makes LCA states a distinct resource from tensor products of \(\dv\) and \(\cv\) stabilizer states because each LCA state is entangled with respect to the oscillator-qudit bipartition.
We show that displaced versions of LCA states form an entangled basis~\cite{wallach2000unentangled,parthasarathy2004maximal,parthasarathy2005extremal,walgate2008generic} for an oscillator-qudit system, much like Bell states do for two-qudit systems. 

LCA states can be prepared with readily realizable non-Gaussian resources.
A simple example of an LCA state is a qudit stabilizer state for which a subset of qudits has been encoded into a GKP code (see Fig.~\ref{fig0_circuits}).
LCA states can also be obtained by applying an oscillator-qudit conditional-displacement gate to a GKP state and a Pauli eigenstate.
Such states occur mid-circuit in syndrome extraction protocols based on this gate~\cite{terhal2016encoding}.
In this sense, LCA states have effectively been experimentally realized in the trapped-ion~\cite{de2022error,valahu2024quantum,matsos2024robust,matsos2024universal} and superconducting-circuit~\cite{campagne2020quantum,eickbusch2022fast,sivak2023real,lachance2024autonomous,brock2025quantum} devices implementing such protocols.

\prg{LCA codes}
\vva{
Extending LCA states, our simple LCA codes are constructed from certain GKP codes by first performing a bipartite logical subsystem decomposition on the GKP code and then entangling one of the factors in this decomposition with a qudit.
We show the number-theoretic properties necessary for such decompositions to be possible, yielding simple LCA code families for certain combinations of physical and logical qudit dimensions.

We numerically evaluate the entanglement fidelity of recovery of LCA and GKP codes under the Petz recovery (a.k.a. transpose recovery) for the combination of photon loss on the mode and qudit amplitude damping.
This fidelity is a good proxy for the optimal entanglement fidelity, and our experiments suggest that LCA codes can outperform GKP codes against physical noise for certain parameters (see Sec.~\ref{subsec:numerics}).

Simple LCA codes are special cases of our general multi-mode multi-qudit LCA codes. 
The general construction is obtained by relating stabilizer codes to non-commutative tori, a type of algebra studied in non-commutative geometry and string theory~\cite{rieffel,schwarz1998morita,connes1998noncommutative,rieffel1999morita,seiberg1999string,yoneya2000string,li2004strong,elliott2008strong}.
We utilize Morita equivalence --- a structural relation between algebras --- to obtain logical operators for a given LCA stabilizer group (see Thm.~\ref{thm:main_text}).
We explicitly calculate the logical dimension of the code, expressing it in terms of a Pfaffian.

The parallel between stabilizer error correction and non-commutative geometry --- our main conceptual and technical result --- is summarized in Table~\ref{tab:nc}.
Infinite families of lattice-like codes with guaranteed finite logical dimension are possible over arbitrary qudit structures, and it is likely that exotic and interesting hybrid lattices will lie along the path we lay out.

LCA states and codes combine several interesting and useful features of \(\cv\) and \(\dv\) systems --- lattice structure, Fourier duality, and behavior under symplectic transformations, to name a few.
We anticipate that LCA states may find applications anywhere such features are used.
We hope that our work will spearhead future investigations into fault-tolerant hybrid quantum algorithms and quantum and post-quantum cryptographic protocols~\cite{regev2009lattices,conrad2024good,kuperberg2025hidden}.
}

We overview the key features of simple LCA states in Sec.~\ref{sec:lca-states}.
We then define LCA codes, first for a single-mode single-qudit space in Sec.~\ref{sec:single-mode}, and then using a simple multi-mode extension in Sec.~\ref{sec:standard-form}.
We present a general multi-mode LCA code construction in Sec.~\ref{sec:multimode}.
We discuss symplectic physical and logical gates in Sec.~\ref{sec:logical-gates}.

\begin{table}[t]
\begin{tabular}{cc}
\toprule 
Quantum error correction & Non-commutative geometry\tabularnewline
\midrule
oscillator-qudit phase space & LCA group\tabularnewline
stabilizer algebra & commutative torus\tabularnewline
logical algebra & ~~dual non-commutative torus~~\tabularnewline
stabilizer-to-logical mapping & Morita equivalence\tabularnewline
\bottomrule
\end{tabular}

\caption{
\label{tab:nc}
\vva{
Oscillator-qudit phase spaces are labeled by positions and momenta
of modes, as well as computational and Fourier-dual basis-state labels
of qudits of various dimensions $c_{j}$. These subsystems each contribute
factors of $\mathbb{R}\times\mathbb{R}$ and $\mathbb{Z}_{c_{j}}\times\mathbb{Z}_{c_{j}}$
to the overall phase space, which forms an instance of a
locally compact abelian (LCA) group. The (commuting) stabilizers of
an LCA code are an instance of a commutative torus, while the logical
oscillator-qudit displacements realize a non-commutative torus dual to the stabilizer torus. The duality between them is an instance
of Morita equivalence, a special case of which is the inversion mapping between
the GKP code's lattice $\varTheta$ and its dual, the logical lattice
$\varTheta^{-1}$ (see Sec.~\ref{sec:multimode} and  Thm.~\ref{thm:main_text}).
}
}
\end{table}

\section{Simple LCA states}\label{sec:lca-states}

The key features of LCA states and codes are already evident for the case of a single oscillator and qubit.
We overview this case before concluding this section with the natural qudit extension.

A qubit stabilizer state is a \(+1\)-eigenstate of a group of commuting Pauli operators.
All single-qubit Pauli strings anticommute, so a state whose stabilizer group contains at least two non-identity elements must be defined on two or more qubits.
The smallest example of such a state is the Bell state \(|00\rangle + |11\rangle\), stabilized by \(\hat{\sigma}_{\mathsf{x}} \otimes \hat{\sigma}_{\mathsf{x}}\) and \(\hat{\sigma}_{\mathsf{z}} \otimes \hat{\sigma}_{\mathsf{z}}\) (tensor product will be implied from now on).
Here, \(\hat{\sigma}_{\mathsf{x}}\) and \(\hat{\sigma}_{\mathsf{z}}\) are the standard unitary qubit Paulis.

An oscillator stabilizer state is a \(+1\)-eigenstate of a group of commuting displacement operators.
Let \(\exp(-i a \hat{p})\) and \(\exp(i b \hat{x})\) be displacement operators for a single mode, defined using momentum and position operators, \(\hat p\) and \(\hat x\), respectively, and real displacements \(a,b\).
The simplest example, often called the \textit{qunaught} GKP state~\cite{walshe2020continuous}, is the unique \(+1\)-eigenstate of the two commuting single-mode displacements for which \(a = b = \sqrt{2\pi}\).
Moving along a square using these two displacements traces out an area of \(2\pi\), yielding a group commutator (a.k.a. Berry phase) of \(\exp(i 2\pi) = 1\).
The corresponding qunaught GKP state is a superposition, or comb, of oscillator positions
\begin{equation}\label{eq:gkp}
    |\varnothing\rangle=\sum_{\ell\in\mathbb{Z}}|\ell\sqrt{2\pi}\rangle_{\hat{x}}\quad\quad\text{(qunaught state)},
\end{equation}
where \(|x\rangle_{\hat x}\) are (non-normalizable) oscillator position vectors~\cite{albert2025bosonic}.

A hybrid oscillator-qubit stabilizer state is a \(+1\)-eigenstate of a group of commuting tensor products of displacement and Pauli operators.
A trivial example is a tensor product of a Bell state with the qunaught GKP state, whose hybrid stabilizer group is the tensor product of the corresponding oscillator and qubit groups.
This state is separable since the qubit and oscillator factors of its group elements are independent.
In other words, commutation is ensured since 
the displacements and Pauli elements commute individually and independently of each other.

An LCA stabilizer state admits a stabilizer group whose oscillator and qubit factors \textit{do not} commute individually.
The simplest such group is generated by the pair
\begin{equation}\label{eq:lca-qubit}
    \st_{X}=e^{-i\sqrt{\pi}\hat{p}}\hat{\sigma}_{\mathsf{x}}\quad\quad\text{and}\quad\quad \st_{Z}=e^{i\sqrt{\pi}\hat{x}}\hat{\sigma}_{\mathsf{z}}~.
\end{equation}
The oscillator parts of the above stabilizers anticommute since the phase-space area of the square traced out by them is \(\pi\).
This phase is exactly compensated by the anticommutation of the two Paulis.

\begin{figure}[t]
    \centering
    \includegraphics[width=0.95\columnwidth]{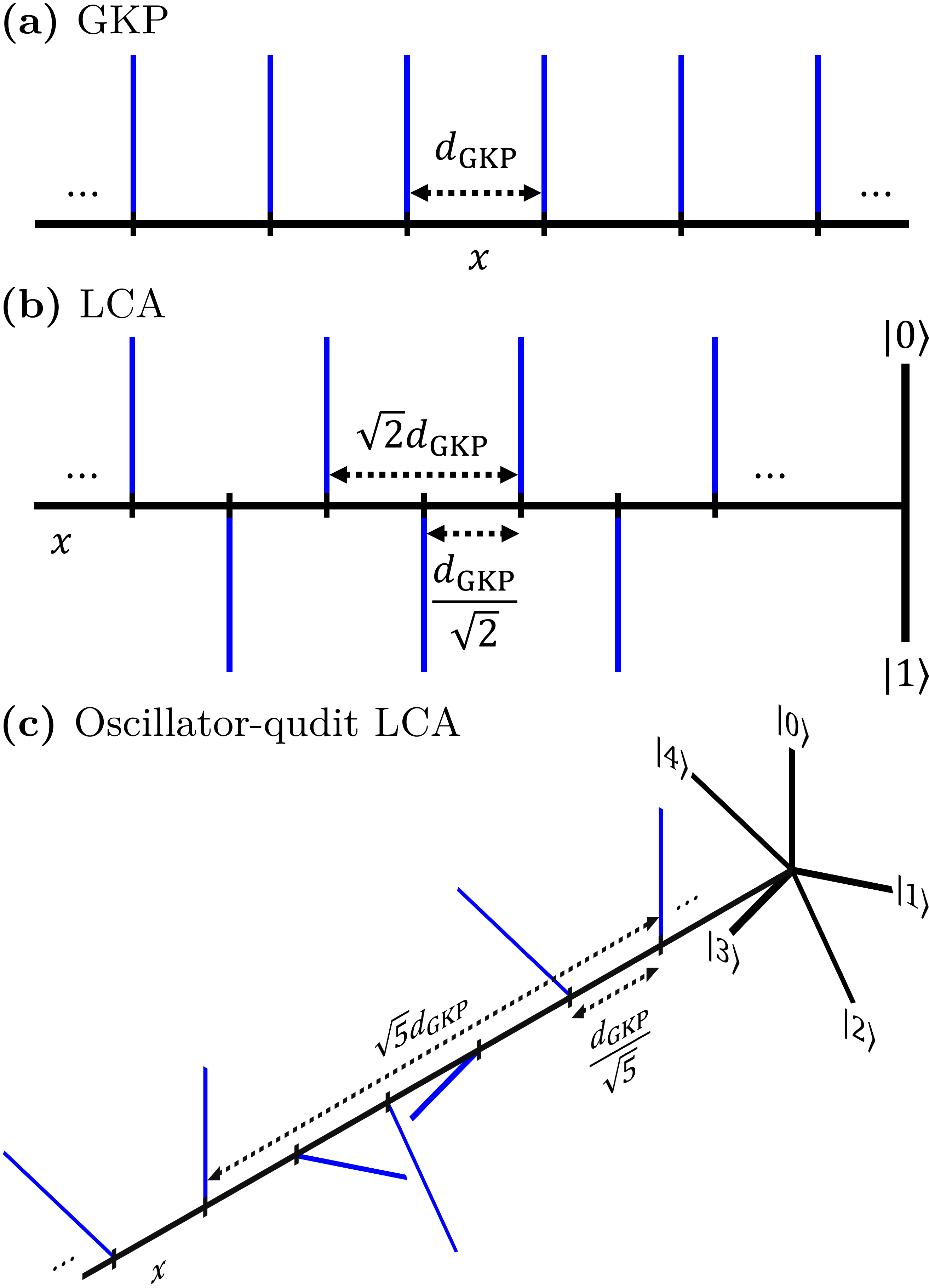}
    \caption{\small
    \textbf{(a)} A GKP qunaught state \(|\varnothing\rangle\)~\eqref{eq:gkp} is a superposition, or \textit{comb}, of oscillator position vectors spaced uniformly at a distance of \(d_{\text{GKP}} = \sqrt{2\pi}\). 
    \textbf{(b)} A simple oscillator-qubit LCA state~\eqref{eq:lca-qubit-state} is a superposition of two such combs, each tensored with the qubit state \(|0\rangle\) or \(|1\rangle\).
    The spacing within each comb is a factor of \(\sqrt{2}\) larger than that of the qunaught state, while the spacing between combs is smaller by the same factor.
    \textbf{(c)} Simple oscillator-qudit LCA states~\eqref{eq:qudit-lca-state}, depicted here for qudit dimension \(c=5\), are a superposition of \(c\) combs tensored with the \(c\) qudit basis states.
    The intra-comb spacing is a factor \(\sqrt{c}\) larger than that of the qunaught state.
    }
    \label{fig1_lca}
\end{figure}

The unique state stabilized by the above group is
\begin{subequations}\label{eq:lca-qubit-state}
\begin{align}
|\text{LCA}\rangle&=\sum_{\ell\in\mathbb{Z}}|\ell\sqrt{\pi}\rangle_{\hat{x}}\left|\ell\text{ mod }2\right\rangle \\&=\sum_{s\in\mathbb{Z}}{|{(2s)\sqrt{\pi}}\rangle_{\hat{x}}}\left|0\right\rangle +{|{(2s+1)\sqrt{\pi}}\rangle_{\hat{x}}}\left|1\right\rangle ~,
\end{align}
\end{subequations}
where unlabeled kets are qubit canonical basis states \(|0\rangle\) and \(|1\rangle\).
The first line shows that the teeth of the oscillator position-vector comb are sorted among the qubit states according to parity via the mapping,
\begin{equation}\label{eq:mapping-lattice}
\ell \to (\ell\sqrt{\pi},\ell\text{ mod }2)~,    
\end{equation}
for the \(\ell\)th tooth.
This is an embedding of the integer lattice \(\mathbb{Z}\) into the group \(\mathbb{R}\times\mathbb{Z}_2\) labeling the oscillator-qubit configuration space (cf.~\cite{albert2020robust}).
This embedding yields the two 0-state and 1-state subcombs in the second equation, consisting of even and odd multiples of \(\sqrt{\pi}\), respectively.

The above state corresponds to an embedding of a square lattice since the displacement parts of the stabilizers~\eqref{eq:lca-qubit} are both by the same amount.
One can, of course, shape the lattice to be rectangular by applying a squeezing operation.

The LCA state is not normalizable since each position vector is not normalizable, and since each comb contains an infinite number of such vectors.
LCA states can be normalized in the same way as GKP states, namely, by replacing each position vector with a finitely squeezed state, and by modulating each comb of position vectors with a Gaussian envelope~\cite{10.1103/physreva.64.012310,menicucci2014fault,albert_performance_nodate} (see Sec.~\ref{sec:approximate}).

\subsection{Larger detectable displacement}

The mapping~\eqref{eq:mapping-lattice} reveals that the spacing of LCA states must be different than that of GKP states --- a single \vva{integer shift} in \(\ell\) does not map the lattice back onto itself.
Instead, two \vva{shifts} are required to match the second (modular) coordinate.

Observing Eq.~\eqref{eq:lca-qubit-state}, the spacing between the teeth of each subcomb in the LCA state is \(2\sqrt{\pi}\).
This means that a displacement of \(2\sqrt{\pi}\) is required to shift a tooth of each comb to the location of its neighbor on the same comb.
This is  a factor of \(\sqrt{2}\) larger than that of the qunaught state~\eqref{eq:gkp}.
The displacement by \(2\sqrt{\pi}\) is a stabilizer obtained by squaring \(\st_{X}\)~\eqref{eq:lca-qubit}.

Shorter displacements \(\exp(-ia\hat{p})\) with \(|a|<2\sqrt{\pi}\) shift the combs away from their original positions, with the exception of \(|a|=\sqrt{\pi}\).
Such a shift exchanges the two subcombs, but the resulting state,
\begin{align}\label{eq:qubit-error}
e^{-i\sqrt{\pi}\hat{p}}|\text{LCA}\rangle&=\sum_{s\in\mathbb{Z}}{|{(2s+1)\sqrt{\pi}}\rangle_{\hat{x}}}\left|0\right\rangle +{|{(2s)\sqrt{\pi}}\rangle_{\hat{x}}}\left|1\right\rangle  \nonumber\\
&=\hat{\sigma}_{\mathsf{x}}|\text{LCA}\rangle~,
\end{align}
is orthogonal to the LCA state because the two subcombs are paired with the wrong qubit states.
Therefore, all shifts up to (and excluding) \(2\sqrt{\pi}\) yield states orthogonal to the original LCA state.
The shifted states at \(a=\pm\sqrt{\pi}\) are the same state, and the continuum of distinct shifted states can be labeled by either \(0\leq a < 2\sqrt{\pi}\) or \(-\sqrt{\pi} \leq a < \sqrt{\pi}\).

The same effect is observed for momentum shifts by writing the LCA state in the qubit \(|\pm\rangle\) and oscillator momentum bases \(|p\rangle_{\hat p}\),
\begin{equation}
    |\text{LCA}\rangle = 2\sqrt{\pi} \sum_{s\in\mathbb{Z}}|(2s)\sqrt{\pi}\rangle_{\hat{p}}\left|+\right\rangle +|(2s+1)\sqrt{\pi}\rangle_{\hat{p}}\left|-\right\rangle ~.
\end{equation}
The above can be obtained by inserting a resolution of the identity in terms of the two bases in Eq.~\eqref{eq:lca-qubit-state}.
Momentum shifts \(0 < |b| < \sqrt{\pi}\) and \(\sqrt{\pi} < |b| < 2\sqrt{\pi}\) yield orthogonal subcombs, the shift by \(|b|=\sqrt{\pi}\) exchanges the subcombs but still yields an orthogonal state, and the shift by \(|b|=2\sqrt{\pi}\) is the square of the stabilizer~\(\st_{Z}\)~\eqref{eq:lca-qubit} or its adjoint.

\begin{table*}[t]
\begin{tabular}{cccccc}
\toprule 
~~System~~~ & $|\psi\rangle$ & $U$ & \(U\) symplectic? & $U|\psi\rangle$ & \(U|\psi\rangle\) stabilizer?\tabularnewline
\midrule
$\Z_{c}\times\Z_{c}$ & $|0\rangle_{\phantom{\text{GKP}}}|+\rangle_{\phantom{\text{GKP}}}$ & $\textsc{cnot}$ & {\color{greenvva}\checkmark} & qudit Bell & {\color{greenvva}\checkmark}\tabularnewline
$\R_{\phantom{c}}\times\R_{\phantom{c}}$ & $|\overline{0}\rangle_{\text{GKP}}|\overline{+}\rangle_{\text{GKP}}$ & $\exp(i\hat{p}_{1}\hat{x}_{2})$ & {\color{greenvva}\checkmark} & GKP Bell & {\color{greenvva}\checkmark}\tabularnewline
$\R_{\phantom{c}}\times\Z_{c}$ & $|\overline{0}\rangle_{\text{GKP}}|+\rangle_{\phantom{\text{GKP}}}$ & $\cd_{\sqrt{2\pi/c}}$~\eqref{eq:qudit-cond-disp} & $\mathbin{\color{red}\times}$ & LCA & {\color{greenvva}\checkmark}\tabularnewline
\bottomrule
\end{tabular}

\caption{
Table comparing two-qudit, two-oscillator, and oscillator-qudit stabilizer states, each created by an entangling unitary acting on a separable state.
The first state is a two-qudit Bell state, created by the Clifford \(\textsc{cnot}\) gate acting on a tensor product of qudit \(Z\) and \(X\) eigenstates.
The second is a GKP logical Bell state, created by a Gaussian conditional displacement acting on encoded GKP eigenstates.
The third is an LCA state, created by the non-symplectic oscillator-qudit conditional displacement~\eqref{eq:qudit-cond-disp} acting on a GKP state and a qudit \(|+\rangle\) state.
While all three are stabilizer states --- unique eigenstates of a commuting group of Pauli/displacement operators --- the third one cannot be created from a separable state by a Gaussian-Clifford operation.
}
\end{table*}

\subsection{Measuring the displacement}

We show that the Dirac-delta orthogonal set of displaced states,
\begin{equation}\label{eq:displaced-lca}
e^{-i\xi_{1}\hat{p}}e^{i\xi_{2}\hat{x}}|\text{LCA}\rangle\quad\text{for}\quad 0\leq \sch{\xi_{1},\xi_{2}}< 2\sqrt{\pi},
\end{equation}
is a complete ``basis'' for the oscillator-qubit system.
It may be surprising that applying qudit Paulis is not necessary to span the entire oscillator-qudit state space.
This is because displacements by \(\sqrt{\pi}\) perform the same action on the state, as shown in Eq.~\eqref{eq:qubit-error} for the case of the momentum and Pauli-\(X\) shifts.
In this way, the qudit is ``absorbed'' into the \(\cv\) phase space of the oscillator.

The displaced states are complete because their displacements exactly cover the required phase-space volume.
To show this, we pretend the physical qubit of the LCA state is in its own GKP mode and infer the required volume by freezing out any displacements that take the state out of the encoded subspace.

After encoding the physical qubit, the LCA state becomes the two-mode GKP Bell state as shown in Fig.~\ref{fig0_circuits}(b),
\begin{subequations}
\begin{align}
|\text{LCA}\rangle&\to|\overline{0}\rangle_{\text{GKP}}|\overline{0}\rangle_{\text{GKP}}+|\overline{1}\rangle_{\text{GKP}}|\overline{1}\rangle_{\text{GKP}}\\&\propto\,U_{BS}|\varnothing\rangle|\varnothing\rangle~,
\end{align}
\end{subequations}
where the overline kets are logical codewords of the GKP code stabilized by \(\exp(-i\sqrt{4\pi}\hat{p})\) and \(\exp(i\sqrt{4\pi}\hat{x})\).
This state can be obtained from two qunaught GKP states via a beam-splitter \(U_{BS}\)~\cite{walshe2020continuous}, as shown in the second line (up to normalization).

The beamsplitter preserves phase-space volume, so the phase-space volume required to create a complete basis by displacing the states on either side of the above equation is the same.
This is quantified by the following,
\begin{equation}\label{eq:volume}
A_{\text{LCA}} A_{\text{GKP}} = A^2_{\varnothing}~.
\end{equation}
The volume on the right-hand side is the product of the areas formed by distinct displacements of the qunaught GKP state.
Its stabilizers are displacements by \(\sqrt{2\pi}\), so all smaller displacements span an area of \(A_{\varnothing} = 2\pi\) for each mode~\cite{bacry1975proof}.

The volume on the left-hand side of Eq.~\eqref{eq:volume} is a product of the area \(A_{\text{LCA}}\) of displaced LCA states that we wish to determine, times the area \(A_{\text{GKP}}\) of detectable displacements of the GKP code of the second mode.
The latter area quantifies all displacements that would be frozen out if the qubit is assumed to be rigidly fixed in its codespace, which is the same as assuming the qubit was not encoded in the GKP code in the first place.
The logical displacements of the GKP code are \(\exp(-i\sqrt{\pi}\hat{p})\) and \(\exp(i\sqrt{\pi}\hat{x})\), and all detectable displacements span an area of \(A_{\text{GKP}} = \pi\).

Solving the above yields \(A_{\text{LCA}}=4\pi \), matching the area spanned by the displacements in Eq.~\eqref{eq:displaced-lca} and  proving that the displaced LCA states form a complete set.

Measuring in the set of displaced LCA states can distinguish displacements in either direction and along either position or momentum as long as the absolute value of each is less than \(\sqrt{\pi}\).
On the other hand, a measurement in the set of displaced GKP states allows one to distinguish any displacements whose absolute value is less than \(\sqrt{\pi/2}\).
This means there is a \(\sqrt{2}\) amplification in the range of distinguishable displacements of LCA states relative to GKP states.
This amplitification is possible \textit{as long as} there is no physical-qubit error during the displacement.
Such an error is equivalent to an additional displacement of \(\sqrt{\pi}\) in one or more directions, as evident by Eq.~\eqref{eq:qubit-error}, and would yield an incorrect measurement.

\subsection{Non-symplectically generated entanglement}
\label{sec:symplectic-entanglement}

The LCA state~\eqref{eq:lca-qubit-state} couples each GKP comb with its own qubit basis state, meaning it is entangled across the oscillator-qubit partition.
This entanglement is maximal w.r.t. the qubit, so tracing out the oscillator yields identity on the qubit.
All LCA states are entangled in this sense, and the set of such states forms a completely entangled subspace~\cite{wallach2000unentangled,parthasarathy2004maximal,parthasarathy2005extremal,walgate2008generic} of the oscillator-qubit space.
Displaced versions of each state, as in Eq.~\eqref{eq:displaced-lca}, form a completely entangled basis for the space.

The LCA state is also a stabilizer state, namely, it is an eigenstate of a commuting subgroup of the group of oscillator displacements and qubit Paulis.
However, it cannot be obtained from a separable oscillator-qubit state under symplectic transformations, i.e., under any unitaries that preserve this group.
This is in contrast to qubit stabilizer states and qunaught GKP states, which are all interconvertible via Clifford and Gaussian operations  \sch{within themselves,} respectively.
In this sense, LCA states form a new class of resource states whose entanglement cannot be created or destroyed by Gaussian-Clifford operations.

The simple reason for this feature is that oscillator-qubit symplectic transformations are always tensor products of a Gaussian operation on the oscillator and a Clifford gate on the qubit.
To see this, we can express the qubit Pauli matrices as exponentials of qubit position and momentum variables, \(\hat{\sigma}_{\mathsf{x}} = (-1)^{\hat m}\) and \(\hat{\sigma}_{\mathsf{z}} = (-1)^{\hat{\jmath}}\), where \(\hat{\jmath} = (\id-\hat{\sigma}_{\mathsf{z}})/2\) and \(\hat m = (\id-\hat{\sigma}_{\mathsf{x}})/2\), respectively~\cite{albert2017general}.
Any entangling symplectic gate must shift the oscillator position or momentum by some multiple of \(\hat m\) or \(\hat \jmath\).
This multiple is real-valued,
while the qubit position is binary. 
Therefore, there can never be an entangling gate of such a form since it would not preserve the binary phase space of the qubit.

Consider, for example, the oscillator-qubit conditional displacement by a real \(b\),
\begin{subequations}\label{eq:qubit-cond-disp}
\begin{align}
 \cd_{b}&=\id\otimes|0\rangle\langle0|+e^{ib\hat{p}}\otimes|1\rangle\langle1|\\&=\exp\left(ib\hat{p}\otimes\hat{\jmath}\right)~.
\end{align}
\end{subequations}
This gate does not preserve the displacement-Pauli group, i.e., it is not Gaussian-Clifford, because it shifts the oscillator position, 
\(\hat{x} \to \hat{x} + b \hat{\jmath}\),    
by a non-binary shift of the qubit.
When conjugating the corresponding displacement, this translates to
\begin{equation}
e^{ia\hat{x}}\to e^{ia(\hat{x}+b\hat{\jmath})}=e^{ia\hat{x}}\hat{\sigma}_{\mathsf{x}}^{ab/\pi}
\end{equation}
for any real \(a\).
The power of \(\hat{\sigma}_{\mathsf{x}}\) needs to be an integer, but that is not the case for generic \(a,b\).
\vva{
In fact, appending Clifford-Gaussian transformations with conditional displacements yields a universal gate set~\cite{brenner2025trading}.
}

One can show there are no entangling symplectic operations more formally by noticing that such operations correspond to homomorphisms between the configuration spaces of the oscillator and qubit, \(\mathbb R\) and \(\mathbb Z_2\), respectively, and recalling that there are no such homomorphisms.
Roughly speaking, one can never map a finite group into the reals because there is no way to break the periodicity of the group's elements without violating its multiplication rules (see Appx.~\ref{app:single-mode-symplectic} and Refs.~\cite{prasad2008decomposition,bermejo2016normalizer}).

\subsection{Initialization}

The LCA state is obtained by applying an oscillator-qubit conditional displacement~\eqref{eq:qubit-cond-disp} by \(b = \sqrt{\pi}\) to a tensor product of the GKP code's zero codeword and the qubit \(|+\rangle\) state.
The comb of the codeword consists of teeth separated by \(\sqrt{4\pi}=2\sqrt{\pi}\), and the conditional displacement shifts this comb by an extra \(\sqrt{\pi}\) whenever the qubit is in the \(|1\rangle\) state.
This yields the two combs making up the LCA state~\eqref{eq:lca-qubit-state},
\begin{equation}
    \cd_{\sqrt{\pi}} |\overline{0}\rangle_{\text{GKP}}|+\rangle \propto|\text{LCA}\rangle~.
\end{equation}

Another way to see that non-Gaussian resources are required to make the LCA state is to observe that the state can be obtained by initializing two qubits in a Bell state and encoding one of them into the GKP code.
The qubit's Paulis are mapped to the anti-commuting logical operators of the code, \(\hat{\sigma}_{\mathsf{x}}\to\exp(-i\sqrt{\pi}\hat{p})\) and \(\hat{\sigma}_{\mathsf{z}}\to\exp(i\sqrt{\pi}\hat{x})\), and the Bell state's stabilizer group becomes the LCA stabilizer group generated by the operators in Eq.~\eqref{eq:lca-qubit}.

\subsection{Oscillator-qudit LCA states}\label{sec:lca-states-qudit}

The qubit case so far discussed extends naturally to oscillator-qudit LCA states of a mode and a qudit of dimension \(c \geq 2\) and reduces to the GKP qunaught state~\eqref{eq:gkp} at \(c=1\).
The stabilizer generators extend to
\begin{subequations}
    \begin{align}
   \st_{X}&=e^{-i\sqrt{\frac{2\pi}{c}}\hat{p}}\hat{X}^{\dagger}\\\st_{Z}&=e^{i\sqrt{\frac{2\pi}{c}}\hat{X}}\hat{Z}~,
\end{align}
\end{subequations}
where \(\hat{X},\hat{Z}\) are the usual modular-qudit Pauli matrices satisfying \(\hat{X}\hat{Z}\hat{X}^{\dagger}\hat{Z}^{\dagger}=\exp(\ensuremath{-i2\pi/c})\).
The commutation of the qudit factors of each stabilizer compensates the phase arising from the non-commuting displacements.

The two stabilizers admit a unique joint eigenstate with eigenvalue \(+1\),
\begin{subequations}\label{eq:qudit-lca-state}
    \begin{align}
|\text{LCA}\rangle&=\sum_{\ell\in\mathbb{Z}}\left|\ell\sqrt{{\textstyle \frac{2\pi}{c}}}\right\rangle _{\hat{x}}\left|-\ell\right\rangle \\&=\sum_{j\in\mathbb{Z}_{c}}\left(\sum_{s\in\mathbb{Z}}\left|\left(sc+j\right)\sqrt{{\textstyle \frac{2\pi}{c}}}\right\rangle _{\hat{x}}\right)\left|-j\right\rangle~,
    \end{align}
\end{subequations}
where unmarked kets, \(\left|-\ell\right\rangle\) and \(\left|-j\right\rangle\), are the \(c\) distinct qudit \(\hat{Z}\)-eigenstates, defined for argument modulo \(c\).

The second form of the state shows that each qudit eigenstate, \(\left|-j\right\rangle\), is paired with its own GKP state, and that the \(c\) different GKP states are spaced apart by \(\sqrt{2\pi/c}\).
This means that a displacement of \(c\sqrt{2\pi/c} = \sqrt{2\pi c}\) is required to map the state back to itself.
This is a factor of \(\sqrt{c}\) longer than the stabilizing displacement of the qunaught GKP state, obtained from the above at \(c = 1\).

The momentum-space form of the state can be obtained by inserting a resolution of the identity in the modal momentum and qudit-\(\hat{X}\) bases, yielding a similar comb form,
\begin{equation}\label{eq:dual-comb}
|\text{LCA}\rangle=\sqrt{2\pi c}\sum_{j\in\mathbb{Z}_{c}}\left(\sum_{s\in\mathbb{Z}}\left|\left(sc+j\right)\sqrt{{\textstyle \frac{2\pi}{c}}}\right\rangle _{\hat{p}}\right)|j\rangle_{\hat{X}}~,
\end{equation}
where \(|j\rangle_{\hat{X}}\) is the \(j\)th \(\hat{X}\)-eigenstate.
The same argument holds for momentum shifts, and the shortest pure-momentum displacement stabilizer is a momentum shift of \(\sqrt{2\pi c}\).

The square unit cell enclosed by \(\sqrt{2\pi c}\)-displacements in each direction has area \(A_{\text{LCA}} = 2\pi c\), notably proportional to the qudit dimension.
The set of LCA states displaced by displacements whose values lie in the unit cell is a complete set because of a qudit version of the volume argument we made for the oscillator-qubit case.

To state the argument, we embed the qudit into its own qudit GKP code, after which the oscillator qudit LCA state becomes a GKP-qudit Bell state,
\begin{equation}
|\text{LCA}\rangle\to\sum_{j\in\mathbb{Z}_{c}}|\overline{j}\rangle_{\text{GKP}}|\overline{-j}\rangle_{\text{GKP}}\propto\,e^{i\hat{p}_{1}\hat{x}_{2}}|\overline{0}\rangle_{\text{GKP}}|\overline{+}\rangle_{\text{GKP}}~.
\end{equation}
The proportionality shows that, up to normalization, this Bell state can be created from a separable logical state using the above conditional oscillator-oscillator displacement.
This conditional displacement \textit{is} a Gaussian-Clifford operation, unlike its oscillator-qudit cousin, meaning that it preserves phase-space volume.

Both the qudit GKP-zero and qudit GKP-plus states are Gaussian-deformed versions of the qunaught GKP state, so the joint volume of their unit cell is the same as the product of two qunaught states.
The volume of detectable displacements of the GKP qudit is \(2\pi/c\), yielding the correct LCA-state volume,
\begin{equation}
A_{\text{LCA}}=\frac{A_{\varnothing}^{2}}{A_{\text{GKP}}}=\frac{(2\pi)^{2}}{2\pi/c}=2\pi c\,.
\end{equation}

Using similar arguments as for the oscillator-qubit case, we conclude that oscillator-qudit symplectic transformations are always tensor products of Gaussian operations and modular-qudit Clifford gates (see Appx.~\ref{app:single-mode-symplectic}).
As such, oscillator-qudit LCA states cannot be obtained by acting with such transformations on a separable state.
The two ways to initialize an LCA state --- encoding one half of a qudit Bell state into a GKP qudit, or applying a conditional oscillator-qudit displacement --- also carry over to the qudit case.
The oscillator-qudit conditional displacement required for the latter is
\begin{equation}\label{eq:qudit-cond-disp}
    \cd_{\sqrt{2\pi/c}}=e^{i\sqrt{\frac{2\pi}{c}}\hat{p}\otimes\hat{\jmath}}=\sum_{j\in\Z_{c}}e^{ij\sqrt{\frac{2\pi}{c}}\hat{p}}\otimes|j\rangle\langle j|~.
\end{equation}

\subsection{Hybrid Zak bases}

A type of dual basis that exists for GKP codes --- the discrete Zak basis~\cite{gel1950expansion,zak1967finite} --- can also be extended to the LCA case~\cite{kaniuth1998zeros,enstad2020balian}.

The set of distinct displaced LCA states,
\(|\xi_{1},\xi_{2}\rangle=e^{-i\xi_{1}\hat{p}}e^{i\xi_{2}\hat{x}}|\text{LCA}\rangle\),
is sometimes called the continuous or periodic Zak basis~\cite{englert2006periodic,ketterer2016quantum,fabre2020wigner,pantaleoni2023zak}.
It is parameterized by two angles, \(0 \leq \xi_{1,2} < \sqrt{2\pi c}\), and the fact that it is complete implies that one can parameterize functions on the real line with those on a two-dimensional torus.

Taking the two-dimensional Fourier transform on this torus yields an integer-valued hybrid analogue of the  Zak basis,
\begin{equation}
  |n_{1},n_{2}\rangle=\int_{0}^{\sqrt{2\pi c}}d\xi_{1}d\xi_{2}e^{i(n_{1}\xi_{2}-n_{2}\xi_{2})}|\xi_{1},\xi_{2}\rangle\,,
\end{equation}
for integers \(n_{1,2}\).
Since the unit cell area of the displacements behind displaced LCA states is \(2\pi c\), the integers parameterizing the discrete Zak basis can be thought to form a grid of spacing \(\sqrt{2\pi c}\) --- a factor of \(\sqrt{c}\) larger than the grid of the original discrete Zak basis.

\begin{figure}[t]
    \centering
    \includegraphics[width=0.95\columnwidth]{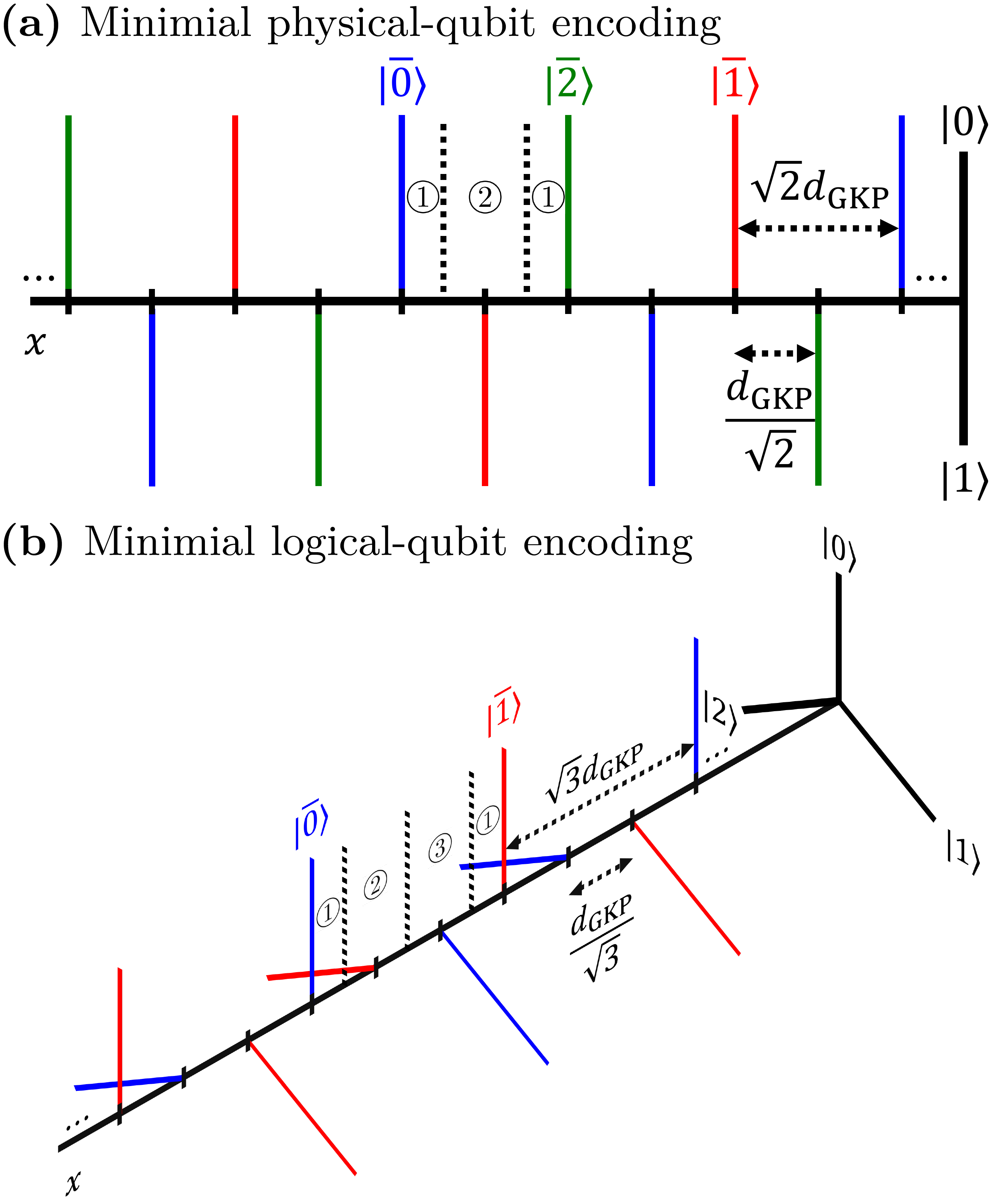}
    \caption{\small
    \label{fig2_encodings}
    Depictions of logical codewords of LCA codes encoding \textbf{(a)} a logical qutrit (\(K=3\)) into a mode and a physical qubit (\(c=2\)) [see Eq.~\eqref{eq:qutrit}], and \textbf{(b)} a logical qubit (\(K=2\)) into a mode and a physical qutrit (\(c=3\)) [see Eq.~\eqref{eq:qubit}]. 
    These are the smallest encodings of a logical qutrit and a logical qubit, respectively.
    Both utilize the same six oscillator combs, paired up in two different ways to the physical qubit and qutrit basis states.
    The shortest undetectable displacement is \(\sqrt{c}d_{\text{GKP}}\), 
    where \(d_{\text{GKP}} = \sqrt{2\pi/K}\)~\eqref{eq:code-distance}.
    A decoder correcting against all single-qudit Pauli-\(\hat{X}\) errors along with small position displacements can be constructed by splitting up the syndrome space into segments \(\Circled{1},\Circled{2},\cdots,\Circled{c}\), as shown (see Sec.~\ref{sec:qec}).
    The same picture holds against Pauli-\(\hat{Z}\) errors and momentum displacements in the Fourier domain. 
    }
\end{figure}

\section{Simple LCA codes}
\label{sec:single-mode}

We define LCA codes for a single oscillator and a single mode.
These span the joint \(+1\)-eigenvalue eigenspace of a commuting set of displacement-Pauli operators.

\subsection{Stabilizers}

LCA codes are defined by a non-negative integer \(c\) and integer \(d\) that satisfy \begin{equation}\label{eq:gcd}
   \gcd(c,d)=1\quad\leftrightarrow\quad d\in\Z_c^{\times}\quad\quad\text{(GCD condition).}
\end{equation}
In other words, \(d\) is an element of the multiplicative group \(\Z_c^{\times}\) of integers modulo \(c\) --- the subset of \(\Z_c\) for which multiplication modulo \(c\) is well-defined.
The value of \(c\) is the qudit dimension, and the possible logical dimensions \(K\) (not all are possible for a given \(c\)!) are determined by \(d\).

The stabilizer group is generated by the operators
\begin{subequations}\label{eq:stabilizers}
    \begin{align}
\st_{X}&=e^{-i\sqrt{\frac{2\pi K}{c}}\hat{p}}\hat{X}^{\dagger d}\\\st_{Z}&=e^{i\sqrt{\frac{2\pi K}{c}}\hat{X}}\hat{Z}~.
    \end{align}
\end{subequations}
Since \(\hat{X}^c=\hat{Z}^c=\id\), stabilizers with the same \(K\) and different values \(d,d^{\prime}\) are equivalent whenever \(d \equiv d^{\prime}\) modulo \(c\).
There are thus at most \(c-1\) different ``classes'' of qudit factors, corresponding to the \(c-1\) distinct nontrivial powers of \(\hat{X}^{\dagger}\).
This is saturated for prime \(c\), in which case there are exactly \(c-1\) code classes.

For a given \(d \mod c\), the possible values of \(K\) differ from each other by multiples of \(c\) and can be modulated by a free integer parameter \(\theta\),
\begin{equation}
    K = c\theta+d \quad\quad\text{(logical dimension).}
\end{equation}
All integer values of \(d,\theta\) are allowed in principle, and an absolute value should be taken on the right-hand side to accommodate this (see Appx.~\ref{app:general-single-mode-integer-case} for such a treatment).
Instead, we fix \(\theta\geq 0\) and \(0 \leq d \leq c-1\), which guarantees a non-negative \(K\).
This is done without loss of generality since a code with any other \(d\) is equivalent to one with a \(d\) in our defined range via a compensating change in \(\theta\), and since a code with \(c\theta+d<0\) is equivalent to a code with the signs of \(\theta,d \) flipped.

The commutation relation between the two stabilizers is satisfied due to the qudit and oscillator phases cancelling each other, and due to \(\theta\) being an integer,
\begin{equation}\label{eq:lattice-comm}
\st_{X}\st_{Z}\st_{X}^{\dagger}\st_{Z}^{\dagger}=e^{-i\frac{2\pi}{c}(c\theta+d-d)}=e^{-i2\pi\theta}=1~.
\end{equation}

The \(K=d=1\) case for general \(c\) reduces to the simple LCA states from Sec.~\ref{sec:lca-states-qudit}.
The other classes have \(d > 1\) and thus host logical codespaces.
We call these cases \((c,d)\)-\textit{LCA codes} and refer to the \(\theta = 0\) special cases as \textit{simple \((c,d)\)-LCA codes}  --- a family that includes the qunaught GKP state (\(c=d=1\)), simple LCA states (\(c>d=1\)), as well as all LCA codes with \(K=d > 1\).
Any single-mode single-qudit LCA code can be obtained from a simple LCA code by tuning \(d\) and \(\theta\).

\subsection{Codewords}
\label{subsec:exact-codewords}

\vva{
Codewords of simple LCA codes are constructed by starting with a \(Kc\)-dimensional GKP code, splitting it up via a logical subsystem decomposition into \(K\)- and \(c\)-dimensional factors, and entangling the latter factor with the \(c\)-dimensional qudit.
}

Given \(\mu\in\mathbb{Z}_K\), a basis of codewords is defined by
\begin{subequations}
    \begin{align}
\!\!|\overline{\mu}\rangle&=\sum_{\ell\in\mathbb{Z}}\left|\left(\ell K+\mu\right){\textstyle \sqrt{\frac{2\pi}{Kc}}}\right\rangle _{\hat{x}}\left|-(\ell d+\mu)\right\rangle \\&=\sum_{j\in\mathbb{Z}_{c}}\!\!\left(\sum_{s\in\mathbb{Z}}\left|\left(jK+\mu\right){\textstyle \sqrt{{\textstyle \frac{2\pi}{Kc}}}}+\sqrt{2\pi Kc}s\right\rangle _{\hat{x}}\right)\!\!\left|-(jd+\mu)\right\rangle\nonumber\\&=\sum_{j\in\Z_{c}}\left|\text{GKP}_{Kj+\mu}^{Kc}\right\rangle \left|-(jd+\mu)\right\rangle~,\label{eq:gkp-comp}
\end{align}
\end{subequations}
where \(|~\rangle_{\hat x}\) is a position vector, and where the unmarked ket is a canonical qudit \(\hat{Z}\) eigenstate whose argument is evaluated modulo \(c\). 
The second equation splits the hybrid comb into a superposition of \(c\) different combs, each multiplied by a unique qudit state. 
The qudit state is unique because \(-(jd+\mu)\) modulo \(c\) evaluates to a distinct value for each \(j\) due to the GCD condition~\eqref{eq:gcd}.

\vva{
Equation~\eqref{eq:gkp-comp} expresses the LCA state in terms of its underlying GKP code, where we define 
\begin{equation}\label{eq:gkp-codewords}
    \left|\text{GKP}_{\mu}^{K}\right\rangle =\sum_{s\in\Z}\left|\left(Ks+\mu\right){\textstyle \sqrt{\frac{2\pi}{K}}}\right\rangle_{\hat x} 
\end{equation}
to be the \(\mu\)th logical state of a GKP qudit of dimension \(K\).
Each LCA codeword contains \(c\) distinct GKP codewords, and a total of \(Kc\) codewords participate in the LCA code.
The GKP code's logical states are split up according to the logical\footnote{This decomposition is different from a previously studied decomposition, see e.g.~\cite{shaw2024stabilizer}, which occurs on the level of the error syndromes.} subsystem decomposition
\begin{equation}\label{eq:subsystem}
    \Z_{Kc} = \Z_K \times \Z_c\quad\quad\text{(logical subsystem decomp.)}.
\end{equation}
The \(c\)-dimensional factor entangles with the \(c\) states of the physical qudit, while the \(K\)-dimensional factor stores the logical information.
This decomposition occurs by the Chinese remainder theorem because \(K\) and \(c\) are defined to be coprime. 
}

We cover a few notable specific cases, two of which are depicted in Fig.~\ref{fig2_encodings}.
Other possible combinations of \(c\) and \(d\) are those whose corresponding entry in Table~\ref{tab:bezout} is filled in.

\prg{GKP (\(c=1\))}
This case yields the square-lattice GKP code with logical dimension \(K = \theta+1\) and a separation of \(\sqrt{2\pi/K}\) between the \(K\) codeword combs.
The simple (\(\theta=0\)) case reduces to the qunaught GKP state~\eqref{eq:gkp}.

\begin{widetext}

\prg{Physical qubit (\(c=2\))}
The GCD condition for this case requires \(d\) to be odd, and the fact that \(\hat{\sigma}_{\mathsf{x}}^d = \hat{\sigma}_{\mathsf{x}}^{d\text{ mod }2}\) means that qubit factors of the stabilizers~\eqref{eq:stabilizers} for any admissible \(d\) are the same as those at \(d=1\).
The simple element of this class is the state at \(\theta=0\), and other values of \(\theta\) are LCA codes with odd logical dimension \(K = 2\theta +1\).
Notably, a logical qubit is excluded from the possible encodings.
This holds more generally: one cannot encode a \(c\)-dimensional logical space in a mode and a physical qudit of the same dimension.

The case of \(\theta = 0\) reduces to the LCA state in Eq.~\eqref{eq:lca-qubit-state}.
The smallest possible encoding is a logical qutrit (\(\theta=1\)), depicted in Fig.~\ref{fig2_encodings}(a). 
Its explicit codewords are
\begin{subequations}\label{eq:qutrit}
    \begin{align}
        |\overline{0}\rangle&=\sum_{s\in\sqrt{12\pi}\mathbb{Z}}\left|{\color{blue}0{\textstyle \sqrt{\frac{\pi}{3}}}+s}\right\rangle _{\hat{x}}\left|0\right\rangle +\left|{\color{blue}3{\textstyle \sqrt{\frac{\pi}{3}}}+s}\right\rangle _{\hat{x}}\left|1\right\rangle \\|\overline{1}\rangle&=\sum_{s\in\sqrt{12\pi}\mathbb{Z}}\left|{\color{red}1{\textstyle \sqrt{\frac{\pi}{3}}}+s}\right\rangle _{\hat{x}}\left|1\right\rangle +\left|{\color{red}4{\textstyle \sqrt{\frac{\pi}{3}}}+s}\right\rangle _{\hat{x}}\left|0\right\rangle \\|\overline{2}\rangle&=\sum_{s\in\sqrt{12\pi}\mathbb{Z}}\left|{\color{greenvva}2{\textstyle \sqrt{\frac{\pi}{3}}}+s}\right\rangle _{\hat{x}}\left|0\right\rangle +\left|{\color{greenvva}5{\textstyle \sqrt{\frac{\pi}{3}}}+s}\right\rangle _{\hat{x}}\left|1\right\rangle  ~.
    \end{align}
\end{subequations}
\vva{
The corresponding GKP logical subsystem decomposition~\eqref{eq:subsystem} is \(\Z_6 = \Z_{3}\times \Z_{2}\), where the first factor is the logical factor, and the second is entangled with the physical qubit.
}

\prg{Physical qutrit (\(c=3\))}
The GCD condition for this case requires \(d\) to not be a multiple of 3, yielding two ``chiral'' classes of codes with \(K = 3\theta +1\) and \(K = 3\theta + 2\), respectively.
Since a physical qutrit is required to encode a logical qubit, this case includes the smallest possible encoding of a logical qubit --- the simple \((3,2)\)-LCA code.
The codewords for that encoding, depicted in Fig.~\ref{fig2_encodings}(b), are
\begin{subequations}\label{eq:qubit}
    \begin{align}
     |\overline{0}\rangle&=\sum_{s\in\sqrt{12\pi}\mathbb{Z}}\left|{\color{blue}0{\textstyle \sqrt{\frac{\pi}{3}}}+s}\right\rangle _{\hat{x}}\left|0\right\rangle +\left|{\color{blue}2{\textstyle \sqrt{\frac{\pi}{3}}}+s}\right\rangle _{\hat{x}}\left|1\right\rangle +\left|{\color{blue}4{\textstyle \sqrt{\frac{\pi}{3}}}+s}\right\rangle _{\hat{x}}\left|2\right\rangle \\|\overline{1}\rangle&=\sum_{s\in\sqrt{12\pi}\mathbb{Z}}\left|{\color{red}1{\textstyle \sqrt{\frac{\pi}{3}}}+s}\right\rangle _{\hat{x}}\left|2\right\rangle +\left|{\color{red}3{\textstyle \sqrt{\frac{\pi}{3}}}+s}\right\rangle _{\hat{x}}\left|0\right\rangle +\left|{\color{red}5{\textstyle \sqrt{\frac{\pi}{3}}}+s}\right\rangle _{\hat{x}}\left|1\right\rangle  ~.
    \end{align}
\end{subequations}
The same six comb states are utilized in this code as in the previous \(c=2\) logical-qutrit encoding. 
\vva{
The GKP logical subsystem decomposition~\eqref{eq:subsystem} is also the same, \(\Z_6 = \Z_{2}\times \Z_{3}\), but now the \(\Z_2\) factor is used for the logical information.
}
    
\end{widetext}

\prg{Physical qusix (\(c=6\))}

The GCD condition is more restrictive for composite values of \(c\), with only two possible values of \(d\), one and five, allowed in this case.
\vva{
The former yields a simple LCA state while the latter yields a five-dimensional logical space with subsystem decomposition \(\Z_{5}\times \Z_6 = \Z_{30}\).
}
The simple codes yield derivative codes with dimensions \(K = 6\theta + 1\) and \(K = 6\theta + 5\), respectively.

~~~~~~~~~~~~~~~~~~

~~~~~~~~~~~~~~~~~~~~~~~
\subsection{Logicals \& dual basis}

Since \(d\) is an element of the multiplicative group \(\Z_c^{\times}\), it has a multiplicative inverse modulo \(c\).
This inverse, defined here as the integer \(-a\), satisfies Bézout's identity,
\begin{equation}\label{eq:bezout}
    bc-ad=1\quad\quad\text{(Bezout)},
\end{equation}
for some integer \(b\).
The parameter \(a\) determines a pair of logical operators, while \(b\) is present in their commutation.

Logical operators are
\begin{subequations}\label{eq:logicals}
    \begin{align}
       \xl&=e^{-i\sqrt{\frac{2\pi}{Kc}}\hat{p}}\hat{X}^{\dagger}\\\zl&=e^{-i\sqrt{\frac{2\pi}{Kc}}\hat{x}}\hat{Z}^{a}~.
    \end{align}
\end{subequations}
These commute to
\begin{subequations}\label{eq:logical-comm}
\begin{align}
\xl\zl\xl^{\dagger}\zl^{\dagger}&=e^{i\frac{2\pi}{Kc}}e^{i\frac{2\pi}{c}a}&\\&=e^{i\frac{2\pi}{c\theta+d}\left(a\theta+\frac{ad+1}{c}\right)}&(K=c\theta+d)\\&=e^{i\frac{2\pi}{K}(a\theta+b)}&(ad+1=bc),
\end{align}
\end{subequations}
which is a primitive \(K\)th root of unity because \(a\theta+b\) can be shown to be coprime to \(K\).
Therefore, the above operators encode a logical qudit of dimension \(K\).

Dual-basis codewords can be obtained by applying powers \(\mu\) of the logical \(\hat{Z}\) to the equal superposition of all \(\hat{Z}\)-type codewords, denoted by \(|\overline + \rangle\) and proportional to the state in Eq.~\eqref{eq:dual-comb} with \(\sqrt{\frac{2\pi}{c}}\to \sqrt{\frac{2\pi}{Kc}}\),
\begin{equation}
    \overline{Z}^{\mu}|\overline{+}\rangle=\sum_{\ell\in\mathbb{Z}}\left|\left(\ell K-\mu\right){\textstyle \sqrt{\frac{2\pi}{Kc}}}\right\rangle _{\hat{p}}|\ell+a\mu\rangle_{\hat{X}}\,,
\end{equation}
where \(|~\rangle_{\hat{X}}\) is a qudit Pauli-\(\hat{X}\) eigenstate.

\begin{table}
\centering %

\begin{tabular}{c|c|c|c|c|c|c|c|c|c}
{$c\backslash d$} & {$1$} & {$2$} & {$3$} & {$4$} & {$5$} & {$6$} & {$7$} & {$8$} & {$9$}\tabularnewline
\hline 
{$\overset{\phantom{.}}{2}$} & {$\overline{1},0$} & {} & {} & {} & {} & {} & {} & {} & {}\tabularnewline
\hline 
{$\overset{\phantom{.}}{3}$} & {$\overline{1},0$} & $1,1$ & {} & {} & {} & {} & {} & {} & {}\tabularnewline
\hline 
{$\overset{\phantom{.}}{4}$} & {$\overline{1},0$} & {} & $1,1$ & {} & {} & {} & {} & {} & {}\tabularnewline
\hline 
{$\overset{\phantom{.}}{5}$} & {$\overline{1},0$} & $2,1$ & $\overline{2},\overline{1}$ & $1,1$ & {} & {} & {} & {} & {}\tabularnewline
\hline 
{$\overset{\phantom{.}}{6}$} & {$\overline{1},0$} & {} & {} & {} & $1,1$ & {} & {} & {} & {}\tabularnewline
\hline 
{$\overset{\phantom{.}}{7}$} & {$\overline{1},0$} & $3,1$ & $2,1$ & $\overline{2},\overline{1}$ & $\overline{3},\overline{2}$ & $1,1$ & {} & {} & {}\tabularnewline
\hline 
{$\overset{\phantom{.}}{8}$} & {$\overline{1},0$} & {} & $\overline{3},\overline{1}$ & {} & $3,2$ & {} & $1,1$ & {} & {}\tabularnewline
\hline 
{$\overset{\phantom{.}}{9}$} & {$\overline{1},0$} & $4,1$ & {} & $2,1$ & $\overline{2},\overline{1}$ & {} & $\overline{4},\overline{3}$ & $1,1$ & {}\tabularnewline
\hline 
{$\overset{\phantom{.}}{10}$} & {$\overline{1},0$} & {} & $3,1$ & {} & {} & {} & $\overline{3},\overline{2}$ & {} & $1,1$\tabularnewline
\end{tabular}

\caption{
\small
Integer pairs ``\(a,b\)'' satisfying Bézout's identity~\eqref{eq:bezout}, \(bc-ad=1\), given integers \(c\) and \(d<c\). 
An overline means the integer is negative, e.g., \(\overline 2 \to -2\). 
Each pair gives rise to the logical operators of a simple \((c,d)\)-LCA encoding with logical space of dimension \(K<c\) into a mode and a \(c\)-dimensional qudit.
Empty entries remind us that the LCA construction requires that \(\gcd(c,d)=1\).
The logical dimension can then be extended by tuning an additional free parameter \(\theta\) to yield an LCA code with dimension \(K=c\theta+d\).
\label{tab:bezout}
}
\end{table}

\section{Error detection \& correction}\label{sec:qec}

We present two decoding strategies for arbitrary \(c,d,K\) that use the syndromes of the code in ways.
The first protects against pure displacements, with the largest uncorrectable displacement being  a factor of \(\sqrt{c}\) larger than that of the GKP code.
The second protects against arbitrary qudit errors at the expense of a smaller set of correctable displacements relative to those of the GKP code.
\vva{
More balanced decoders are possible, and we highlight one for the case \(c=2\).
}

\subsection{Pure-displacement decoder}

The set of displaced states,
\begin{equation}\label{eq:displaced-states}
    |\xi_1,\xi_2,\mu\rangle=e^{-i\xi_{1}\hat{p}}e^{i\xi_{2}\hat{x}}|\overline{\mu}\rangle~,
\end{equation}
for logical index \(\mu\in \mathbb{Z}_K\) and displacements satisfying
\begin{equation}
    0\leq\xi_{1},\xi_{2}<\sqrt{2\pi c/K}~,
\end{equation}
forms a complete set of error states for the oscillator-qudit system.
The corresponding phase-space volume of the unit cell of the hybrid lattice is \(2\pi c/K\), and any displacement that keeps the code in this unit cell is detectable.

The shortest codespace-preserving displacement in each direction is the \(c\)th power of a logical operator.
For example,
\begin{equation}
    \xl^{c}=e^{-ic\sqrt{\frac{2\pi}{Kc}}\hat{p}}\hat{X}^{\dagger c}=e^{-id_{\text{LCA}}\hat{p}}~.
\end{equation}
This is a factor of \(\sqrt{c}\) larger than the same displacement for a \(K\)-dimensional square-lattice GKP code.
By analogy with \(\sqrt{2\pi/K}=d_{\text{GKP}}\) being the distance of the GKP code,
we can define
\begin{equation}\label{eq:code-distance}
d_{\text{LCA}}=\sqrt{\frac{2\pi c}{K}}=\sqrt{c}~d_{\text{GKP}}\quad\quad\text{(code distance)}
\end{equation}
to be the pure-displacement distance of the LCA code.

Displacements in either direction, \(\epsilon_1\) in position and \(\epsilon_2\) in momentum, can be corrected as long as the absolute value of either displacement is less than \(d_{\text{LCA}}/2\),
\begin{equation}
   0\leq|\epsilon_{1}|,|\epsilon_{2}|<\frac{d_{\text{LCA}}}{2}\quad\quad\text{(correctable pure disp's)}.
\end{equation}
\vva{
The volume of correctable displacements is therefore \(2\pi c/K\), the volume of the lattice's square unit cell centered at the origin.
}
This is larger than that of a GKP code by the factor of \(c\).

, but correction need not hold when qudit errors are considered.

\subsection{Qudit-error decoder}

An alternative decoding strategy sacrifices some protection against displacements but protects against \textit{arbitrary} qudit errors.

Each error space, labeled by a syndrome pair \((\xi_1,\xi_2)\), can be related to the codespace via either a pure displacement or a combination of qudit error and smaller displacement.
Here, we assume the latter, yielding a way to protect against all qudit errors and the following set of shifts,
\begin{equation}
    0\leq|\epsilon_{1}|,|\epsilon_{2}|<\frac{d_{\text{LCA}}}{2c}=\delta\quad\text{(disp's + qudit errors)}.
\end{equation}
This is a smaller range of correctable displacements than that of a GKP code.

We describe this decoding procedure first for position shifts and Pauli-\(\hat{X}\) errors.
The syndrome range \([0, d_{\text{LCA}}) \ni \xi_1\) is partitioned into \(c\) regions,
\begin{subequations}\label{eq:split}
\begin{align}
\label{eq:split1}
\Circled{1}&=\left[0,\delta\right)\cup\left[(2c-1)\delta,2c\delta\right)\\\Circled{2}&=\left[1\delta,3\delta\right)\\\Circled{3}&=\left[3\delta,5\delta\right)\\&\vdots\nonumber\\\Circled{r}&=\left[\left(2r-3\right)\delta,\left(2r-1\right)\delta\right)\\&\vdots\nonumber
\end{align}
\end{subequations}
for \(2\leq r\leq c\).
This partitioning and the region labels are shown for two examples in Fig.~\ref{fig2_encodings}.
We associate one of \(c\) qudit errors with each region, while a sufficiently small displacement modulates the value of \(\xi_1\) within that region.
The cases at the boundaries between two regions can be assigned to either, and we assign each boundary syndrome to the region whose label is the smaller of the two.

Syndromes in region \(\Circled{1}\) handle pure shifts.
As shown in Eq.~\eqref{eq:split1}, this range consists of two segments.
Syndromes in the first segment, \(\epsilon_1\in \left[0,\delta\right)\), correspond to shifts by \(\epsilon_1\), while shifts by the same amount in the other direction correspond to syndromes \(d_{\text{LCA}} - \epsilon_1\in\left[(2c-1)\delta,2c\delta\right)\).

We assign syndromes in segment \(\Circled{2}\) to handle errors of the form \(\hat{X} \exp{-i\epsilon_1 \hat{p}}\) for \(|\epsilon_1|< \delta\).
Such shifts correspond to syndromes \(\xi_1 = 2\delta + \epsilon_{1}\).
The extra factor of \(2\delta\) accounts for the 
qudit error \(\hat{X}\),
which brings the codespace to the same error space as a shift by that amount,
\begin{equation}
\hat{X}=e^{-i2\delta\hat{p}}\,\xl^{\dagger}~,
\end{equation}
up to a logical Pauli.
The presence of the logical Pauli means that the qudit error and the displacement by \(2\delta\) cannot both be corrected.

Generalizing this, we can assign a qudit error \(\hat{X}^{r-1}\) for \(2 \leq r \leq c\) to each segment \(\Circled{r}\).
The different segments handle the different possible qudit errors, and syndromes within each segment are used to correct displacements satisfying \(|\epsilon_1| < \delta\) according to the assignment
\begin{equation}
    \xi_{1}=(2r-2)\delta+\epsilon_{1}~.
\end{equation}

The same partitioning can be done with momentum shifts.
We split up the momentum syndromes \([0, d_{\text{LCA}}) \ni \xi_2\) into \(c\) segments as in Eq.~\eqref{eq:split}.
The first segment is used to protect against pure momentum shifts.
Syndromes in each segment \(\Circled{r}\) of the remaining \(c-1\) segments are used to protect against qudit Pauli-\(\hat{Z}\) shifts \(\hat{Z}^{a(r-1)}\).
The extra power of \(a\) is due to the states  ``twisting'' differently in momentum space --- the qudit error \(\hat{Z}^a\) is related to the shortest momentum displacement,
\begin{equation}
\hat{Z}^{a}=e^{i2\delta\hat{x}}\,\zl\,,
\end{equation}
as derived from Eq.~\eqref{eq:logicals}.

\subsection{Balanced decoder}

\vva{
We can tune the displacement error capability of the above all-qubit-error decoder by moving the boundaries between regions \(\Circled{1}\) and \(\Circled{2}\) for LCA codes constructed with a physical qubit (\(c=2\)).
Those boundaries are depicted by the dotted lines in Fig.~\ref{fig2_encodings}(a). 
One can similarly move boundaries between various regions for codes with \(c>2\), but this results in decoders that are less symmetrical than the qubit example we mention.

Moving the two lines closer grows region \(\Circled{1}\) while shrinking region \(\Circled{2}\), which reserves a larger set of syndromes for pure displacements at the expense of a narrower set of correctable displacements occurring with a qubit error.
This regime should be useful when the oscillator is noisier than the qubit.

On the other hand, expanding region \(\Circled{2}\) allows the code to correct a larger set displacements that occur with a qubit error, at the expense of a smaller correctable set of pure displacements.
This tuning is appropriate if the physical qubit is noisier than the oscillator.
}

\section{Normalizable LCA codewords}
\label{sec:approximate}

\vva{
Ideal LCA states, like GKP states, are not normalizable.
The expression~\eqref{eq:gkp-comp} of the codewords of a \(K\)-dimensional LCA code in terms of codewords of a corresponding \(Kc\)-dimensional GKP code allows for simple approximate versions of these states that are normalizable.

An approximate GKP codeword \(\mu\) of a code with dimension \(K\) is~\cite[Eq.~(24)]{matsuura2020equivalence}
\begin{equation}\label{eq:gkp-approximate}
\left|\text{GKP}_{\mu}^{K}(\Delta)\right\rangle \propto\sum_{s\in\Z}e^{-\frac{1}{2}\Delta^{2}f(s)^{2}}e^{-if(s)\hat{p}}\hat{S}_{-\ln\Delta}|\text{vac}\rangle~,
\end{equation}
where \(|\text{vac}\rangle\) is the zero Fock state of the oscillator, where \(\hat S\) is the operator that squeezes the vacuum along the vertical (momentum axis) in phase space, and where   
\begin{equation}
f(s)=\textstyle{\sqrt{\frac{2\pi}{K}}}(Ks+\mu)~.    
\end{equation}
Taking \(\Delta \to 0\) yields ideal GKP states from Eq.~\eqref{eq:gkp-codewords}.
In that limit, the overlap between two normalized codewords \(\mu\neq \nu\) is asymptotically~\cite[Eq.~(77)]{matsuura2020equivalence}
\begin{subequations}
\label{eq:gkp_overlap}
\begin{align}
\left\langle \text{GKP}_{\mu}^{K}(\Delta)|\text{GKP}_{\nu}^{K}(\Delta)\right\rangle & \sim\exp\left[-\frac{\pi}{2\Delta^{2}}\frac{(\mu-\nu)^{2}}{K}\right]\\
&\leq\exp\left(-\frac{\pi}{2\Delta^{2}}\frac{1}{K}\right)~.
\end{align}
\end{subequations}
In words, the overlap is suppressed exponentially with \(K^{-1}\), the inverse of the GKP code's dimension.
For any nonzero value of \(\Delta\), this overlap leads to an intrinsic logical error in the code.

Normalizable versions of LCA states are then constructed by substituting the above normalizable GKP states into Eq.~\eqref{eq:gkp-comp} to yield
\begin{equation}\label{eq:normalizable_lca_states}
    |\overline{\mu_{\Delta}}\rangle={\textstyle \frac{1}{\sqrt{c}}}\sum_{j\in\Z_{c}}\left|\text{GKP}_{Kj+\mu}^{Kc}(\Delta)\right\rangle \left|-(jd+\mu)\right\rangle ~.
\end{equation}

\subsection{Codeword overlap}

Each LCA codeword \(\mu\) is constructed from \(c\) normalizable GKP codewords, and the entire LCA code is constructed from a GKP code of dimension \(Kc\) (see Sec.~\ref{subsec:exact-codewords}).
Per Eq.~\eqref{eq:gkp_overlap}, we would expect that the asymptotic overlap between such LCA states is suppressed exponentially in \((Kc)^{-1}\) as \(\Delta \to 0\).
However, the oscillator-qudit entanglement of the \(\Z_c\) factor of the GKP code's logical subsystem decomposition~\eqref{eq:subsystem} leads to a suppression in \(c^2(Kc)^{-1} = c/K\).

We state the answer first before proving it.
In the \(\Delta \to 0\) limit, the overlap between LCA codewords \(\mu\neq\nu\) is asymptotically upper-bounded as [cf. Eq.~\eqref{eq:gkp_overlap}]
\begin{equation}\label{eq:lca-overlap}
    \langle\overline{\mu_{\Delta}}|\overline{\nu_{\Delta}}\rangle\lesssim\exp\left(-\frac{\pi}{2\Delta^{2}}\frac{c}{K}\right)~.
\end{equation}
This bound is the same as the bound between two codewords \(\mu,\nu\) of the underlying \(Kc\)-dimensional GKP code that are at least \(c\) ``clicks'' apart, i.e., for which \(|\mu-\nu| \geq c\).
In other words, the LCA code's entanglement with the physical qudit has ensured that the underlying GKP codewords \(\mu,\nu\) for which \(|\mu-\nu|<c\) never overlap with each other and thus do not contribute to the LCA codeword overlap.

To prove this, we first write out the overlap using Eq.~\eqref{eq:normalizable_lca_states}, evaluate the inner products between the physical qudit states, and plug in the asymptotic expression~\eqref{eq:gkp_overlap},  yielding
\begin{equation}
    \langle\overline{\mu_{\Delta}}|\overline{\nu_{\Delta}}\rangle\sim\frac{1}{c}\sum_{j^{\prime},j\in\Z_{c}}e^{-\frac{\pi}{2\Delta^{2}}\frac{\left(K[j^{\prime}-j]+\mu-\nu\right)^{2}}{Kc}}\delta_{(j^{\prime}-j)d,\nu-\mu}^{\Z_{c}}~,
\end{equation}
where \(\delta_{x,y}^{\Z_{c}} = 1\) if \(x = y - Mc\) for some integer \(M\), and zero otherwise.

We now simplify the delta function. First, we get rid of the factor of \(d\) in the delta function by recalling that it has an inverse modulo \(c\) via the Bezout identity~\eqref{eq:bezout}, i.e., \(-ad \equiv 1\) modulo \(c\).
Multiplying both sides by \(-a\) simplifies the delta-function term to \(\delta_{j^{\prime}-j,a(\mu-\nu)}^{\Z_{c}}\).
Solving the delta function yields
\begin{equation}
    j^{\prime}-j=a(\mu-\nu)-M_{j}c
\end{equation}
for \(j\)-dependent integers \(M_j\).

Plugging this in and simplifying using the Bezout identity~\eqref{eq:bezout}, this time over the integers, yields
\begin{equation}
    \langle\overline{\mu_{\Delta}}|\overline{\nu_{\Delta}}\rangle\sim\frac{1}{c}\sum_{j\in\Z_{c}}e^{-\frac{\pi}{2\Delta^{2}}\frac{c}{K}\left(\left[a\theta+b\right]\left[\mu-\nu\right]-KM_{j}\right)^{2}}~.
\end{equation}
The difference term in parentheses turns out to be \(\geq 1\) when \(\mu\neq\nu\) because \(a\theta+b\) is nonzero and is coprime to \(K = c\theta+d\).
}

\subsection{Error word overlap}

\vva{
A more fine-grained understanding~\cite{glancy2006error} of the intrinsic memory error caused by the \(\Delta\)-regularization of the ideal LCA states can be done by looking at the overlap of codewords with their displaced counterparts, which make up the error words~\eqref{eq:displaced-states} of the code.

We examine the \(\mu=0\) codeword for simplicity, with the other cases yielding similar scaling with code parameters.
Evaluating the qudit inner products simplifies the overlap to one that is purely between various GKP error words. We obtain
\begin{equation}
    \langle\xi_{1},\xi_{2},0|\overline{0_{\Delta}}\rangle\propto\sum_{j\in\Z_{c}}\langle\text{GKP}_{Kj}^{Kc}|e^{-i\xi_{2}\hat{x}}e^{i\xi_{1}\hat{p}}\left|\text{GKP}_{Kj}^{Kc}(\Delta)\right\rangle ~,
\end{equation}
where the bra term is an ideal GKP state, while the ket term is a normalized GKP state.

We can make the inner product \(j\)-independent by remembering that the ideal \(\mu\)th GKP codeword can be obtained from the \(\mu=0\) codeword by displacing the latter in position by \(\mu\sqrt{\frac{2\pi}{Kc}}\).
This no longer holds exactly for normalizable GKP codewords since different codewords come with different envelope terms, \(e^{-\frac{1}{2}\Delta^{2}f(s)^{2}}\), in Eq.~\eqref{eq:gkp-approximate}.
As such, we work in the \(\Delta\to 0\) limit and make this approximation for the normalizable state (ket term above),
\begin{equation}
    \left|\text{GKP}_{\mu}^{Kc}(\Delta)\right\rangle \sim e^{-i\mu\sqrt{\frac{2\pi}{Kc}}\hat{p}}\left|\text{GKP}_{0}^{Kc}(\Delta)\right\rangle \,,
\end{equation}
while also plugging in the exact expression for the bra term.

Permuting and simplifying the displacements yields a \(j\)-dependent phase, but disentanges the GKP overlap term.
Squaring the overlap and simplifying the sum over \(j\) yields
\begin{equation}\label{eq:LCA-errorword-overlap}
    \!\!\left|\langle\xi_{1},\xi_{2},0|\overline{0_{\Delta}}\rangle\right|^{2}\propto g(\xi_{2})\left|\langle\text{GKP}_{0}^{Kc}|e^{-i\xi_{2}\hat{x}}e^{i\xi_{1}\hat{p}}\left|\text{GKP}_{0}^{Kc}(\Delta)\right\rangle \right|^{2},
\end{equation}
which is the error-word overlap of a \(Kc\)-dimensional GKP code modulated by
\begin{equation}
    g(\xi_{2})=\frac{\sin^{2}(\xi_{2}\sqrt{\pi Kc/2})}{\sin^{2}(\xi_{2}\sqrt{\frac{\pi K}{2c}})}~.
\end{equation}
This function disappears in the GKP case, at which \(c=1\).

As \(\Delta\to 0\), the GKP overlap term in Eq.~\eqref{eq:LCA-errorword-overlap} is known~\cite{wang2019quantum,tzitrin2020progress} to roughly be a Gaussian centered at \(\xi_1=\xi_2=0\).
The modulating \(g\)-function is also peaked at \(\xi_2=0\), and no other peaks occur in the relevant interval \(\xi_2 \in [-\sqrt{\frac{\pi c}{2K}},\sqrt{\frac{\pi c}{2K}} )\).
The LCA error-word overlap thus behaves similarly to the GKP one.

When properly normalized, the error-word overlap can be integrated over one third of the correctable pure displacement range of each quadrature to yield a probability of intrinsic memory error during the Steane fault-tolerant syndrome extraction protocol~\cite{glancy2006error}. 

}

\subsection{Average occupation number \& effective dimension}

\vva{
To quantify the effective Hilbert space of the code, we define the following energy operator on the oscillator and qudit,
\begin{equation}
    \hat{N}=\hat{n}+\sum_{\ell=0}^{c-1}\ell|\ell\rangle\langle\ell|~,
\end{equation}
where \(\hat n\) is the usual occupation number on the mode, and where the second term is an analogous occupation number for the qudit, expressed in terms of the qudit's \(\hat{Z}\) eigenstates.

Evaluating the expectation value of the above energy operator in the codeword~\eqref{eq:normalizable_lca_states} yields, asymptotically as \(\Delta \to 0\),
\begin{equation}\label{eq:energy0}
    \langle\overline{\mu_{\Delta}}|\hat{N}|\overline{\mu_{\Delta}}\rangle\sim\frac{1}{2\Delta^{2}}+\frac{c}{2}-1\,.
\end{equation}
This is a sum of the usual GKP codeword energy and the qudit piece, \(\sum_{\ell=0}^{c-1} j/c = (c-1)/2\).
The energy diverges in the same way with decreasing \(\Delta\) as the GKP case, while diverging linearly with increasing \(c\).

The energy of a code can similarly be evaluated as
\begin{equation}\label{eq:energy}
    \overline n = \frac{1}{K} \text{Tr} (P \hat N)
\end{equation}
for \(K\)-dimensional code projector \(P\).
This, in turn, can be used to define the \textit{effective dimension}~\cite{iosue_continuous-variable_2024} of the bounded-energy oscillator-qudit state subspace that contains the code.
This quantity,
\begin{equation}
    q_{\text{eff}} = \lceil 2\overline n + 1 \rceil~, 
\end{equation}
can be thought of as the physical space dimension of oscillator-qudit states whose energies are less than \(\overline n\).

When the oscillator is unoccupied and all qudit states are used equally, \(\overline n = (c-1)/2\), the effective dimension correctly reduces to the qudit dimension \(c\).
On the other hand, when the qudit isn't used, the effective dimension quantifies the oscillator's subspace of bounded energy states~\cite[Eq.~(49)]{iosue_continuous-variable_2024}.

The effective dimension can be used to estimate the energy required to house a GKP or LCA code of logical dimension \(K\).
Plugging in the asymptotic \(\Delta \to 0\) expression~\eqref{eq:energy0} for the energy of a GKP codeword implies that \(\Delta^{-2}+c-1 \gg K\) in order for the codewords of such a code to be sufficiently resolved.
}

\subsection{Numerical results}
\label{subsec:numerics}

\vva{
\begin{figure}[t!]
    \centering
    \includegraphics[width=1.0\columnwidth]{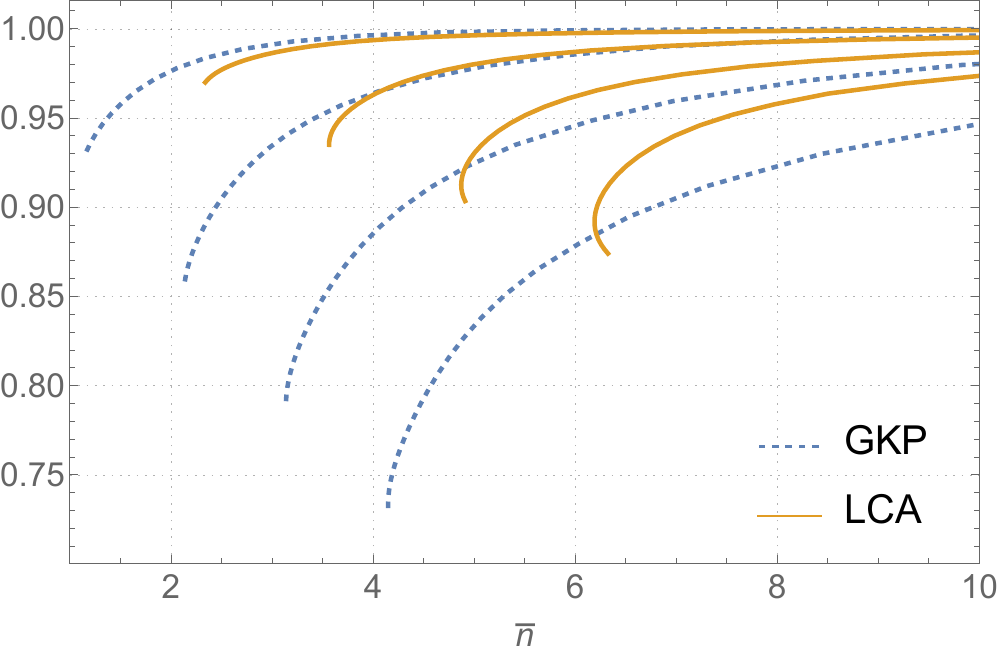}
    \caption{\small
    \vva{
    Entanglement fidelity versus energy~\eqref{eq:energy} under the transpose recovery of GKP and oscillator-qubit LCA codes against photon loss and qubit amplitude damping at the same noise rate \(\gamma = 0.05\) and various logical dimensions \(K\). 
    The four dotted curves correspond to fidelities of GKP codes of logical dimension \(K=3\), \(5\), \(7\), and \(9\), shown from left to right.
    Similarly, the four solid curves correspond to oscillator-qubit LCA codes with the same logical dimensions.
    The \(K=3,5\) encodings perform similarly, but LCA code fidelities at lower energy are higher than GKP fidelities for \(K=7,9\).
    To make this plot, we have implemented a cutoff of the oscillator occupation-number, and have numerically made an orthonormal basis out of the normalized LCA states. 
    We attach a \textsc{Mathematica} notebook with our numerical derivations to the arXiv submission of this manuscript.
    }
    }
    \label{fig_numerics}
\end{figure}

Despite no clear winner among the decoders discussed in Sec.~\ref{sec:qec}, LCA codes can perform similarly to, and sometimes outperform, GKP codes under physical noise channels.
We show this by evaluating \(c=2\) LCA code performance under the combination of photon loss on the mode and amplitude damping on the qubit.

We numerically evaluate the entanglement fidelity of LCA and GKP codes under the Petz recovery map \cite{petz1986sufficient}, which has an analytic expression \cite{zheng2024near} that allows for relatively efficient numerical evaluation.
This fidelity is at most within a factor of two away from the optimal entanglement fidelity~\cite{barnum2002reversing}, so while we cannot prove that LCA codes outperform GKP codes, a stronger performance under the Petz recovery is good evidence that this is the case.

For the noise channel, we consider the combination of the photon loss channel \cite{albert_performance_nodate} and qubit amplitude damping.
Since the latter channel can be obtained by restricting the former to the first two Fock states, we treat the physical qubit states as said Fock states and use the same noise parameter \(\gamma\) for both channels.

Figure~\ref{fig_numerics} plots the entanglement fidelity after Petz recovery for GKP and \((c=2)\)-LCA codes of logical dimension \(K\in \{3,5,7,9\}\) as a function of energy.
All curves increase monotonically with the energy, indicating that approximate LCA codes improve in performance as they approach ideal LCA codes.
The same effect was observed numerically for GKP codes~\cite{albert_performance_nodate}. 
We can also see that, as the logical dimension increases, LCA codes begin outperform GKP codes at low energy.
}

\section{Standard form of\\ simple LCA codes}
\label{sec:standard-form}

Simple \((c,d)\)-LCA codes can be analogously defined for multiple modes.
Such codes can be studied in a lattice formalism by collecting powers of each stabilizer generator into a lattice generator matrix \(T\). 

\subsection{Single-mode case}
We first build a lattice generator matrix out of the simple \((c,d)\)-LCA code from the previous section, with \(\theta = 0\) and nonzero \(K=d\in\mathbb{Z}_c\).
Each column of the matrix 
\begin{equation}
T=\begin{pmatrix}\sqrt{d/c} & 0\\
0 & \sqrt{d/c}\\
-d & 0\\
0 & 1
\end{pmatrix}\equiv\begin{pmatrix}T_{\cv}\\
T_{\dv}
\end{pmatrix}
\end{equation}
describes one of the two stabilizer generators of the single-mode single-qudit LCA code. The first two entries in each column, from top to bottom, are the amounts the generator displaces the mode's position and momentum, divided by \(\sqrt{2\pi}\).
The second two entries are the powers of \(\hat{X}\) and \(\hat{Z}\), in that order.

The second equality above expresses the matrix as a block matrix of two blocks, the \(2\times 2\) continuous-variable (\(\cv\)) block defining the mode's operators making up each generator, and the similarly sized discrete-variable (\(\dv\)) block defining the qudit's operators.

Encoding a set of commuting operators, the matrix \(T\) satisfies a symplectic condition 
\begin{equation}\label{eq:symplectic-condition-zerotheta}
T^{\intercal} JT=0~,\quad\quad\text{where}\quad\quad J = J_{\cv}\oplus J_{\dv},
\end{equation}
where \(T^{\intercal}\) is the transpose of \(T\), and where the hybrid symplectic form \(J\) is a block-diagonal matrix consisting of \(\cv\) and \(\dv\) pieces,
\begin{equation}
J_{\dv}=\begin{pmatrix}0 & 1/c\\
-1/c & 0
\end{pmatrix}\quad\text{and}\quad J_{\cv}=\begin{pmatrix}0 & 1\\
-1 & 0
\end{pmatrix}\,.
\end{equation}
Since the oscillator and qudit pieces do not commute independently, the \(\cv\) and \(\dv\) pieces do not individually satisfy their own symplectic conditions.
Instead, the two pieces contribute the same matrix \(Z\), but with opposite signs, 
\begin{subequations}
\begin{align}
T_{\dv}^{\intercal} J_{\dv}T_{\dv}&=+Z=\frac{d}{c}\begin{pmatrix}0 & -1\\
1 & 0
\end{pmatrix}\\T_{\cv}^{\intercal} J_{\cv}T_{\cv}&=-Z~.
\end{align}
\end{subequations}
The contributions of these two pieces cancel, thereby satisfying Eq.~\eqref{eq:symplectic-condition-zerotheta}.

\subsection{Multi-mode normal form}

The single-mode code yields a natural generalization to \(p\geq 1\) modes and qudits.
We define diagonal matrices,
\begin{subequations}
\begin{align}
\dia d&=\text{diag}(d_{1},d_{2},\cdots,d_{p})\\\dia c&=\text{diag}(c_{1},c_{2},\cdots,c_{p})\,,
\end{align}
\end{subequations}
listing the dimensions and code parameters for each \((c_j,d_j)\)-LCA qudit code.
From now on, \(\dia{x}\) denotes a diagonal matrix whose entries are \(x_j\) for any \(x_j > 0\).

The multi-mode standard form is
\begin{subequations}\label{eq:canonical-standard-form}
\begin{align}
    \ts&=\begin{pmatrix}\ts_{\cv}\\
\ts_{\dv}
\end{pmatrix}\\\ts_{\cv}&=\dia{1}_{2}\otimes\sqrt{\dia{d/c}}\\\ts_{\dv}&=(\dia{-d})\oplus \dia{1}_{p}~,
\end{align}
\end{subequations}
where \(\dia{1}_p\) is the \(p\)-dimensional identity.
The \(2p\)-dimensional symplectic forms are now
\begin{subequations}\label{eq:symplectic-forms}
\begin{align}
J_{\dv}&=\begin{pmatrix}0 & 1\\
-1 & 0
\end{pmatrix}\otimes\dia c^{-1}\\J_{\cv}&=\begin{pmatrix}0 & 1\\
-1 & 0
\end{pmatrix}\otimes \dia{1}_{p}\,.
\end{align}
\end{subequations}
with the full symplectic form still
\(J = J_{\cv} \oplus J_{\dv}\).

The degree of non-commutation for the \(\cv\) and \(\dv\) parts is the same magnitude but of opposite sign, so the two pieces cancel each other to yield a stabilizer code,
\begin{equation}
    \ts^{\intercal} JT=\ts_{\dv}^{\intercal} J_{\dv}\ts_{\dv} - \ts_{\cv}^{\intercal} J_{\cv}\ts_{\cv}=Z-Z=0~,
\end{equation}
where the matrix \(Z\) is
\begin{equation}
Z=\begin{pmatrix}0 & -1\\
1 & 0
\end{pmatrix}\otimes\dia{d/c}~.
\end{equation}

The logical dimension \(K\) is just the product of each code's dimensions \(d_j\). 
We can verify this using the general theory defined in the next section,
\begin{align}
K=|\text{Pf}\,\dia{d/c}|\det\dia c=\det\dia c\prod_{j=1}^{p}\frac{d_{j}}{c_{j}}=\prod_{j=1}^{p}d_{j}~.
\end{align}
Above, we use the formula for the Pfaffian of an anti-symmetric matrix.

The matrix \(Z\) encodes the degree of non-commutation of the \(\cv\) and \(\dv\) pieces of the code.
On the \(\dv\) side, it encodes an algebra of \(j\) pairs of Pauli matrices that commute up to the phase \(\exp(2\pi i d_j/c_j)\) and is an example of a \textit{commutation matrix}~\cite{englbrecht2022transformations,gunderman2023transforming,sarkar2024qudit} (cf. the frustration graph matrix~\cite{makuta2025frustration,mann2025graph}).

\subsection{Relation to commutation matrices}
\label{sec:smith}

We can obtain the LCA standard form~\eqref{eq:canonical-standard-form} by starting with a general commutation matrix of a Pauli algebra, determining the minimal qudit dimensions needed to realize that algebra, and appending an oscillator displacement operator to each Pauli to compensate their non-commutativity.

We consider a Pauli matrix set without specifying the dimension of the qudits that this set acts on.
Such a set is defined by its group commutators, which, for any two elements \(P_j, P_k\) of the set, are of the form \(P_i P_j P_i^{\dagger} P_j^{\dagger} = \exp(-i 2\pi v/u)\) for some integers \(v,u\).
We convert this information into a \(2p \times 2p\) anti-symmetric rational commutation matrix \(A\)~\cite{englbrecht2022transformations,gunderman2023transforming,sarkar2024qudit} whose \((j,k)\)th entry is \(v/u\), and whose \((k,j)\)th entry is \(-v/u\).

Any rational matrix \(A\) can be made into an integer matrix \(m A\) by multiplying  by \(m\) --- the lowest common denominator (lcd) of all denominators.
To determine the minimal dimensions \(c_j\) of the \(p\) qudits required to realize the Pauli algebra defined by \(A\), we first bring the commutation matrix \(mA\) into \textit{alternating Smith normal form} (a.k.a. skew-normal form, normal form with respect to congruence, or symplectic normal form)~\cite[Thm. 18]{kuperberg2001kasteleyn}
\cite{SageSymplecticBasis,sarkar2024qudit,conrad2024fabulous}
\cite[p. 598, Exercise XV.17]{lang_algebra}.
This can be done via an invertible integer matrix \(R\), which, by definition, has to have determinant \(\pm 1\) (i.e., is unimodular).
Such matrices form the group \(\mathrm{GL}(2p,\mathbb{Z})\), and one of them brings \(mA\) into normal form that satisfies
\begin{equation}\label{eq:normal_form}
R^{\intercal} AR=\begin{pmatrix}0 & -1\\
1 & 0
\end{pmatrix}\otimes\frac{\boldsymbol{h}}{m}\quad\quad\text{(alt.\@ Smith form)},
\end{equation}
where \(\dia{h}\) is a diagonal matrix of non-negative integers \(h_j\) that are ordered so that they divide each other, i.e., \(h_1|h_2|\cdots|h_p\).
In this section, we assume none of the \(h_j\)'s are zero for simplicity.

We can equate each diagonal element \(h_j/m\) to a ratio \(K_j/c_j\) with \(\gcd(K_j,c_j)=1\).
Moreover, we can split up \(K_j = c_j \theta_j+d_j\) into some integer multiple \(\theta_j\) of \(c_j\) and a nonzero remainder \(d_j \in\mathbb{Z}_{c_j}\).  
These are defined to match the notation of the LCA codes.
Altogether, we have
\begin{equation}
    \frac{\boldsymbol{h}}{m}=\dia{\theta}+\boldsymbol{d/c}~,
\end{equation}
where we have used the notation \(\dia{x} = (x_1,x_2,\cdots)\) for any bold symbol.

We can absorb the \(\left(\begin{smallmatrix}0 & -1\\
1 & 0
\end{smallmatrix}\right)\) matrix in Eq.~\eqref{eq:normal_form} and define
\begin{equation}
    R^{\intercal} AR=\varTheta-Z~,
\end{equation}
where the matrices
\begin{subequations}
\begin{align}
\varTheta&=\begin{pmatrix}0 & 1\\
-1 & 0
\end{pmatrix}\otimes\dia{\theta}\\Z&=\begin{pmatrix}0 & -1\\
1 & 0
\end{pmatrix}\otimes\dia{d/c}~.
\end{align}
\end{subequations}
Setting \(\varTheta = 0\) yields the same \(Z\) matrix as that arising from a tensor product of simple LCA codes in standard form~\eqref{eq:canonical-standard-form}.

In summary, the above procedure splits a general anti-symmetric matrix \(A\) into a piece \(Z\) that can be used to define a simple multi-mode LCA code, and another piece \(\varTheta\) that can toggle the code's logical dimension.
A slight modification of the oscillator encoding map accommodates this toggling,
\begin{subequations}
\begin{align}
    \ts_{\cv}&=\dia{1}_{2}\otimes\sqrt{\dia{\theta+d/c}}\\\ts_{\dv}&=(\dia{-d})\oplus \dia{1}_{p}~,
\end{align}
\end{subequations}
yielding a more general tensor-product LCA code.
This code's encoder satisfies
\begin{equation}\label{eq:encoder}
    T^{\intercal}JT=\varTheta-Z+Z = \varTheta~,
\end{equation}
the defining condition for an integral symplectic lattice.
The GKP case is recovered by setting all \(c_j=d_j=1\).

The partitioning \(K_j = c_j \theta_j+d_j\) into a multiple of \(c_j\) and a remainder is not necessary, and an alternative formulation can relax both \(\theta_j\) and \(d_j\) to be arbitrary integers (see Appx.~\ref{app:general-single-mode-integer-case} for such a treatment).
That way, a \(\theta_j\) can be absorbed into \(d_j\), implying that different pairs \(\theta_j,d_j\) can be used to construct the same LCA code.

These constructions extend further to those that consider \(\varTheta\) and \(Z\) as \textit{separate} degrees of freedom.
In other words, the two matrices cannot put into alternating Smith normal form using the same unimodular transformation.

\section{General LCA codes}\label{sec:multimode}

We borrow~\cite{rieffel,schwarz1998morita,connes1998noncommutative,rieffel1999morita,seiberg1999string,yoneya2000string,li2004strong,elliott2008strong} from a decidedly general construction of embeddings of the form~\eqref{eq:encoder} using a given anti-symmetric rational commutation matrix \(Z\) and \textit{arbitrary} \(\varTheta\) as long as \(\varTheta - Z\) is invertible.
We continue to keep \(\varTheta\) an integer matrix here, which specializes the construction to embeddings of integral symplectic lattices into physical oscillators and qudits.
Each \(Z\) defines the number and dimensions of the physical qudits and corresponds to a particular class of LCA codes, yielding infinite extensions of GKP codes and their underlying lattice theory \vva{in which the Gram matrices \(\varTheta\) need not commute with \(Z\).}

\subsection{From lattices to non-commutative tori}

The defining equation~\eqref{eq:encoder} for an LCA code, \(T^{\intercal}JT=\varTheta\), can be interpreted as an embedding of a lattice into a system of oscillators and qudits defined by the hybrid symplectic form \(J = J_{\cv}\oplus J_{\dv}\)~\eqref{eq:symplectic-forms}.
Columns of the generator matrix \(T\) define stabilizers that leave the lattice invariant, while the anti-symmetric integer Gram matrix \(\varTheta\)
defines how these stabilizers commute.

More precisely, two stabilizers, \(U_i\) and \(U_j\), commute as
\begin{equation}
U_{i}U_{j}=e^{-2\pi i\varTheta_{ij}}U_{j}U_{i}\,.
\end{equation}
This stabilizer algebra is an example of a \textit{non-commutative torus}.
In this case, the torus is commutative because \(\varTheta\) is integer, but rational matrices, such as commutation matrices (see Sec.~\ref{sec:smith}), or irrational matrices define non-commutative variants.

The stabilizer algebra is called a torus because it is in one-to-one correspondence with the unit cell of the lattice defined by the stabilizer generators.
For example, the unit cell of the square-lattice GKP code, whose phase-space lattice \(\mathscr{D}=\mathbb{Z}^2\), is the two-dimensional torus \(\R^2/\mathscr{D} = \mathbb{T}^2\), where \(\mathbb{T}\cong U(1)\) is the circle group.
Since ``tori'' and ``lattices'' are dual to each other in this way, we use the two words interchangeably to describe \(\varTheta\).

The coefficients of LCA logical displacements, \(\xl\) and \(\zl\)~\eqref{eq:logicals}, can also be collected into columns of their own generator matrix \(S\).
This matrix defines an embedding of the code's dual lattice and its associated dual torus \(\varTheta^{\perp}\), satisfying 
\begin{equation}\label{eq:dual-torus}
    S^{\intercal}JS=-\varTheta^{\perp}~.
\end{equation}
Here, \(\varTheta^{\perp}\) is rational since the logicals do not commute, making this dual torus non-commutative.

In the case of GKP codes~\cite{10.1103/physreva.64.012310}, the dual torus is related to the original as \(\varTheta^\perp = \varTheta^{-1}\).
The purpose of the non-commutative torus formalism is to determine the dual torus and the corresponding logical operators, defined by \(S\), for general LCA codes.

\subsection{Morita equivalence and \(\textrm{SO}(2p,2p~|~\Z)\)}

The torus \(\varTheta\) and its dual \(\varTheta^\perp\) are related by a general duality transformation --- Morita equivalence --- that is valued in a particular group.

Consider first the single-mode \((c,d)\)-LCA code from Sec.~\ref{sec:single-mode}.
The code's two-dimensional torus \(\varTheta=\theta\left(\begin{smallmatrix}0 & 1\\
-1 & 0
\end{smallmatrix}\right)\), is defined solely by the stabilizer commutator phase \(\theta\)~\eqref{eq:lattice-comm}.
The code's dual torus is determined by the commutator phase of the logical operators~\eqref{eq:logical-comm}.
This phase is obtained from the stabilizer commutator phase via the mapping 
\begin{equation}
\theta \to \frac{a\theta+b}{c\theta+d}~,
\end{equation}
where \((c,d)\) are the defining parameters of the code, and where \((a,b)\) are obtained from these parameters via the Bezout identity~\eqref{eq:bezout}, \(bc-ad=1\).
The four integers can be arranged into the matrix \(\left(\begin{smallmatrix}-a & -b\\
c & d
\end{smallmatrix}\right)\) whose determinant is one, revealing that the mapping between the tori is a special case of a Möbius transformation. 

The mapping between phases can be upgraded to a matrix-valued mapping directly between \(\varTheta\).
For an LCA code on \(p\) modes, define the \(4p\)-dimensional block matrix
\begin{equation}
    g=\begin{pmatrix}A & B\\
C & D
\end{pmatrix}~,
\end{equation}
which acts on the matrix defining the torus as
\begin{equation}\label{eq:morita}
    g\cdot\varTheta=(A\varTheta+B)(C\varTheta+D)^{-1},
\end{equation}
and whose \(2p\)-dimensional blocks \(A,B,C,D\) satisfy
\begin{subequations}
    \label{eq:son-conditions}
\begin{align}
    A^{\intercal}C+C^{\intercal}A=B^{\intercal}D+D^{\intercal}B&=0\\A^{\intercal}D+C^{\intercal}B&=\dia{1}_{2p}\,.
\end{align}
\end{subequations}

The conditions on the blocks imply that \(g\) lies in the group \(\textrm{SO}(2p,2p~|~\Z)\), the group of orientation-preserving (i.e., determinant-one) integer-valued rotations in \(4p\) dimensions that preserve the quadratic form $x_1 x_{2p+1}+x_2 x_{2p+2}+\cdots+x_{2p} x_{4p}$.
Any two tori related by such a \(g\) via Eq.~\eqref{eq:morita} can be shown to be \textit{Morita equivalent}~\cite{rieffel1999morita,li2004strong,li2004strong,elliott2008strong}.
This yields a classification of different tori into Morita-equivalent classes.
Reference~\cite{dereli2021bloch} discusses Morita equivalence in the case of GKP codes.

This classification is exceedingly coarse-grained.
Not only is a commutative torus \(\varTheta\) Morita equivalent to its non-commutative dual, \textit{all} rational tori are equivalent to each other~\cite[Corr. 4.2]{elliott2008strong} because \(\text{SO}(2p,2p~|~\Z)\) transformations allow one \sch{to add, rotate, and invert (whenever possible) by arbitrary integers.}
Nonetheless, one can work out the precise \(g\) necessary to obtain the dual torus of any given torus.
This yields logical operators for general LCA codes.

\subsection{General construction}

We state the LCA construction in the language of quantum coding theory and provide a parallel mathematical discussion in Appxs.~\ref{app:lca-intro}-\ref{app:general-lca}.

General LCA codes are defined as embeddings of integer anti-symmetric \(\varTheta\) into \(p\) modes and \(k \leq p\) qudits.
The number and local dimensions of the qudits are determined by a given commutation matrix \(Z\) via its alternating Smith normal form (see Sec.~\ref{sec:smith}). 
The number of qudits \(k < p\) whenever this  normal form is not full rank \sch{over the integers}.
Any stabilizer algebra defined by \(\varTheta\) is embeddable in the way defined by \(Z\) as long as \(\varTheta - Z\) is invertible.
We summarize this in the following theorem.

\begin{theorem}[Informal.]\label{thm:main_text}
Let $Z$ be a rational anti-symmetric matrix of dimension \(2p\). Then there exists a rotation $g \in \mathrm{SO}(2p, 2p \mid \mathbb{Z})$ 
such that, for any $\varTheta$ with $\varTheta-Z$ invertible, 
there exist generator matrices $T$ and \(S\) defining the stabilizers and logical operators, respectively, of a \(p\)-mode \((k\leq p)\)-qudit LCA code with torus \(\varTheta\)~\eqref{eq:encoder} and dual torus \(\varTheta^\perp = g\cdot\varTheta\)~\eqref{eq:dual-torus}.
\end{theorem}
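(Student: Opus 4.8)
The plan is to construct the generator matrices $T$ and $S$ explicitly from the alternating Smith normal form of $Z$, and to verify that the resulting LCA code has the advertised torus and dual torus. First I would invoke the Smith-normal-form machinery of Sec.~\ref{sec:smith}: since $Z$ is rational anti-symmetric, clear denominators by multiplying by the lowest common denominator $m$ and apply Thm.~18 of \cite{kuperberg2001kasteleyn} to obtain a unimodular $R \in \mathrm{GL}(2p,\Z)$ bringing $mZ$ into the block form $\left(\begin{smallmatrix}0 & -1\\ 1 & 0\end{smallmatrix}\right)\otimes \dia{h}$ with $h_1 | h_2 | \cdots | h_p$. Writing each $h_j/m = d_j/c_j$ in lowest terms (with $\gcd(d_j,c_j)=1$) extracts the local qudit dimensions $c_j$ and the parameters $d_j$; any vanishing $h_j$ contributes an oscillator with no attached qudit, which is exactly how $k < p$ arises. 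This step is purely the structure theory for integer skew matrices under congruence and carries over verbatim from the single-mode Bezout construction of Sec.~\ref{sec:single-mode}.

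Next I would build $T$ as in Eq.~\eqref{eq:encoder} using the extracted data, setting $\ts_{\cv} = \dia{1}_2 \otimes \sqrt{\dia{\theta + d/c}}$ and $\ts_{\dv} = (\dia{-d}) \oplus \dia{1}_p$, where the integer matrix $\dia{\theta}$ encodes the difference between $\varTheta$ and the ``simple'' commutator matrix. The defining symplectic identity $T^{\intercal} J T = \varTheta$ then follows by the same cancellation between the $\cv$ and $\dv$ contributions computed in Eq.~\eqref{eq:encoder}, namely $\varTheta - Z + Z = \varTheta$. The invertibility hypothesis on $\varTheta - Z$ is what guarantees that the logical lattice is nondegenerate so that a full set of logical operators exists; I would record where this hypothesis enters and confirm that it is exactly the condition for the dual torus to be well-defined.

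The heart of the argument is producing the rotation $g$ and the dual matrix $S$. I would upgrade the single-mode Mobius map $\theta \mapsto (a\theta+b)/(c\theta+d)$ --- built from the Bezout pair $(a,b)$ satisfying $bc - ad = 1$ --- to the matrix-valued fractional-linear transformation \eqref{eq:morita}, assembling the blocks $A,B,C,D$ block-diagonally from the per-mode data so that the conditions \eqref{eq:son-conditions} hold and hence $g \in \mathrm{SO}(2p,2p \mid \Z)$. The claim $\varTheta^\perp = g\cdot\varTheta$ would then be verified by direct computation, and the columns of $S$ read off from the logical displacements \eqref{eq:logicals}, checking $S^{\intercal} J S = -\varTheta^\perp$ as in \eqref{eq:dual-torus}. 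I expect the main obstacle to be establishing that a \emph{single} $g$, depending only on $Z$ and not on $\varTheta$, simultaneously implements the duality for every admissible $\varTheta$; this is the content of Morita equivalence and requires showing that the $\mathrm{SO}(2p,2p\mid\Z)$ element constructed from the Bezout data intertwines the two tori uniformly, appealing to the coarseness of the Morita classification \cite{elliott2008strong} and the fact that these rotations realize arbitrary integer add/multiply/divide operations on the rational torus.
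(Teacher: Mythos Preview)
Your construction of $T_{\cv}$ has a genuine gap. The formula $T_{\cv} = \dia{1}_2 \otimes \sqrt{\dia{\theta + d/c}}$ from Sec.~\ref{sec:standard-form} only works when $\varTheta$ and $Z$ are \emph{simultaneously} brought to alternating Smith normal form by the same unimodular $R$, i.e., when $\varTheta$ is of the special tensor-product form $R^{\intercal}\bigl(\left(\begin{smallmatrix}0&1\\-1&0\end{smallmatrix}\right)\otimes\dia{\theta}\bigr)R$. The theorem, however, is stated for \emph{arbitrary} integer anti-symmetric $\varTheta$ with $\varTheta - Z$ invertible, and the paper explicitly flags at the end of Sec.~\ref{sec:smith} that the general construction treats $\varTheta$ and $Z$ as independent degrees of freedom. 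The actual proof (Eqs.~\eqref{eq:smith-theta-z}--\eqref{eq:encoder-general} and Appx.~\ref{app:general-lca}) brings $\varTheta - Z$ to normal form via a \emph{separate} unimodular $Q$ and sets $T_{\cv} = \left(\begin{smallmatrix}0&\dia{\sqrt{t}}\\-\dia{\sqrt{t}}&0\end{smallmatrix}\right)Q$, while $T_{\dv}$ is built from $R$; the two transforms need not coincide. Your $T_{\cv}$ would fail the check $T_{\cv}^{\intercal}J_{\cv}T_{\cv} = \varTheta - Z$ for generic $\varTheta$.

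Your plan for $g$ is also too indirect. Assembling $g$ ``block-diagonally from the per-mode data'' again presupposes the aligned case; in general the blocks $A',B',C',D'$ carry factors of $R$ and $R^{-\intercal}$ (see the explicit formulas at the end of the proof of Thm.~\ref{thm:main_dual}). The paper does not invoke the coarseness of the Morita classification or any abstract existence argument to show that a single $g$ works for all $\varTheta$; it simply writes $g$ down in terms of $R$ and the Bezout data $\dia{a},\dia{b},\dia{c},\dia{d}$---all of which come from $Z$ alone---and verifies $g\cdot\varTheta = \varTheta^{\perp}$ by direct computation. The $\varTheta$-independence of $g$ is then manifest from the formulas, not a theorem to be proved.
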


To construct the generator matrices, we first bring the integer matrices \(mZ\) and \(m(\varTheta - Z)\) into alternating Smith normal form, where \(m\) is the lowest common denominator of any denominators in \(Z\) (we do not need to bring in \(\varTheta\) here because it is integer).
This yields
\begin{subequations}\label{eq:smith-theta-z}
\begin{align}
\varTheta-Z&=Q^{\intercal}\begin{pmatrix}0 & \dia t/m\\
-\dia t/m & 0
\end{pmatrix}Q\\Z&=R^{\intercal}\begin{pmatrix}0 & \dia h/m & 0\\
-\dia h/m & 0 & 0\\
0 & 0 & 0
\end{pmatrix}R~,
\end{align}
\end{subequations}
where the transformations \(Q,R \in \text{GL}(2p,\Z)\).
We are guaranteed that \(\varTheta - Z\) is invertible, but \(Z\) may have rank \(2k < 2p\). 
Hence, the diagonal matrix \(\dia t \equiv \text{diag}(t_1,t_2,\cdots)\) is \(p\)-dimensional while \(\dia h\) is \(k\)-dimensional.
This step yields the number of qudits \(k\) necessary to construct the LCA code defined by \(Z\).

Next, we re-express each element \(h_j/m\) in terms of the pair \((c_j,d_j)\) satisfying \(\gcd(c_j ,d_j)=1\) (see Sec.~\ref{sec:smith}), yielding \(\dia h/m=\dia{d/c}\).
The \(c_j\) give us the qudit dimensions, and \(Z\) can be converted into
\begin{equation}\label{eq:gcd-stuff}
    Z=R^{\intercal}\begin{pmatrix}0 & \dia{d/c} & 0\\
-\dia{d/c} & 0 & 0\\
0 & 0 & 0
\end{pmatrix}R~.
\end{equation}
In contrast to the case of simple LCA codes, here we do not assume that \(d < c\).

The stabilizer-generator and logical-Pauli matrices are of the form
\begin{equation}
    \ts=\begin{pmatrix}\ts_{\cv}\\
\ts_{\dv}
\end{pmatrix}\quad\quad\text{and}\quad\quad S=\begin{pmatrix}S_{\cv}\\
S_{\dv}
\end{pmatrix}\,.
\end{equation}
Their columns define the displacement coefficients of each generating operator in the order ``\(2\pi\)-position shift, \(2\pi\)-momentum shift, Pauli-\(\hat{X}\) power, Pauli-\(\hat{Z}\) power''.

The former has respective \(\cv\) and \(\dv\) pieces
\begin{subequations}\label{eq:encoder-general}
\begin{align}
   T_{\cv}&=\begin{pmatrix}0 & \dia{\sqrt{t}}\\
-\dia{\sqrt{t}} & 0
\end{pmatrix}Q\\T_{\dv}&=\begin{pmatrix}\dia d & 0 & 0\\
0 & \dia{1}_{k} & 0
\end{pmatrix}R~,
\end{align}
\end{subequations}
where the matrix with \(\dia d\) in it is \(2k\)-by-\(2p\) dimensional.
These can be checked to satisfy Eq.~\eqref{eq:encoder}, realizing the torus \(\varTheta\).

The latter has pieces
\begin{subequations}
\begin{align}
   S_{\cv}&=J_{\cv}T_{\cv}^{-\intercal}R^{\intercal}\left(\begin{array}{ccc}
\dia c^{-1} & 0 & 0\\
0 & \dia c^{-1} & 0\\
0 & 0 & -\dia{1}_{2(p-k)}
\end{array}\right)\\S_{\dv}&=\left(\begin{array}{ccc}
0 & -\dia{1}_{k} & 0\\
\dia a & 0 & 0
\end{array}\right)~,
\end{align}
\end{subequations}
where \(T^{-\intercal}=(T^{-1})^{\intercal}\), and where each \(a_j\) satisfies its own Bezout identity, \(a_{j}d_{j}+b_{j}c_{j}=1\).
A tedious calculation can be done to check that these satisfy Eq.~\eqref{eq:dual-torus} with \(\varTheta^\perp = g\cdot \varTheta\) for a particular \(g\).

\prg{Logical dimension}
Calculating the logical dimension \(K\) of LCA codes starts out in the same way as with GKP codes.
Namely, \(K^2\) is the ratio of the size of the torus \(\varTheta\) to its dual \(\varTheta^{\perp}\), i.e., the ratio of the unit-cell volumes of the stabilizer phase-space lattice \(\mathscr D\) to its dual the logical lattice \(\mathscr D^\perp\),
\begin{equation}
    K^{2}=\frac{|(\R^{2p} \times \Z_{\dia c}^2)/\mathscr{D}|}{|(\R^{2p} \times \Z_{\dia c}^2)/\mathscr{D}^{\perp}|}~.
\end{equation}
Here, we use the shorthand \(\Z_{\dia c} = \Z_{c_1}\times \Z_{c_2}\times\cdots\times\Z_{c_k}\).

An interesting simplification (see Appx.~\ref{app:logical-dim}) re-expresses this in terms of a Pfaffian to yield
\begin{equation}
K=\left|\text{Pf}(\varTheta-Z)\right|\det\dia c=\frac{\det\dia{ct}}{m^{p}}\,.
\end{equation}
It is not obvious this is an integer, but it is so because the product of the qudit dimensions is designed to cancel out any denominators in the Pfaffian.

\section{Logical Clifford gates}
\label{sec:logical-gates}

GKP codes admit logical Clifford gates via Gaussian transformations.
We anticipate that LCA codes similarly admit logical Clifford gates via Gaussian-Clifford transformations.
We define such transformations and prove this for several families of LCA codes.

\subsection{Gaussian-Clifford transformations}

Continuous and modular degrees of freedom cannot be mixed in a way that preserves the hybrid displacement-Pauli group (see Sec.~\ref{sec:symplectic-entanglement} and Appx.~\ref{app:single-mode-symplectic}).
The hybrid symplectic group on \(p\) modes and \(k\) qudits therefore splits into the Gaussian and hybrid Clifford groups,
\begin{equation}
   \text{Sp}(2p,\R)\times\text{Sp}(2k,\Z_{\dia c})\,.
\end{equation}
The latter factor is shorthand for the group of Pauli-group preserving operations on \(k\) qudits of dimensions \(c_1\) through \(c_k\).
This group includes the Clifford group of each qudit \(j\), \(\text{Sp}(2,\Z_{c_j})\), as well as any entangling gates between qudits.

A Gaussian-Clifford operation can be represented by the block matrix
\begin{equation}\label{eq:gaussian-clifford}
    V = V_{\cv} \oplus V_{\dv}~,
\end{equation}
where the first piece is the symplectic representation of a Gaussian gate, and the second piece is the integer-matrix representation of a hybrid Clifford gate,
\begin{subequations}
\begin{align}
V_{\cv}&\in\text{Sp}(2p,\R)\\V_{\dv}&\in\text{Sp}(2k,\Z_{\dia c})~.\label{eq:qudit-symplectic}
\end{align}
\end{subequations}
We define these to satisfy the following conditions with respect to their corresponding symplectic forms~\eqref{eq:symplectic-forms},
\begin{subequations}
\label{eq:clifford}
\begin{align}
V_{\cv}^{\intercal}J_{\cv}V_{\cv}&=J_{\cv}
\label{eq:clifford-cv}\\
V_{\dv}^{\intercal}J_{\dv}V_{\dv}&=J_{\dv}~.\label{eq:clifford-dv}
\end{align}
\end{subequations}
For fixed qudit dimension \(c_j = c\), the \(\dv\) symplectic condition~\eqref{eq:clifford-dv} is typically defined modulo \(c\), but we define it over the integers to handle cases with different qudit dimensions. 
As such, the actual qudit symplectic operation corresponding to \(V_{\dv}\) is one whose matrix elements are reduced modulo \(c_j\), with \(j\) depending on the qudit the matrix element is acting on.

The symplectic action of the \(\cv\) and \(\dv\) pieces of a given Gaussian-Clifford is realized by a unitary acting on the oscillator-qudit system.
The group of such unitaries is called the oscillator~\cite{KAIBLINGER2009233,howe2012non,fiber-bundle-ft}\cite[Thm. 128]{de2011symplectic} and qudit~\cite{burton2024genons,gurevich2012weil,KAIBLINGER2009233,10.1093/qmath/ham023,hayashi2017group} metaplectic group, \(\textrm{Mp}(2p,\R)\) and \(\textrm{Mp}(2k,\Z_{\dia c})\), respectively . 
Each group is a cover of its corresponding oscillator or qudit symplectic group, meaning that multiple metaplectic elements can reduce to the same symplectic element.
This reduction is done by ignoring any global phases and displacement/Pauli actions.
As such, when promoting a symplectic operation to a metaplectic one, one may need to apply a displacement/Pauli operator and multiply by a particular phase to achieve a well-defined action (see Appx.~\ref{app:gates}).

Prepending a Gaussian-Clifford operation to an LCA code yields two encoders, \(T\)~\eqref{eq:encoder-general} and \(VT\), of the same lattice \(\varTheta\) for any \(Z\).
Our general code construction can be explicitly augmented by Clifford transformations (see Appx.~\ref{app:general-lca-qudit-clifford}), and certain LCA codes with the same standard form can be related via such operations (see Appx.~\ref{app:state-prep}).
Here, we show that a particular subgroup of such transformations yields logical gates.

\subsection{Logical gates from lattice symmetries}

Each torus \(\varTheta\) comes with its own abstract symmetries (see~\cite{elpw2010structure, jeong2015finite,chakluef2019metaplectic,chaktracing2023tracing,chaksymmet2025symmetrized}), and we conjecture that such symmetries can be used to implement logical Clifford gates for general LCA codes. 
We review this result for the GKP case, extending it to single-mode \((c,d)\)-LCA codes and general LCA codes with \(\dia{d}=\dia{1}\). 

The symmetry group, or \textit{automorphism group}, of a \(2p\)-dimensional torus \(\varTheta\) is the set of all unimodular transformations that leave the torus alone,
\begin{equation}
\operatorname{Aut}(\varTheta)=\left\{ W\in\mathrm{GL}(2p,\mathbb{Z})\mid W^{\intercal}\varTheta W=\varTheta\right\}.
\end{equation}
These depend on the structure of \(\varTheta\), which is a general anti-symmetric integer matrix.
But they simplify to
\begin{equation}
    \operatorname{Aut}(\varTheta)=\operatorname{Sp}(2,\Z)\quad\quad\quad\text{(\ensuremath{p=1} case)}
\end{equation}
for the two-dimensional case because all \(W\)'s are integer, and because the torus is proportional to the oscillator symplectic form, \(\varTheta = \theta J_{\cv}\) for integer \(\theta\).

\prg{GKP case}
Embedding a two-dimensional lattice into an oscillator yields a GKP code.
It is known that each automorphism \(W\) can be implemented via some Gaussian operation \(V_{\cv}\), i.e.,
\begin{equation}
    T_{\cv}W=V_{\cv}T_{\cv}\quad\quad\leftrightarrow\quad\quad V_{\cv}=T_{\cv}WT_{\cv}^{-1}~.
\end{equation}
Here, we invert \(T_{\cv}\) to solve for \(V_{\cv}\), and it is indeed a Gaussian operation since it satisfies the symplectic condition:
\begin{subequations}\label{eq:clifford-derivation}
\begin{align}
    V_{\cv}^{\intercal}J_{\cv}V_{\cv}&=T_{\cv}^{-\intercal}W^{\intercal}T_{\cv}^{\intercal}J_{\cv}T_{\cv}WT_{\cv}^{-1}\\&=T_{\cv}^{-\intercal}W^{\intercal}\varTheta WT_{\cv}^{-1}\\&=T_{\cv}^{-\intercal}\varTheta T_{\cv}^{-1}\\&=J_{\cv}\,.
\end{align}
\end{subequations}
Acting with this transformation on the GKP code performs a logical Clifford operation~\cite{10.1103/physreva.64.012310}.

An example \(V_{\cv}\) is the Fourier transform, \(\hat p \to -\hat x\) and \(\hat x \to \hat p\), which acts on the GKP stabilizers [Eq.~\eqref{eq:stabilizers} for \(c=1\)] as
\begin{subequations}
\label{eq:transform}
\begin{equation}
    \st_{X} \to \st_{Z} \quad\quad\text{and}\quad\quad \st_{Z} \to \st_{X}^\dagger~.
\end{equation}
By observing its action on the logical operators [Eq.~\eqref{eq:logicals} for \(c=1\)], we see that it implements a logical Hadamard transformation, 
\begin{equation}
\xl\to\zl^{\dagger}\quad\quad\text{and}\quad\quad\zl\to\xl~,
\end{equation}
\end{subequations}
on the GKP code.
The Fourier transform is a rigid rotation on the oscillator's phase space (a.k.a. a passive linear optical transformation), so it does not amplify displacement noise and is thus fault-tolerant with respect to such noise~\cite{10.1103/physreva.64.012310,10.48550/arxiv.2407.03270,fiber-bundle-ft}.

\prg{\((c,d)\)-LCA case}
We can extend the above logic to the case of a general single-mode single-qudit LCA code, defined by parameters \((c,d)\) satisfying the GCD condition \(\gcd(c,d)=1\).
In this case, the stabilizers~\eqref{eq:stabilizers} come with qudit pieces, and any logical gate that permutes oscillator parts of different stabilizers via \(V_{\cv}\) has to be compensated accordingly on the qudit factor by a Clifford gate \(V_{\dv}\).

We can try to solve for the Clifford piece,
\begin{equation}
    T_{\dv}W=V_{\dv}T_{\dv}\quad\quad\leftrightarrow\quad\quad V_{\dv}=T_{\dv}WT_{\dv}^{-1}~.\label{eq:divisibility-clifford}
\end{equation}
This transformation satisfies the version of Eq.~\eqref{eq:clifford-derivation} with \(\cv\to\dv\) and \(\varTheta \to Z\) since \(Z\) is proportional to \(\varTheta\) for two-dimensional tori.
However, there is no guarantee that \(V_{\dv}\) is an integer matrix due to the presence of \(T_{\dv}^{-1}\).

To remedy this, we restrict to only those \(W\)'s which yield integer \(V_{\dv}\) gates.
That way, each \(V_{\dv}\) yields a valid \(c\)-dimensional qudit Clifford transformation whose symplectic matrix is the mod-\(c\) reduction of \(V_{\dv}\).
Equation~\eqref{eq:divisibility-clifford} implies that such realizable symmetries come from a particular \textit{congruence group}~\cite{ZerbesModularForms2022},
\begin{equation}
\Gamma_{0}(d)=\left\{ W=\begin{pmatrix}\star & \star\\
r & \star
\end{pmatrix}\in\operatorname{Sp}(2,\mathbb{Z})\;\middle|\;d\text{ divides } r\right\} .
\end{equation}
This is a proper subgroup of the automorphism group, but reducing this group modulo \(c\) still yields all possible qudit Clifford gates~\cite[Lemma 3.2.1]{ZerbesModularForms2022}.

We extend the GKP Fourier gate to the LCA case~\cite{walters2000chern}. 
We recall the other two code parameters \((a,b)\) that arise from the Bezout identity~\eqref{eq:bezout}, \(bc-ad=1\).
Performing the aforementioned Fourier transform on the oscillator while mapping the qudit Paulis as~\cite{gottesman1998fault}
\begin{equation}
    \hat{X}\to \hat{Z}^{a}\quad\quad\text{and}\quad\quad Z\to \hat{X}^{d}
\end{equation}
implements a logical Hadamard on the LCA code, satisfying Eqs.~\eqref{eq:transform} for general \((c,d)\).

\prg{\(\dia d = \dia 1\) multi-mode case}
In this case, we set \(\dia d\) to be the all-ones vector.
That way, the pseudoinverse \(T_{\dv}^{-1}\) is an integer matrix, and all symmetries \(W\) can be realized as qudit Clifford transformations represented by \(T_{\dv} W T_{\dv}^{-1}\)~\eqref{eq:divisibility-clifford}.
Such gates are available for any lattice since there are no additional assumptions on \(\varTheta\).

\section{Examples}

We discuss ways to concatenate LCA codes, mention logical-mode encodings, and provide further multi-mode multi-qudit examples using integer symplectic matrices and linear binary codes.

\prg{Hybrid concatenation}

Any qudit stabilizer code can be converted into an LCA code by encoding one or more of its physical qudits into a GKP qudit.
The non-Gaussian nature of the GKP encoder makes it impossible to encode information into this code using only Gaussian-Clifford operations.

One can also take any qudit stabilizer code and encode each of its physical \(d\)-dimensional qudits into a \((c,d)\)-LCA qudit.
This reduces to a qudit stabilizer code concatenated with a GKP code for \(c=1\).
Including an additional discrete degree of freedom for each physical system may lower code thresholds further than \eczoohref[concatenating with a GKP code]{gkp_concatenated}~\cite{fukui2017analog,fukui2018high,vuillot2019toricGKP,noh2020fault}.

\prg{Logical-mode encodings}

Minimial analogues of oscillator-into-oscillator GKP codes~\cite{noh2020encoding} can be constructed  on a two-mode and single-qudit system as follows.
Initialize mode 1 in an arbitrary state, initialize mode 2 and the qudit in an LCA state, and then apply a two-mode Gaussian transformation on the two modes.
This encoding extends the two-mode-squeezing code, which underlies the standard form of oscillator-into-oscillator GKP codes~\cite{wu2023optimal}, to the LCA case.
The use of the qudit may reduce the logical noise floor of the resulting logical-mode code against displacement noise.

\prg{Integer symplectic matrices}

Consider LCA codes with \(p=k\) and identical qudit dimension, \(c_j=c>1\).
Integer symplectic matrices, \(M\in\text{Sp}(2p,\Z)\), form a special class of operations that can be used \textit{simultaneously} for the \(\cv\) and \(\dv\) parts of a Gaussian-Clifford transformation \(V\)~\eqref{eq:gaussian-clifford} for such codes.

Integer symplectic matrices form the integer-valued matrix subgroup of the \(\cv\) symplectic matrices, \(\text{Sp}(2p,\Z)\subset \text{Sp}(2p,\R)\).
They also satisfy the \(\dv\) constraint~\eqref{eq:clifford-dv} by definition.
This automatically makes them valid symplectic transformations modulo \(c\), satisfying \((M \text{ mod }c)\in \text{Sp}(2p,\Z_c)\). 
In fact, the reverse also holds~\cite[Thm. VII.21]{newman1972integral}: every qudit Clifford transformation $M$ can be promoted to an integer symplectic transformation $N$ with $M = N(\bmod ~c)$.

Augmenting a given LCA code with \(V = M\oplus M\) yields a new code with encoders \(MT_{\cv}\) and \(MT_{\dv}\) for any qudit dimension \(c\).
Such matrices include those of block form
\begin{equation}
M=\begin{pmatrix}A & 0\\
0 & A^{-\intercal}
\end{pmatrix}\quad\text{or}\quad M=\begin{pmatrix}\dia 1_{p} & B\\
0 & \dia 1_{p}
\end{pmatrix}\,,
\end{equation}
where \(A\) is unimodular and \(B\) is symmetric.

Interesting integer symplectic matrices include the symplectic generator matrices of the \(E_8\) and Leech lattices~\cite[Appx. 2]{buser1994period}.
For example, 
\begin{equation}
M_{E_8} = \begin{pmatrix}
2 & \phantom{-}1 & \phantom{-}0 & \phantom{-}1 & \phantom{-}1 & \phantom{-}0 & \phantom{-}0 & \phantom{-}0 \\
1 & \phantom{-}2 & \phantom{-}1 & \phantom{-}0 & \phantom{-}0 & \phantom{-}1 & \phantom{-}0 & \phantom{-}0 \\
 0 & \phantom{-}1 & \phantom{-}2 & -1 & \phantom{-}0 & \phantom{-}0 & \phantom{-}1 & \phantom{-}0 \\
 1 & \phantom{-}0 & -1 & \phantom{-}2 & \phantom{-}0 & \phantom{-}0 & \phantom{-}0 & \phantom{-}1 \\
 1 & \phantom{-}0 & \phantom{-}0 & \phantom{-}0 & \phantom{-}2 & -1 & \phantom{-}0 & -1 \\
 0 & \phantom{-}1 & \phantom{-}0 & \phantom{-}0 & -1 & \phantom{-}2 & -1 & \phantom{-}0 \\
0 & \phantom{-}0 & \phantom{-}1 & \phantom{-}0 & \phantom{-}0 & -1 & \phantom{-}2 & \phantom{-}1 \\
 0 & \phantom{-}0 & \phantom{-}0 & \phantom{-}1 & -1 & \phantom{-}0 & \phantom{-}1 & \phantom{-}2
\end{pmatrix}~,
\end{equation}
represents both a Gaussian circuit and, when taken modulo \(c=2\), the qubit circuit from Fig.~\ref{fig3_circuit}.
These matrices are also unimodular and can therefore be used for the transformations \(Q,R\) in Eqs.~\eqref{eq:smith-theta-z} to yield other LCA codes.

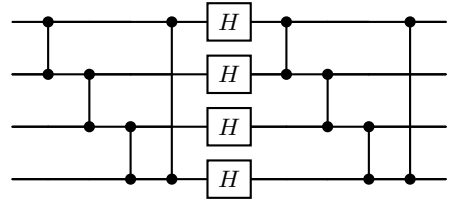
\begin{figure}
    \centering
\begin{quantikz}[row sep=0.2cm, column sep=0.4cm]
 & \ctrl{1} & \qw      & \qw      & \ctrl{3} & \gate{H} & \ctrl{1} & \qw      & \qw      & \ctrl{3} & \qw \\
 & \control{} \qw & \ctrl{1} & \qw      & \qw      & \gate{H} & \control{} \qw & \ctrl{1} & \qw      & \qw      & \qw \\
 & \qw      & \control{} \qw & \ctrl{1} & \qw      & \gate{H} & \qw      & \control{} \qw & \ctrl{1} & \qw      & \qw \\
 & \qw      & \qw      & \control{} \qw & \control{} \qw & \gate{H} & \qw      & \qw      & \control{} \qw & \control{} \qw & \qw
\end{quantikz}
    \caption{\small\label{fig3_circuit}
    Four-qubit circuit obtained from interpreting the \(E_8\) symplectic generator matrix modulo 2 as a binary symplectic matrix. 
    This circuit is invariant under cyclic permutations and maps \(\hat{\sigma}_{\mathsf{x}}^{j}\to \hat{\sigma}_{\mathsf{x}}^{j-1}\hat{\sigma}_{\mathsf{z}}^{j}\hat{\sigma}_{\mathsf{x}}^{j+1}\) and \(\hat{\sigma}_{\mathsf{z}}^{j}\to \hat{\sigma}_{\mathsf{z}}^{j-1}\hat{\sigma}_{\mathsf{x}}^{j}\hat{\sigma}_{\mathsf{z}}^{j+1}\), where superscripts are evaluated modulo 4.
    The two sets of CZ gates are examples of circuits creating cluster states~\cite{son2011quantum,son2012topological}.
    }
\end{figure}

\prg{Linear binary codes}

We construct an oscillator-qubit LCA code whose qubit encoder $T_{\dv}$ is a generator matrix of a linear binary code.

Let $G$ be a generator matrix of dimension $2k \times 2p$ of a linear binary code, where $k < p$~\cite{macwilliams1977theory}.
Assume that the rows of $G$ are linearly independent over $\mathbb{Z}_2$.
Hence, they remain linearly independent when we think of them as integer vectors.
We then pad the remaining \(2(p-k)\) rows with more linearly independent vectors to yield a unimodular matrix \(R\).

We fix the qudit dimensions to be equal, $c_j = 2$,
and set $\mathbf{d} = \dia{1}_k$. 
Plugging in our $R$ into Eq.~\eqref{eq:encoder-general},
we obtain the required $T_{\dv} = G$.
This corresponds to
\begin{equation}
Z = R^{\intercal} \begin{pmatrix}
0 & \dia{1}_k/2 & 0 \\
-\dia{1}_k/2 & 0 & 0 \\
0 & 0 & 0
\end{pmatrix} R.
\end{equation}

Such a \(Z\) is compatible with any \(\varTheta\) as long as $\varTheta - Z$ is invertible.
One simple choice of such a $\varTheta$ is:
\begin{equation}
\varTheta = R^{\intercal} \begin{pmatrix}
0 & L_1 & 0 & 0 \\
-L_1 & 0 & 0 & 0 \\
0 & 0 & 0 & L_2 \\
0 & 0 & -L_2 & 0
\end{pmatrix} R,
\end{equation}
where $L_1$ is the block corresponding to $\dia{1}_k/2$ in $Z$, and $L_1$, $L_2$ are diagonal unimodular matrices of dimension \(k\) and \(p-k\), respectively.

\section{Conclusion \& Future work}

We develop the theory of oscillator-qudit stabilizer states and error-correcting codes.
By ``stabilizer'', we mean that the states or codes are joint \(+1\)-eigenvalue eigenspaces of a commuting group of oscillator displacements and qudit Pauli strings.

Our framework is based on mathematical embeddings of integral symplectic lattices into locally compact Abelian (LCA) groups, which define the hybrid configuration space of oscillator-qudit systems.
As such, we refer to our constructions as \textit{LCA states} and \textit{LCA codes}.

LCA states are entangled with respect to the oscillator-qudit partition.
They are superpositions of tensor-product states, each of which consisting of a Gottesman-Kitaev-Preskill (GKP) quantum lattice state~\cite{gel1950expansion,zak1967finite,aharonov1969modular,10.1103/physreva.64.012310} and a qudit basis state.
They cannot be obtained from a tensor product of oscillator and qudit stabilizer states via Gaussian-Clifford transformations because such transformations are non-entangling.
This is in contrast to GKP and qudit stabilizer states, which can all be obtained from each other via Gaussian and Clifford operations, respectively.
Since an LCA state can be obtained from a qudit  stabilizer state by encoding a part of the state into a GKP code, the distinction between state classes highlights the usefulness of the GKP encoding map as a distinct resource (cf.~\cite{baragiola2019all}). 

\prg{LCA states}
LCA states are characterized by a rational anti-symmetric matrix \(Z\) that is called, in other contexts, a  commutation matrix~\cite{englbrecht2022transformations,gunderman2023transforming,sarkar2024qudit} (cf. the frustration graph matrix~\cite{makuta2025frustration,mann2025graph}).
The matrices \(\mp Z\) quantifies the non-commutativity among the parts of each LCA stabilizer generator that are supported on the oscillator and qudit factors, respectively.
Their opposite signs guarantee that the non-commutativity between the two parts cancels, ensuring that all stabilizers commute.

The shortest undetectable displacement of an oscillator-qudit LCA state is larger than that of the GKP state by a square-root of the physical qudit dimension.
As such, LCA states can be used to measure a larger set of non-commuting displacements than GKP states, making them useful for improving various metrological protocols~\cite{valahu2024quantum,duivenvoorden2017single,zhuang2020distributed,labarca2025quantum}.
\vva{
Taking the physical qudit dimension to infinity yields a phase space that is essentially classical. Since LCA states are oscillator-qudit entangled, this may be an interesting example of how a classical limit can arise from quantum entanglement~\cite{richens2017entanglement}.
}

\vva{
Generalizing GKP states, LCA states correspond to symplectic lattices in phase space and lie in eigenspaces of noncompact but \textit{discrete} stabilizer groups.
On the other hand, another class of oscillator stabilizer states --- analog/Gaussian stabilizer states~\cite{lloyd1998analog,braunstein1998error,barnes2004stabilizer,kwon2025most} --- correspond to hyperplanes and are stabilized by \textit{continuous} stabilizer groups, i.e., abelian Lie groups.
Studying stabilizer codes defined by continuous groups (cf.~\cite{jakobsen2016density}) is an interesting topic for future investigation, especially given the usefulness~\cite{zhang2006continuous,menicucci2006universal,gu2009quantum} and generically entangled nature~\cite{zhang2009quantum,kwon2025most} of their purely oscillator counterparts.
In particular, it could be useful to define hybrid CV-DV cluster states on oscillator-qudit systems.
}

The LCA framework combines real-space and binary/\(q\)-ary lattices into one hybrid structure.
Since both types of lattices yield computationally hard problems,
the LCA framework provides an intriguing playground for the development of lattice-based quantum algorithms and post-quantum encryption schemes~\cite{regev2009lattices,conrad2024good,kuperberg2025hidden}.
It will also be interesting to investigate connections of LCA states to Landau-Level physics~\cite{mumford1983tata,haldane1985periodic,10.1103/physreva.64.012310,dereli2021bloch,rymarz2021hardware,fan2023quantum}.

\prg{LCA codes}
We construct LCA codes using the theory of non-commutative tori~\cite{rieffel,schwarz1998morita,connes1998noncommutative,rieffel1999morita,seiberg1999string,yoneya2000string,li2004strong,elliott2008strong}, an important example in non-commutative geometry.
Integral symplectic lattices correspond to commutative tori, and our construction extends their embeddings into real space --- GKP codes --- to those into products of real space and discrete qudit factors.
The main reason for deploying the intricate mathematical theory is to determine the non-commutative torus of logical operators that is dual to the commutative torus defining an LCA code's stabilizers.

We develop two separate error-correcting strategies, one against pure displacement noise, and another against a smaller set of displacements and all qudit errors.
We also note that mixed strategies are possible, and it would be interesting to further develop them.

GKP codes can increase the error-correcting power of qubit encodings by concatenating each physical qubit into a GKP code~\cite{fukui2017analog,fukui2018high,vuillot2019toricGKP,noh2020fault}. 
GKP mode-into-mode encodings can even help with storing an infinite amount of quantum information~\cite{noh2020encoding,wu2023optimal}.
We anticipate that our various outlined LCA concatenation strategies should further boost  code performance and error-correcting thresholds.

\prg{Non-commutative geometry}
\vva{Our relation between stabilizer codes and non-commutative tori is likely to bear more fruit.}
The non-commutative torus framework is mathematically mature, but more work needs to be done to draw it closer to applications.
While it yields an exceedingly general class of codes, \vva{it is not clear whether this is the only way to construct hybrid codes,} or when two given general LCA codes are equivalent via symplectic operations.

Logical gates on LCA codes correspond to automorphisms of non-commutative tori~\cite{elpw2010structure, jeong2015finite, chakluef2019metaplectic, chaktracing2023tracing, chaksymmet2025symmetrized}, and studying their structure and degree of fault tolerance is of primary interest (cf.~\cite{10.1103/physreva.64.012310,10.48550/arxiv.2407.03270,fiber-bundle-ft} for the GKP case).
\vva{
In particular, fault-tolerant gates on GKP codes coupled to qudits have been studied in the contemporary Ref.~\cite{brenner2025trading}.
We believe this work is effectively studying finite-energy LCA codes since their initial states are GKP states, and since conditional oscillator-qudit displacement is part of their gate set.
It will be interesting to relate the two approaches in the future.
}

The generalized Möbius transformation for obtaining the logical operators may be related to similar transformations in other contexts~\cite[Eq.~(3)]{van2004graphical}\cite[Eq.~(2.24)]{menicucci2011graphical}\cite[Thm.~15]{leverrier2018p}.

General embeddings of non-commutative tori can be done into LCA groups that contain integer factors \(\Z\), which define linear rotors~\cite{albert2017general}.
But when the torus is commutative, the rotor factors decouple from the oscillator and qudit factors.
While the LCA construction does not extend to rotor-oscillator  and rotor-qudit codes, one can instead use rotor-oscillator entangling symplectic operations (generated by \(\hat x \otimes \hat L\) and \(\hat p \otimes \hat L\) for rotor momentum \(\hat L\)) and qudit-oscillator entangling symplectic operations (generated by \(\hat m \otimes \hat L\) and \(\hat \jmath \otimes \hat L\) in the notation of our paper) to make entangled states and codes.
We leave development of such stabilizer encodings as a topic for future investigation.

\vspace{.2cm}

\begin{acknowledgments}
V.V.A.\@ acknowledges 
Anthony J.\@ Brady,
\vva{Ansgar Burchards}, 
Jonathan Conrad,
Yvonne Y.\@ Gao,
Shawn Geller,
Yi-Kai Liu,
Carl Miller,
and
Mohammad Nobakht
for useful comments and discussions.
S.C.\@ acknowledges funding from the Anusandhan National Research Foundation (ANRF), Government of India, under the 
Prime Minister Early Career Research Grant (File no. ANRF/ECRG/2024/006557/PMS). 
V.V.A.\@ acknowledges NSF grants OMA2120757 (QLCI) and CCF2104489.
\vva{
Certain equipment, instruments, software, or materials are identified in this paper in order to specify the experimental procedure adequately. Such identification is not intended to imply recommendation or endorsement of any product or service by NIST, nor is it intended to imply that the materials or equipment identified are necessarily the best available for the purpose.
}
V.V.A.\@ thanks Ryhor Kandratsenia for providing daycare support throughout this work.
\end{acknowledgments}

\bibliography{references}

\newpage
\appendix
\onecolumngrid

\pagebreak
\section*{Appendices}

\section{LCA groups}\label{sec:displacement}

Let $M = \mathbb{R}^p \times \Z_{\dia c}$, where $p \in \mathbb{Z}_{\geq 0}$, and $\Z_{\dia c}$ is a finite abelian group of the form
\begin{equation}
\Z_{\dia c} = \mathbb{Z}_{c_1} \times \cdots \times \mathbb{Z}_{c_k}, \quad \text{with } c_1, \ldots, c_k \in \mathbb{N}.
\end{equation}
Put $n = 2p$. We now aim to define \textit{displacement operators} acting on the Hilbert space $L^2(M)$.

Let $\widehat{M}$ denote the Pontryagin dual of $M$, and define the group
\begin{subequations}
\begin{align}
G&=M\times\widehat{M}\\&=\mathbb{R}^{p}\times\mathbb{R}^{p}\times\Z_{\dia c}\times\Z_{\dia c}
\end{align}
\end{subequations}
where in the last two steps we use the fact that $\widehat{M}$ is isomorphic to $M.$ 
However, we will retain the hat notation for clarity.

Each component pair in this product has an associated symplectic form. For the continuous (oscillator) part, we define the standard symplectic form:
\begin{equation}
J_{\cv} = \begin{pmatrix}
0 & \dia{1}_p \\
- \dia{1}_p & 0
\end{pmatrix}.
\end{equation}
For the discrete (qudit) part, we introduce the diagonal matrix of inverse qudit dimensions and its corresponding symplectic form:
\begin{equation}
\dia{c} = \operatorname{diag}\left({c_1}, \ldots, {c_k}\right), \quad
J_{\dv } = \begin{pmatrix}
0 & \dia{c}^{-1} \\
-\dia{c}^{-1}  & 0
\end{pmatrix}.
\end{equation}
The combined symplectic form on $G$ is then given by
\begin{equation}
\label{eq:symplectic-form}
J = \begin{pmatrix}
J_{\cv} & 0 \\
0 & J_{\dv }
\end{pmatrix},
\end{equation}
which is a square matrix of dimension $n + 2k$, defining a 2-form on the space
\begin{equation}
\label{eq:h}
H^* := \mathbb{R}^p \times \mathbb{R}^{*p} \times \mathbb{R}^k \times \mathbb{R}^{*k},
\end{equation}
where $\mathbb{R}^*$ denotes the dual vector space of $\mathbb{R}$.

Next, define $J'$ to be the matrix obtained from $J$ by replacing all negative entries with zero. Then
\begin{equation}
J = J' - (J')^{\intercal}.
\end{equation}
For any two elements $x, y \in G$, define the bicharacter functions:
\begin{equation}
\beta(x, y) = e^{2\pi i \langle x, J' y\rangle}, \quad
\rho(x, y) = e^{2\pi i \langle x, J y\rangle}.
\end{equation}
Here, the inner product $\langle x, J y \rangle$ is defined via a natural covering map
\begin{equation}
\mathbb{R}^p \times \mathbb{R}^{*p} \times \mathbb{Z}^k \times \mathbb{Z}^{k} \to G,
\end{equation}
allowing us to view $x$ and $y$ as column vectors in $\mathbb{R}^{n+2k}$. While the expression $J' y$ depends on the choice of representative in the covering space, the values of $\beta(x, y)$ and $\rho(x, y)$ are independent of that choice.

\section{Displacement Operators}

Let $m = (m_{\cv}, m_{\dv}) \in M = \mathbb{R}^p \times \Z_{\dia c}$ and $s = (s_{\cv}, s_{\dv}) \in \widehat{M} = \widehat{\mathbb{R}^p} \times \widehat{\Z_{\dia c}}$. We will define the displacement operator $D(m, s)$ acting on $L^2(M)$ as the tensor product
\begin{equation}\label{eq:gen_displacement}
    D(m, s) = D(m_{\cv}, s_{\cv}) \otimes D(m_{\dv}, s_{\dv}),
\end{equation}
where each component acts on the corresponding subspace $L^2(\mathbb{R}^p)$ and $L^2(\Z_{\dia c})$, respectively.

To define $D(m_{\cv}, s_{\cv})$ on $L^2(\mathbb{R}^p)$, we first define the translation and modulation operators:
\begin{equation}
T_{m_{\cv}} \xi(t) = \xi(t - \sqrt{2\pi}m_{\cv}), \quad
M_{s_{\cv}} \xi(t) = e^{\sqrt{2\pi} i s_{\cv} \cdot t} \, \xi(t),
\end{equation}
for $\xi \in L^2(\mathbb{R}^p), t \in \mathbb{R}^p$. These correspond to the usual momentum and position operators:
\begin{equation}
T_{m_{\cv}}=e^{-\sqrt{2\pi}im_{\cv}\cdot\hat{\mathbf{p}}},\quad M_{s_{\cv}}=e^{\sqrt{2\pi}is_{\cv}\cdot\hat{\mathbf{x}}}.
\end{equation}
Here, \(\hat{\mathbf{x}} = (\hat{x}_1,\hat{x}_2,\cdots,\hat{x}_p)\) is a vector of position operators, and the same for the vector momentum operator \(\hat{\mathbf{p}}\).
The displacement operator on $\mathbb{R}^p$ is then defined as
\begin{equation}
D(x_{\cv})=D(m_{\cv}, s_{\cv}) := M_{s_{\cv}} T_{m_{\cv}},
\end{equation} $x_{\cv} = (m_{\cv}, s_{\cv})\in \R^p \times \widehat{\R^p}.$

These operators satisfy the following commutation relation:
\begin{equation}
\label{eq:ccr_semi_displacement_general}
D(m_{\cv}, s_{\cv}) \, D(l_{\cv}, t_{\cv}) =
e^{-2\pi i \langle x_{\cv}, J_{\cv}' y_{\cv} \rangle} \,
D(m_{\cv} + l_{\cv}, s_{\cv} + t_{\cv}),
\end{equation}
where  
\begin{equation}x_{\cv} = (m_{\cv}, s_{\cv}),~ y_{\cv} = (l_{\cv}, t_{\cv})\in \R^p \times \widehat{\R^p}, ~
J_{\cv}' = \begin{pmatrix}
0 & \dia{1}_p \\
0 & 0
\end{pmatrix}.
\end{equation}
Moreover, the standard Heisenberg commutation relation holds:
\begin{equation}\label{eq:ccr_displacement_R^n}
D(x_{\cv}) D(y_{\cv}) = e^{-2\pi i \langle x_{\cv}, J_{\cv} y_{\cv} \rangle} D(y_{\cv}) D(x_{\cv}),
\quad \text{for all } x_{\cv}, y_{\cv} \in \mathbb{R}^p \times \widehat{\mathbb{R}^p}.
\end{equation}

Now let us define $D(m_{\dv}, s_{\dv})$, which are the usual qudit Weyl displacement operators. Define the operators $T_{m_{\dv}}$ and $M_{s_{\dv}}$ on $L^2(\Z_{\dia c})$ by
\begin{equation}
T_{m_{\dv}} \, \xi(t) = \xi\left(m^{-1} t\right), \quad M_{s_{\dv}} \, \xi(t) = \langle t , s_{\dv} \rangle \, \xi(t),
\end{equation}
for $\xi \in L^2(\Z_{\dia c})$ and $t \in \Z_{\dia c}$. The pairing is given by
\begin{equation}
\langle t , s_{\dv} \rangle = e^{2\pi i \mathlarger{\sum_{i=1}^k} \frac{s_i t_i}{c_i}}, \quad \text{where } s = (s_i)_{i=1}^k, \; t = (t_i)_{i=1}^k.
\end{equation}

Then, for $x_{\dv} = (m_{\dv}, s_{\dv}) \in \Z_{\dia c} \times \widehat{\Z_{\dia c}}$, the displacement operator is defined as
\begin{equation}
D(m_{\dv}, s_{\dv}) := M_{s_{\dv}} T_{m_{\dv}}.
\end{equation}
These operators satisfy the discrete Heisenberg commutation relation:
\begin{equation}
\label{eq:ccr_displacement_qdit}
D(x_{\dv}) \, D(y_{\dv}) = e^{-2\pi i \langle x_{\dv}, J_{\dv} y_{\dv} \rangle} \, D(y_{\dv}) \, D(x_{\dv}).
\end{equation}

Finally, if we define the general displacement operators acting on $L^2(\mathbb{R}^p \times \Z_{\dia c})$ as
\begin{equation}
D(x) := D(m_{\cv}, s_{\cv}) \otimes D(m_{\dv}, s_{\dv}), \quad \text{for } x = (m, s) \in M \times \widehat{M},
\end{equation}
then the full Heisenberg commutation relation reads
\begin{equation}
\label{eq:ccr_displacement_general}
D(x) \, D(y) = e^{-2\pi i \langle x, J y \rangle} \, D(y) \, D(x), \quad \text{for all } x, y \in M \times \widehat{M}.
\end{equation}

\begin{remark}
  Note that in the case $M = \mathbb{R}^p$, Eq.~\eqref{eq:ccr_semi_displacement_general} does not exactly match the convention commonly used in the physics or mathematics literature (e.g.,~\cite[Equation 6]{Conrad2022gottesmankitaev}). To align with the standard convention, one needs to modify the definition of the displacement operator $D(m_{\cv}, s_{\cv})$ by including a phase factor $e^{-\pi i \langle s_{\cv}, m_{\cv} \rangle}$. This modification does not affect Eq.~\eqref{eq:ccr_displacement_R^n} or Eq.~\eqref{eq:ccr_displacement_general}. In fact, we incorporate this modification by introducing phases to the displacement operators later in Definition~\ref{def:stab_group}.

\end{remark}
 
\section{Lattices and associated LCA codes}\label{app:lca-intro}

We call a subgroup $H$ of $G$ co-compact if the quotient $G / H$ is compact. A subgroup $\mathscr{D}$ of $G$ that is both discrete and co-compact is called a lattice in $G$. Fix a bicharacter $\rho$ in $G$.  
Define the 
dual lattice for a lattice $\mathscr{D}$ as
\begin{equation}
\mathscr{D}^{\perp}=\{z \in G: \rho(z, y)=1, ~ \forall y \in \mathscr{D}\}.
\end{equation}
From now on, we only consider lattices inside $G=M \times \widehat{M},$  $M=\mathbb{R}^p  \times \Z_{\dia c}$, where $p \in \mathbb{Z}_{\geqq 0}$\footnote{\sch{$\mathbb{Z}_{\geqq 0}$ means all positive numbers including zero, i.e. $\N \cup \{0\}$.}}.  We have already defined a natural bicharacter $\rho$ on $G$.

Let $\mathscr{D}$ be a lattice inside $G$. Then the dual group for $G / \mathscr{D}$ will be discrete. This dual group can be identified~\cite[pg. 76]{rieffel} with $\mathscr{D}^{\perp}$. Thus $\mathscr{D}^{\perp}$ must be a discrete subgroup of $G$. In the same way, the dual of $G / \mathscr{D}^{\perp}$ can be identified with $\mathscr{D}$, which is discrete. It follows that $G / \mathscr{D}^{\perp}$ is compact. Thus $\mathscr{D}^{\perp}$ is also a lattice. 

We will now construct lattices inside $G$ using \textit{symplectic Gram matrices}. Let $n \geqq 2$ and $\mathscr{T}_n$ be the space of $n \times n$ real anti-symmetric matrices. We will call such a real anti-symmetric matrix a \textit{symplectic Gram matrix} also.

We follow~\cite[Def.~2.1]{li2004strong}, \cite[Def.~4.1]{rieffel} and related discussions therein for the following. Let $L = \R^n,$ and view $\Z^n$ as the standard lattice in $L^*.$ For a $\varTheta\in \mathscr{T}_n,$ view $\varTheta$ is in $\bigwedge^2(L).$ 
By an embedding map for $\varTheta \in \mathscr{T}_n$ we mean a linear map $T$ from $L^*$ to the dual space $H^*$~\eqref{eq:h} satisfying

\begin{enumerate}
   \item  $T\left(\mathbb{Z}^n\right) \subseteq \mathbb{R}^p \times \mathbb{R}^{* p}  \times \mathbb{Z}^k \times \mathbb{Z}^k$. Then we can think of $T\left(\mathbb{Z}^n\right)$ as in $G$ via composing $\left.T\right|_{\mathbb{Z}^n}$ with the natural covering map $\mathbb{R}^p \times \mathbb{R}^{* p}  \times \mathbb{Z}^k \times \mathbb{Z}^k \rightarrow G$.
   \item $T\left(\mathbb{Z}^n\right)$ is a lattice in $G$.
   \item The form $J$ on $H^*$ is pulled back by $T$ to the form $\varTheta$ on $L^*$, i.e. 
   \begin{equation}
   \label{eq:embedding-condition}
       T^{\intercal} J T= \varTheta~.
   \end{equation}
\end{enumerate}

Let $\mathscr{D}=T\left(\mathbb{Z}^n\right)$. Then by definition of $T$ we may think of $\mathscr{D}$ as a lattice in
$$G=\mathbb{R}^p \times \widehat{\mathbb{R}^{p}}  \times\left(\mathbb{Z}_{c_1} \times \cdots \times \mathbb{Z}_{c_k}\right) \times\left(\mathbb{Z}_{c_1} \times \cdots \times \mathbb{Z}_{c_k}\right).
$$ 

Given such a map $T$ viewed as $T\left(\mathbb{Z}^n\right) \subseteq \mathbb{R}^p \times \mathbb{R}^{* p}  \times \mathbb{Z}^k \times \mathbb{Z}^k,$ we define the phase map
\begin{equation}
    A(l):=\exp\left(2\pi i \langle-T(l),J^{\prime}T(l)/2\rangle\right)\quad\text{for}\quad l\in\mathbb{Z}^{n}\,.
\end{equation}

We are now ready to define LCA stabilizer group. For a given symplectic Gram matrix $\varTheta$, an embedding map $T$ for $\varTheta,$ and the standard generators $\{e_1, e_2, \cdots, e_n\}$ of $\mathbb{Z}^n,$ denote $T(e_i)$ by $\mathbf{e}_i$. If $\varTheta$ is an integer matrix, then using the condition (3) of the definition of an embedding map, from Equation~\eqref{eq:ccr_displacement_general} it readily follows that the displacement operators $\{D(\mathbf{e}_i)\}_{i=1}^n$ commute with each other inside the set of bounded operators on $L^2(M).$ This motivates the following definition.

\begin{definition}\label{def:stab_group}
    (LCA stabilizer group). Given an embedding map $T$ for an integer symplectic Gram matrix $\varTheta,$ the stabilizer group of an LCA code is given by a set of displacements (with phases)
\begin{equation}
\label{eq:stabgroup}
    \mathcal{S}:=\left\langle A(e_{1})D\left(\mathbf{e}_{1}\right),\ldots,A(e_{n})D\left(\mathbf{e}_{n}\right)\right\rangle \quad\quad\text{with}\quad\quad\mathbf{e}_{j}=T(e_{j})\,.
\end{equation}

\end{definition}

It will be convenient for us to denote $A(e_i)D\left(\mathbf{e}_i\right)$ by $U_i.$ In general, for $l\in \Z^n,$ denote $A(l)D\left(T(l)\right)$ by $U_l.$ 
The generators commute due to the following,
\begin{subequations}
    \begin{align}
     U_{i}U_{j}&=e^{-2\pi i\langle\mathbf{e}_{i},J\mathbf{e}_{j}\rangle}U_{j}U_{i}&\text{by Eq.~\eqref{eq:ccr_displacement_general}}\\&=e^{-2\pi i\langle T(e_{i}),JT(e_{j})\rangle}U_{j}U_{i}&\text{by Eq.~\eqref{eq:stabgroup}}\\&=e^{-2\pi i\langle e_{i},T^{\intercal}JT(e_{j})\rangle}U_{j}U_{i}&\\&=e^{-2\pi i\langle e_{i},\varTheta(e_{j})\rangle}U_{j}U_{i}&\text{by Eq.~\eqref{eq:embedding-condition}}\\&=e^{-2\pi i\varTheta_{ij}}U_{j}U_{i}&\\&=U_{j}U_{i}&\varTheta\text{\ is an integer matrix.}
    \end{align}
\end{subequations}

Given an embedding map $T$ associated to an integer symplectic Gram matrix $\varTheta=(\varTheta_{ij})$, the embdedding $\mathscr{D}=T(\Z^n)$ is a lattice in $G.$ We call such a lattice an \textit{LCA lattice}. 
The stabilizer code (i.e., the joint \(+1\)-eigenspace) corresponding to the given LCA stabilizer group is called the \textit{LCA code}.  

Before constructing a general class of embedding maps, and hence LCA codes, in the next section, we first note that the well-known GKP stabilizer groups (see~\cite{10.1103/physreva.64.012310,Conrad2022gottesmankitaev,10.1103/prxquantum.3.010335}) and their associated codes are examples of the general construction described above. These well-known examples include the scaled GKP groups which are recalled below.

\textbf{Scaled GKP groups}. Consider \( M = \mathbb{R}^p \). Recall that \( J = \begin{pmatrix} 0 & \dia{1}_p \\ -\dia{1}_p & 0 \end{pmatrix} \). A scaled GKP stabilizer group is obtained from the symplectic Gram matrix 
\begin{equation} 
\varTheta = \begin{pmatrix} 0 & \lambda \\ -\lambda & 0 \end{pmatrix} \otimes \dia{1}_p = \lambda J, \quad \text{where}~ \lambda \in \mathbb{N},
\end{equation} 
by taking an embedding map \( T \) such that \( T^{\intercal} J T = \varTheta \). Recall that a symplectic matrix is a matrix \( T \) such that \( T^{\intercal} J T = J \). The set of symplectic matrices is denoted by \( \text{Sp}(2n, \mathbb{R}) \). It is then clear that, if \( T_0 \) is a symplectic matrix, then we can set
$T = \sqrt{\lambda} T_0$ for any natural number \(\lambda\).
Such a matrix \( T \) satisfies the condition \( T^{\intercal} J T = \varTheta \).

A particularly important example of such scaled GKP stabilizer groups is the \textit{square lattice GKP code}, introduced in Ref.~\cite{10.1103/physreva.64.012310}, which is given by
\begin{equation} 
\varTheta_{\square} = \begin{pmatrix} 0 & 2 \\ -2 & 0 \end{pmatrix}, \quad T_{\square} = \sqrt{2} \dia{1}_2.
\end{equation} 
One could also consider
\begin{equation} 
\varTheta_{\hexago} = \begin{pmatrix} 0 & 2 \\ -2 & 0 \end{pmatrix}, \quad T_{\hexago} = \frac{1}{3^{1/4}} \begin{pmatrix} 2 & 1 \\ 0 & \sqrt{3} \end{pmatrix},
\end{equation} 
which is also known as the \textit{hexagonal GKP code}.

\section{The group $\mathrm{SO}(n, n \mid \mathbb{Z})$ and the corresponding lattices}
\label{app:general-lca}

Let $\mathrm{O}(n, n \mid \mathbb{R})$ be the group of linear transformations of the space $\mathbb{R}^{2 n}$ preserving the quadratic form $x_1 x_{n+1}+x_2 x_{n+2}+\cdots+x_n x_{2 n}$, and $\mathrm{SO}(n, n \mid \mathbb{Z})$ be the subgroup of $\mathrm{O}(n, n \mid \mathbb{R})$ consisting of matrices with integer entries and determinant 1. We can write an element $g$ of $\mathrm{O}(n, n \mid \mathbb{R})$ as
\begin{equation}
g=\left(\begin{array}{ll}
A & B \\
C & D
\end{array}\right),
\end{equation}
where $A, B, C, D$ are $n \times n$ matrices satisfying
\begin{equation}
A^{\intercal} C+C^{\intercal} A=0=B^{\intercal} D+D^{\intercal} B, \quad A^{\intercal} D+C^{\intercal} B=\dia{1}_n .
\end{equation}

We now define a (partial) action of $\mathrm{SO}(n, n \mid \mathbb{Z})$ on the space of symplectic Gram matrices $\mathscr{T}_n$ as
\begin{equation}
g \cdot \varTheta:=(A \varTheta+B)(C \varTheta+D)^{-1}
\end{equation}
whenever $C \varTheta+D$ is invertible\sch{over the real numbers}. It is known that for each $g \in \mathrm{SO}(n, n \mid \mathbb{Z}),$ this action is defined on a dense open subset of $\mathscr{T}_n$~\cite{rieffel1999morita,li2004strong}.

A particular \(g\) can be used to obtained the dual lattice \(-g\cdot \varTheta\) from a given LCA code defined by \(\varTheta\).
The columns of the Gram matrix \(S\) of that lattice can then be used to define logical Pauli operators.

\begin{theorem}[Theorem~\ref{thm:main_text}, detailed restatement] \label{thm:main_dual}
Let $Z$ be matrix in $\mathscr{T}_n$ such that its entries are all rational numbers. Then there exists a $g \in \mathrm{SO}(n, n \mid \mathbb{Z})$ and a finite abelian group $\Z_{\dia c},$ such that for any $\varTheta \in \mathscr{T}_n$ with $\varTheta-Z$ invertible there are embedding maps $T, S: L^* \rightarrow H^*,$ $H^*=\mathbb{R}^p \times \mathbb{R}^{* p}  \times \mathbb{R}^k \times \mathbb{R}^{* k}$ with $M=\mathbb{R}^p \times \Z_{\dia c},$ for the symplectic Gram matrices $\varTheta$ and $-g\cdot\varTheta$ respectively satisfying $S\left(\mathbb{Z}^n\right)=\left(T\left(\mathbb{Z}^n\right)\right)^{\perp}.$
\end{theorem}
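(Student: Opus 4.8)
The plan is to produce the two embedding maps explicitly from the alternating Smith normal forms of $Z$ and of $\varTheta-Z$, to verify the three defining conditions of an embedding map for each, and then to establish the lattice duality by a covolume count. The delicate point in the quantifier structure is that the finite group $\Z_{\dia c}$ and the rotation $g$ must be fixed from $Z$ \emph{before} $\varTheta$ is chosen, so I would first isolate everything that depends only on $Z$. Bringing the integer matrix $mZ$ (with $m$ the lowest common denominator of the entries of $Z$) into alternating Smith form by some $R\in\mathrm{GL}(n,\Z)$ reads off the rank $2k$, the invariant factors $h_j$, and hence the pairs $(c_j,d_j)$ via $h_j/m=d_j/c_j$ with $\gcd(c_j,d_j)=1$; this fixes $\Z_{\dia c}$ and, through the Bezout data $a_jd_j+b_jc_j=1$, the candidate $g$. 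Given a $\varTheta$ with $\varTheta-Z$ invertible, I would then bring $m(\varTheta-Z)$ into alternating Smith form by some $Q\in\mathrm{GL}(n,\Z)$, record the nonzero invariant factors $t_j$, and write down $T$ and $S$ through the explicit formulas in Eq.~\eqref{eq:encoder-general} and its companions for $S_{\cv},S_{\dv}$.

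Next I would check that $T$ is an embedding map for $\varTheta$. Condition (1) holds because $T_{\dv}$ is an integer matrix, and condition (2) holds because $T_{\cv}=\bigl(\begin{smallmatrix}0 & \dia{\sqrt t}\\ -\dia{\sqrt t} & 0\end{smallmatrix}\bigr)Q$ is invertible (invertibility of $\varTheta-Z$ guarantees every $t_j\neq 0$), so $T(\Z^n)$ is full-rank in the $\cv$ directions and is therefore a lattice in $G$. Condition (3), $T^{\intercal}JT=\varTheta$, follows from the block split $T^{\intercal}JT=T_{\cv}^{\intercal}J_{\cv}T_{\cv}+T_{\dv}^{\intercal}J_{\dv}T_{\dv}$ together with the two identities $T_{\cv}^{\intercal}J_{\cv}T_{\cv}=\varTheta-Z$ and $T_{\dv}^{\intercal}J_{\dv}T_{\dv}=Z$, each a direct computation from the Smith forms~\eqref{eq:smith-theta-z} and~\eqref{eq:gcd-stuff}. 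The analogous block computation produces $S^{\intercal}JS$; the work here is to massage it, using the Bezout relations, into $-g\cdot\varTheta=-(A\varTheta+B)(C\varTheta+D)^{-1}$ and to exhibit blocks $A,B,C,D$ --- assembled from $R$, the $d_j$, and the $a_j,b_j$ --- satisfying the $\mathrm{SO}(n,n\mid\Z)$ conditions~\eqref{eq:son-conditions}, thereby realizing Eq.~\eqref{eq:dual-torus}. I would confirm that this $g$ is integer, has determinant one, and is genuinely independent of $\varTheta$.

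Finally I would establish $S(\Z^n)=(T(\Z^n))^{\perp}$ in two steps. The inclusion $S(\Z^n)\subseteq\mathscr{D}^{\perp}$ amounts to $S^{\intercal}JT$ being an integer matrix, since $\rho(Se_i,Te_j)=e^{2\pi i (S^{\intercal}JT)_{ij}}$; a block computation collapses $S^{\intercal}JT$ to an integer matrix built from $R$ and the $b_j$. For equality I would match covolumes: using that the projection $\pi_{\cv}$ restricted to $\mathscr{D}$ is injective, one gets $|G/\mathscr{D}|=|\det T_{\cv}|(\det\dia c)^2$ and likewise $|G/S(\Z^n)|=|\det S_{\cv}|(\det\dia c)^2$, while $|G/\mathscr{D}^{\perp}|=|G/\mathscr{D}|/K^2$ with $K=\left|\text{Pf}(\varTheta-Z)\right|\det\dia c$. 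Substituting $\det T_{\cv}$, $\det S_{\cv}$, and $K$, all three collapse to $m^{p}/\det\dia t$, so $|G/S(\Z^n)|=|G/\mathscr{D}^{\perp}|$ and the inclusion is forced to be an equality.

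The main obstacle I anticipate is precisely this last equality. The naive criterion --- that $S^{\intercal}JT$ be unimodular --- fails: already in the single-mode case $S^{\intercal}JT=\bigl(\begin{smallmatrix}0 & 0\\ b & 0\end{smallmatrix}\bigr)$ is rank-deficient, a direct consequence of the torsion (qudit) part of $G$. One is therefore pushed into the covolume bookkeeping, where the factors of $c_j$ coming from the discrete symplectic form $J_{\dv}$ and from the order $(\det\dia c)^2$ of $\Z_{\dia c}\times\Z_{\dia c}$ must cancel exactly against the Pfaffian and the Smith invariants $t_j$. The secondary obstacle is pinning down the explicit $g$ and verifying $S^{\intercal}JS=-g\cdot\varTheta$ with $g\in\mathrm{SO}(n,n\mid\Z)$ --- the ``tedious calculation'' flagged in the main text --- which demands keeping the Bezout integers $a_j,b_j$ and the unimodular matrices $Q,R$ organized throughout.
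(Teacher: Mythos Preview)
Your construction of $T$, $S$, and $g$ from the alternating Smith normal forms of $mZ$ and $m(\varTheta-Z)$ together with the Bezout data is exactly the paper's route: the paper defers the full argument to \cite[Proposition~4.1]{li2004strong} and concentrates on recording those explicit formulas and leaving $g\cdot\varTheta=\varTheta^{\perp}$ to direct calculation. Your verification of the three embedding conditions for $T$ and of $T^{\intercal}JT=\varTheta$ via the block split $T_{\cv}^{\intercal}J_{\cv}T_{\cv}+T_{\dv}^{\intercal}J_{\dv}T_{\dv}=(\varTheta-Z)+Z$ is correct and matches the paper line for line.

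Where your plan goes beyond the paper is in attempting a self-contained proof of the lattice duality $S(\Z^n)=(T(\Z^n))^{\perp}$ by covolume matching, and here there is a circularity. You compute $|G/\mathscr{D}^{\perp}|=|G/\mathscr{D}|/K^2$ with $K=|\mathrm{Pf}(\varTheta-Z)|\det\dia c$, but in the paper that very formula for $K$ is derived in Appendix~\ref{app:logical-dim} by evaluating $|\det T_{\cv}|/|\det S_{\cv}|$ \emph{after} one already knows $\mathscr{D}^{\perp}=S(\Z^n)$. To break the circle, drop the appeal to $K$ and instead use the general Plancherel-type identity $|G/\mathscr{D}|\cdot|G/\mathscr{D}^{\perp}|=\text{const}(G,\rho)$, valid for any lattice in an LCA group equipped with a nondegenerate Heisenberg bicharacter; this constant depends only on $J$ and the Haar normalization and is computed once, independently of $\mathscr{D}$. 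Your determinant identity $|\det S_{\cv}|=|\det T_{\cv}|^{-1}(\det\dia c)^{-2}$ then gives $|G/\mathscr{D}|\cdot|G/S(\Z^n)|$ equal to that same constant, and combined with the inclusion $S(\Z^n)\subseteq\mathscr{D}^{\perp}$ (which you correctly reduce to integrality of $S^{\intercal}JT$) this forces equality. A minor bookkeeping point: your covolume formula $|G/\mathscr{D}|=|\det T_{\cv}|(\det\dia c)^2$ uses counting measure on the finite factor, whereas the paper's Appendix~\ref{app:logical-dim} quotes the Rieffel normalization $|G/\mathscr{D}|=|\det T_{\cv}|$; either works, but carry the choice consistently into the Plancherel constant.
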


\begin{proof}
We refer to the proof of Proposition 4.1 in~\cite{li2004strong}. However, we write down the explicit description of $T, S, g,$ and $g \cdot \varTheta.$
Since $Z$ is rational, there is some $m \in \mathbb{N}$ such that $m Z$ is integral. 
Thinking of $m Z$ as a bilinear alternating form on $\mathbb{Z}^n$, 
we can bring it to alternating Smith normal form.
Namely, we can find an $R \in \mathrm{GL}(n, \mathbb{Z})$, some integer $1 \leq k \leq p$ and unique positive integers $h_1, \ldots, h_k \geq 0$ satisfying \(h_1 | h_2 | \cdots | h_k\) such that
\begin{equation}
m Z=R^{\intercal}\left(\begin{array}{ccc}
0 & \dia{h} & 0 \\
-\dia{h} & 0 & 0 \\
0 & 0 & 0
\end{array}\right) R~,
\end{equation}
where $\dia{h} =\operatorname{diag}\left(h_1, \ldots, h_k\right)$. Let $d_j / c_j=h_j / m$ with $\textnormal{gcd}\left(d_j, c_j\right)=1$ and $c_j>0$ for each $1 \leq j \leq k$. Set $W=\mathbb{Z}_{c_1} \times \cdots \times \mathbb{Z}_{c_k}$.

Let $\varTheta \in \mathscr{T}_n$ with $\varTheta-Z$ is invertible.
So we can find a $T_{\cv} \in \mathrm{GL}(n, \mathbb{R})$ such that $T_{\cv}^{\intercal} J_{\cv} T_{\cv}=\varTheta-Z$ using the alternating Smith normal form of \(\varTheta-Z\). 
Let $\dia{d}=\operatorname{diag}\left(d_1, \ldots, d_k\right),$ $\dia{c}=\operatorname{diag}\left(c_1, \ldots, c_k\right),$ and let
\begin{equation}
T_{\dv }=\left(\begin{array}{ccc}
\dia{d} & 0 & 0 \\
0 & \dia{1}_k & 0
\end{array}\right)
R , \quad T=\binom{T_{\cv}}{T_{\dv }}
~.\end{equation}
 Then, $T_{\cv}, T_{\dv }$ and $T$ have sizes $n \times n, 2 k \times n$ and $(n+2 k) \times n$, respectively. 
One readily checks $T^{\intercal} J T=\varTheta$, where $J$ is defined in Eq.~\eqref{eq:symplectic-form}. Notice that as a linear map from $L^*=\mathbb{R}^{* n}$ to $H^*=\mathbb{R}^p \times \mathbb{R}^{* p}  \times \mathbb{R}^k \times \mathbb{R}^{* k}$, $T$ carries the lattice $\mathbb{Z}^n=\mathbb{Z}^{2 p}$ into $\mathbb{R}^p \times \mathbb{R}^{* p} \times \mathbb{Z}^k \times \mathbb{Z}^k$. Thus the conditions of an embedding map are satisfied and hence $T$ is an embedding map for $\varTheta$. 

For each $1 \leq j \leq k$ since $\textnormal{gcd}\left(d_j, c_j\right)=1$ we can find $a_j, b_j \in \mathbb{Z}$ such that $a_j d_j+b_j c_j=1$. Let
\begin{equation}
\dia{b}=\operatorname{diag}\left(b_1, \ldots, b_k\right), \quad \dia{a} =\operatorname{diag}\left(a_1, \ldots, a_k\right) .
\end{equation}
Now $S$ is given by
\begin{equation}
S=\left(\begin{array}{cc}
S_{\cv}  \\
\left(\begin{array}{ccc}
0 & -\dia{1}_k & 0 \\
\dia{a} & 0 & 0
\end{array}\right) 
\end{array}\right),
\end{equation}
where $S_{\cv}$ has size $n \times n$ and is given by:
\begin{equation}
S_{\cv}=J_{\cv}\left(T_{\cv}^{\intercal}\right)^{-1} R^{\intercal}\left(\begin{array}{ccc}\dia{c}^{-1} & 0 & 0 \\ 0 & \dia{c}^{-1} & 0 \\ 0 & 0 & -\dia{1}_{2 p-2 k}\end{array}\right).
\end{equation}

Then $S$ is an embedding map for
\begin{equation}
-\varTheta^{\perp}=S^{\intercal} J S
\end{equation}
where
\begin{equation}
\begin{aligned}
\varTheta^{\perp}= & \left(\begin{array}{ccc}
\dia{c}^{-1} & 0 & 0 \\
0 & \dia{c}^{-1} & 0 \\
0 & 0 & -\dia{1}_{2 p-2 k}
\end{array}\right) R (\varTheta-Z)^{-1} R^{\intercal}\left(\begin{array}{ccc}
\dia{c}^{-1} & 0 & 0 \\
0 & \dia{c}^{-1} & 0 \\
0 & 0 & -\dia{1}_{2 p-2 k}
\end{array}\right) 
 +\left(\begin{array}{ccc}
0 & -\dia{a} \dia{c}^{-1} & 0 \\
\dia{a} \dia{c}^{-1} & 0 & 0 \\
0 & 0 & 0
\end{array}\right).
\end{aligned}
\end{equation}

Now we have to find $g=\left(\begin{array}{ll}A^{\prime} & B^{\prime} \\ C^{\prime} & D^{\prime}\end{array}\right) \in \mathrm{SO}(n, n \mid \mathbb{Z})$ such that $\varTheta^{\perp}=g\cdot \varTheta$. We have
\begin{equation}
\begin{aligned}
& A^{\prime}=
\left(\begin{array}{ccc}
0 & -\dia{a} & 0 \\
\dia{a} & 0 & 0 \\
0 & 0 & 0
\end{array}\right)R^{-\intercal}, \quad B^{\prime}=
\left(\begin{array}{ccc}
\dia{b} & 0 & 0 \\
0 & \dia{b} & 0 \\
0 & 0 & -\dia{1}_{2 p-2 k}
\end{array}\right)R
\end{aligned}
\end{equation}

\begin{equation}
C^{\prime}=\left(\begin{array}{ccc}
\dia c & 0 & 0\\
0 & \dia c & 0\\
0 & 0 & -\dia 1_{2p-2k}
\end{array}\right)R^{-\intercal},\quad D^{\prime}=\left(\begin{array}{ccc}
0 & -\dia d & 0\\
\dia d & 0 & 0\\
0 & 0 & 0
\end{array}\right)R~.
\end{equation}
One can verify the above matrices satisfy \(g\cdot \varTheta = \varTheta^{\perp}\) by explicit calculation using the various relations between \(\dia{b}\), \(\dia{a}\), \(\dia{h}\), \(\dia{c}^{-1}\), and \(\dia{d}\). 
\end{proof}

Using the above theorem, we have constructed a large class of LCA codes. From now on, an LCA code will refer exclusively to the codes constructed above.

\subsection{Qudit symplectic transformations}
\label{app:general-lca-qudit-clifford}

The construction of codes in Theorem~\ref{thm:main_dual} can be generalized straightforwardly.  
Let \( L \in \mathrm{GL}(2k, \mathbb{Z}) \) be an integer symplectic matrix, meaning that it satisfies \( L^{\intercal}  J_{\dv} L = J_{\dv} \).  
Such a matrix can arise from symplectic transformations on the qudit side (see~\cite[Theorem 1]{Newman1964}), and any such matrix is automatically a modular symplectic matrix since satisfaction of the integer symplectic condition guarantees satisfaction of the modular one.
We then modify the definition of \( T \) in Theorem~\ref{thm:main_dual} as follows:
\begin{equation}
T^{L}_{\dv} = 
L \begin{pmatrix}
\dia{d} & 0 & 0 \\
0 & \dia{1}_k & 0
\end{pmatrix}, \qquad
R = L T_{\dv}, \qquad
T^{L} = 
\begin{pmatrix}
T_{\cv} \\
T^{L}_{\dv}
\end{pmatrix}.
\end{equation}
A direct calculation shows that \( (T^{L})^{\intercal}  J T^{L} = \varTheta \); hence \( T^{L} \) is an embedding of \( \varTheta \).

If we further assume
\begin{equation}
L \begin{pmatrix}
\dia{d} & 0 & 0 \\
0 & \dia{1}_k & 0
\end{pmatrix}
=
\begin{pmatrix}
\dia{d} & 0 & 0 \\
0 & \dia{1}_k & 0
\end{pmatrix} \tilde{L},
\end{equation}
for some \( \tilde{L} \in \mathrm{GL}(n, \mathbb{Z}) \), the quantities \( \varTheta^{\perp}, g \), and \( S \) can be obtained exactly as in Theorem~\ref{thm:main_dual} by absorbing \( \tilde{L} \) into \( R \); that is, replace \( R \) with \( \tilde{L}R \) in the arguments of Theorem~\ref{thm:main_dual}. Note that this assumption is not required to define the code; rather, it ensures a convenient description of the dual lattice as stated in Theorem~\ref{thm:main_dual}. The condition holds in many cases, for example, when \( \dia{d} = \dia{1}_k \).

\section{Logical dimension calculation}
\label{app:logical-dim}

\sch{Let us take $\Theta$ to be an integer matrix in Theorem~\ref{thm:main_dual}. Denote $\mathscr{D}=T(\Z^n)$ and $\mathscr{D}^\perp=S(\Z^n)$.} If $K$ denotes the dimension of the logical Hilbert space, then $K^2= |\mathscr{D}^\perp/\mathscr{D}|,$ since this is the number of distinct of logical operations.
We can now easily compute the above number using the formula $|\mathscr{D}^\perp/\mathscr{D}|=\frac{|G/\mathscr{D}|}{|G/\mathscr{D}^\perp|}$. But in the above situation, using Theorem 3.4 along with Proposition 4.3 of~\cite{rieffel}, $|G/\mathscr{D}|$ is $|\det(T_{\cv})|$  and $|G/\mathscr{D}^\perp|$ is  $|\det(S_{\cv})|.$ Hence $K^2= |\det(T_{\cv})|/|\det(S_{\cv})|.$

We now aim to justify why the numbers
\begin{equation}
K^2 = \frac{|\det(T_{\cv})|}{|\det(S_{\cv})|},
\end{equation}
and $K$ are integers. Recall that
\begin{equation}
S_{\cv} = J_{\cv}(T_{\cv}^{\intercal})^{-1} R^{\intercal} 
\begin{pmatrix}
\dia{c}^{-1} & 0 & 0 \\
0 & \dia{c}^{-1} & 0 \\
0 & 0 & -\dia{1}_{2p-2k}
\end{pmatrix},
\quad 
T_{\cv}^{\intercal} J_{\cv} T_{\cv} = \varTheta - Z,
\quad 
\dia{c}^{-1} = \mathrm{diag}\left( \frac{1}{c_1}, \ldots, \frac{1}{c_k} \right).
\end{equation}
Then,
\begin{equation}
\det(S_{\cv}) 
= \det\left( J_{\cv}(T_{\cv}^{\intercal})^{-1} R^{\intercal} 
\begin{pmatrix}
\dia{c}^{-1} & 0 & 0 \\
0 & \dia{c}^{-1} & 0 \\
0 & 0 & -\dia{1}_{2p - 2k}
\end{pmatrix} \right)
= \det(T_{\cv})^{-1} \det(R) \det(\dia{c}^{-1})^2.
\end{equation}
Hence,
\begin{equation}
\frac{\det(T_{\cv})}{\det(S_{\cv})} = \frac{\det(T_{\cv})^2}{\det(R) \det(\dia{c}^{-1})^2} 
= \frac{\det(\varTheta - Z)}{\det(R)\det(\dia{c}^{-1})^2}.
\end{equation}

Now consider the matrix
\begin{equation}
Z = R^{\intercal} 
\begin{pmatrix}
0 & \dia{d/c} & 0 \\
-\dia{d/c} & 0 & 0 \\
0 & 0 & 0
\end{pmatrix} R,
\quad 
\dia{d/c} = \mathrm{diag}\left( \frac{d_1}{c_1}, \ldots, \frac{d_k}{c_k} \right).
\end{equation}
We compute
\begin{equation}
\det(\varTheta - Z) 
= \det\left( \varTheta - R^{\intercal} 
\begin{pmatrix}
0 & \dia{d/c} & 0 \\
-\dia{d/c} & 0 & 0 \\
0 & 0 & 0
\end{pmatrix} R \right)
= \det\left( R^{-\intercal} \varTheta R^{-1} - 
\begin{pmatrix}
0 & \dia{d/c} & 0 \\
-\dia{d/c} & 0 & 0 \\
0 & 0 & 0
\end{pmatrix} \right) \cdot \det(R)^2.
\end{equation}

Let us denote
\begin{equation}
\widetilde{\varTheta} = R^{-\intercal} \varTheta R^{-1}, \quad 
\tilde{Z} = 
\begin{pmatrix}
0 & \dia{d/c} & 0 \\
-\dia{d/c} & 0 & 0 \\
0 & 0 & 0
\end{pmatrix}.
\end{equation}
Since $\det(R) = \pm 1$, and we are ultimately interested in the absolute value of $\det(T_{\cv}) / \det(S_{\cv})$, we may disregard $\det(R)$ and focus on the quantity
\begin{equation}
\frac{\det(\widetilde{\varTheta} - \tilde{Z})}{\det(\dia{c}^{-1})^2}.
\end{equation}
Noting that $\det(\dia{c}^{-1})^2 = \left( \frac{1}{c_1 c_2 \cdots c_k} \right)^2$, we have
\begin{equation}
\frac{\det(\widetilde{\varTheta} - \tilde{Z})}{\det(\dia{c}^{-1})^2}
= \left( c_1 c_2 \cdots c_k \cdot \mathrm{Pf}(\widetilde{\varTheta} - \tilde{Z}) \right)^2,
\end{equation}
where $\mathrm{Pf}(A)$ denotes the Pfaffian of the antisymmetric matrix $A$. Recall that the Pfaffian of a $2p \times 2p$ antisymmetric matrix $A = (a_{ij})$ is given by
\begin{equation}
\operatorname{Pf}(A) = \sum_{\pi \in S_{2p}} (-1)^{|\pi|} \prod_{s=1}^{p} a_{\pi(2s-1)\, \pi(2s)},
\end{equation}
where the sum runs over all permutations $\pi \in S_{2p}$ satisfying $\pi(2s-1) < \pi(2s)$ for all $1 \le s \le p$ and $\pi(1) < \pi(3) < \cdots < \pi(2p - 1)$; as usual, $|\pi|$ denotes the signature of $\pi$.

Let the entries of $\widetilde{\varTheta} - \tilde{Z}$ be $a_{ij}$. All $a_{ij}$ are integers except when $j = i + p$, in which case $a_{ij}$ is a rational number with denominator $c_i$. Consequently, each monomial appearing in $\operatorname{Pf}(\widetilde{\varTheta} - \tilde{Z})$ may have at most one factor with denominator $n_i$ for each $i$. Hence, every term in the Pfaffian has denominator of the form $c_1^{a_1} c_2^{a_2} \cdots c_k^{a_k}$, where each exponent $a_i \in \{0, 1\}$. Therefore, multiplying $\operatorname{Pf}(\widetilde{\varTheta} - \tilde{Z})$ by $c_1 c_2 \cdots c_k$ clears all denominators and yields an integer.

\section{Examples}

The construction above generalizes the\sch{usual construction of GKP-like codes, which are constructed as symplectic lattices inside $\R^{2p}.$} Here in this section, we want to sketch some special cases of the above construction that are themselves interesting.

\subsection{Lattices inside $\R^p \times \widehat{\R^p}$}\label{sec:rp}
This is the case of the usual GKP codes. In this case, $n = 2p$, and $Z=0.$ 
Let $\varTheta \in \mathcal{T}_n$ be such that $\det(\varTheta) \neq 0$. This means that $\varTheta -Z$ is invertible.  It follows from the proof of Theorem~\ref{thm:main_dual} that $g= \begin{pmatrix}
0 & \dia{1}_{2p} \\
\dia{1}_{2p} & 0
\end{pmatrix} \in \mathrm{SO}(n, n \mid \mathbb{Z}).
$ Hence
\begin{equation} 
\varTheta^{\perp} = g \cdot \varTheta = \varTheta^{-1},
\end{equation} 
and one obtains a GKP lattice within the group $\R^p \times \widehat{\R^p}$, assuming $\varTheta$ is an integer matrix. The logical dimension is given by the Pfaffian of the matrix $\varTheta$. We now describe how to obtain the logical Pauli $\hat{X}$ and $\hat{Z}$ operators. 

For any such integer Gram matrix $\varTheta$, there exists an invertible matrix $R \in \mathrm{GL}(2p, \mathbb{Z})$ such that 
\begin{equation} \label{eq:standard_R}
R^{\intercal} \left( \begin{pmatrix}
0 & 1 \\
-1 & 0
\end{pmatrix} \otimes \dia{d} \right) R = \varTheta,
\end{equation} 
where $\dia{d}$ is a diagonal matrix $\mathrm{diag}(d_1, d_2, \ldots, d_p)$ in alternating Smith normal form, satisfying $d_1 \mid d_2 \mid \cdots \mid d_p$ . 
Furthermore, the lattice is uniquely determined by $\dia{d}$; that is, if two diagonal matrices $\dia{d}_1$ and $\dia{d}_2$ both satisfy this condition, then $\dia{d}_1 = \dia{d}_2$. The matrix
$
\begin{pmatrix}
0 & 1 \\
-1 & 0
\end{pmatrix} \otimes \dia{d}
$ is called the \textit{standard form} of $\varTheta$ in the context of GKP codes.

Start with a Gram symplectic matrix in standard form:
$
\varTheta = \begin{pmatrix}
0 & 1 \\
-1 & 0
\end{pmatrix} \otimes \dia{d}.
$ Then the Gram matrix of the dual lattice is given by
$
\varTheta^{\perp} = \begin{pmatrix}
0 & 1 \\
-1 & 0
\end{pmatrix} \otimes \dia{d}^{-1}.
$

The corresponding lattice generator is given by $S = J (T^{\intercal})^{-1}$, where $T^{\intercal} J_{\cv} T = \varTheta$. Let us choose $T$ to be the standard lattice matrix $T_{\dia{d}}$ associated with $\dia{d}$, given by
\begin{equation} \label{osc_sta_T} 
T_{\dia{d}} = \dia{1}_2\otimes\dia{\sqrt{d}}.
\end{equation} 

Any other $T$ of lattice type $\dia{d}$ differs from $T_{\dia{d}}$ by a symplectic transformation; that is, $T = W T_{\dia{d}}$ where $W$ is symplectic. Define
$S_{\dia{d}} := J (T_{\dia{d}}^{\intercal})^{-1}, \quad S := J (T^{\intercal})^{-1}.$
Then we have:
$S = J (T^{\intercal})^{-1} = J ((W T_{\dia{d}})^{\intercal})^{-1} = J W^{-\intercal} T_{\dia{d}}^{-\intercal} = W S_{\dia{d}}.
$

Denote by $\mathbf{e}_i^{\perp}$ the element $S_{\dia{d}}(e_i)$ for $i = 1, 2, \ldots, 2p$, and define
\begin{equation} 
\overline{Z}_i = D\left(\mathbf{e}_i^{\perp}\right), \quad \overline{X}_i = D\left(\mathbf{e}_{i+p}^{\perp}\right),
\end{equation} 
for $i = 1, 2, \ldots, p$. Then, using Eq.~\eqref{eq:ccr_displacement_R^n}, we obtain
\begin{equation} 
\overline{X}_i \overline{Z}_i = \exp\left(2\pi i / d_i\right) \overline{Z}_i \overline{X}_i.
\end{equation} 
A simple calculation shows that $\overline{X}_i^{d_i}$ and $\overline{Z}_i^{d_i}$ belong to the stabilizer group. Hence, the operators $\overline{X}_i$ and $\overline{Z}_i$ for $i = 1, 2, \ldots, p$ define qudit systems.

Let us now consider a simple example: the square lattice GKP code. Take
$\varTheta = \begin{pmatrix}
0 & 2 \\
-2 & 0
\end{pmatrix}.$
We may choose
\begin{equation} 
T = \begin{pmatrix}
\sqrt{2} & 0 \\
0 & \sqrt{2}
\end{pmatrix}, \quad S = \begin{pmatrix}
0 & \frac{1}{\sqrt{2}} \\
-\frac{1}{\sqrt{2}} & 0
\end{pmatrix}.
\end{equation} 
This yields logicals $\overline{Z} = D(S(e_1))$ and $\overline{X} = D(S(e_2))$.

\subsection{Lattices inside $(\mathbb{R} \times \mathbb{Z}_c) \times \widehat{(\mathbb{R} \times \mathbb{Z}_c)}$}
\label{app:general-single-mode-integer-case}

In this section, we classify all the LCA codes for $n = 2$.
We know that the entries of the antisymmetric matrix $Z$ are rational. Hence, for $n = 2$, the matrix $Z$ is determined by a single rational number, which we denote by $-d/c =: z$, where $c, d \in \mathbb{Z}$, $c > 0$, and $\textnormal{gcd}(c,d) = 1$. 
Therefore, 
\begin{equation}
Z = \begin{pmatrix} 0 & z \\ -z & 0 \end{pmatrix}.
\end{equation}

Let \(\varTheta = \begin{pmatrix} 0 & \theta \\ -\theta & 0 \end{pmatrix}\) be an integer matrix with $\theta \neq z$ so that $\varTheta-Z$ is invertible.
We are free to pick the oscillator part of the encoding, \(T_{\cv}\), as long as
\begin{equation}
T_{\cv}^{\intercal} J_{\cv}T_{\cv}=\varTheta-Z=\left(\begin{array}{cc}
0 & \theta+\frac{d}{c}\\
-\left(\theta+\frac{d}{c}\right) & 0
\end{array}\right)~.
\end{equation}
Letting \(R = \left(\begin{array}{cc} 1 & 0 \\ 0 & 1 \end{array}\right)\) in the proof of Theorem~\ref{thm:main_dual}, we have \(T_{\dv } = \left(\begin{array}{cc} -d & 0 \\ 0 & 1 \end{array}\right)\) which yields the full encoding map \(T = \left(\begin{array}{cc} T_{\cv} \\ T_{\dv } \end{array}\right)\). Then the code is generated by the stabilizers $D(T(e_1))$ and $D(T(e_2))$ (with phases) which act on the Hilbert space $L^2(\R \times \Z_c).$  A direct computation shows that the dimension of the code is given by $|c\theta + d|$. For a general $R,$ which is necessary in $\mathrm{Sp}(2, \mathbb{Z}),$ the map \(T_{\dv } = \left(\begin{array}{cc} -d & 0 \\ 0 & 1 \end{array}\right)R.\) In this part, we will assume \(R = \left(\begin{array}{cc} 1 & 0 \\ 0 & 1 \end{array}\right)\). The general case is similar. 

In the case $\theta+\frac{d}{c}>0,$ we call \(T_{\cv} = \sqrt{\theta+\frac{d}{c}}\dia{1}_{2}\), $R=\dia{1}_{2}$ the standard lattice matrices. If $\theta+\frac{d}{c}<0,$ then by conjugating $\varTheta-Z$ by \(R=\left(\begin{array}{cc} 0 & 1 \\ 1 & 0 \end{array}\right)\) as in Eq.~\eqref{eq:standard_R}, we can assume that $\theta+\frac{d}{c}>0.$

We will first show that for a fixed $\theta$, different values of $z$ lead to distinct codes.

When $c = 1$, a quick computation shows that the code corresponds to the usual 2-dimensional GKP code (see previous subsection) with logical dimension $|\theta + d|$. Hence, all such codes are distinct. When $c \neq 1$, Theorem~\ref{thm:main_dual} yields codes inside the Hilbert space $L^2(\mathbb{R} \times \mathbb{Z}_c)$. If two rational numbers $-d_1/c_1$ and $-d_2/c_2$ yield the same code, then $c_1 = c_2$. Comparing the logical dimensions, we get $d_1 = d_2$. Hence, we conclude that each rational number $-d/c$ gives a different code for a fixed $\theta$.

We now determine the logical operators of the code corresponding to a rational number $z = -d/c$. To do this, we compute the map $S$ and the operators $V_1 = D(S(e_1))$ and $V_2 = D(S(e_2))$. Since $-d$ and $c$ are coprime, there exist integers $a$ and $b$ such that $-ad + bc = 1$. From Theorem~\ref{thm:main_dual}, we compute $g$ and $S$ for $z = -d/c$ ($d \neq 0$) as:
\begin{equation}\label{eq:g'forz}
    g := \begin{pmatrix} A & B \\ C & D \end{pmatrix} \in \mathrm{SO}(2, 2 \mid \mathbb{Z}),~\text{for}~ A = \begin{pmatrix} 0 & -a \\ a & 0 \end{pmatrix}, \quad
B = \begin{pmatrix} b & 0 \\ 0 & b \end{pmatrix}, \quad
C = \begin{pmatrix} c & 0 \\ 0 & c \end{pmatrix}, \quad
D = \begin{pmatrix} 0 & d \\ -d & 0 \end{pmatrix}~,
\end{equation}
and with
\begin{equation}
S = \begin{pmatrix}
S_{\cv} \\
\begin{pmatrix} 0 & -1 \\ a & 0 \end{pmatrix}
\end{pmatrix},
\end{equation}
where $S_{\cv}$ is a $2 \times 2$ matrix given by
\begin{equation}
S_{\cv} = J_{\cv} (T_{\cv}^\intercal)^{-1} \begin{pmatrix} 1/c & 0 \\ 0 & 1/c \end{pmatrix}.
\end{equation}
Then $S$ is an embedding map such that
\begin{equation}
-\varTheta^{\perp} = S^{\intercal} J S = \begin{pmatrix} 0 & \frac{a\theta + b}{c\theta + d} \\ -\frac{a\theta + b}{c\theta + d} & 0 \end{pmatrix}.
\end{equation}

It follows that the operators $V_1 = D(S(e_1))$ and $V_2 = D(S(e_2))$ satisfy the commutation relation:
\begin{equation}
V_1 V_2 = e^{-2\pi i \frac{a\theta + b}{c\theta + d}} V_2 V_1.
\end{equation}
Now, since the logical dimension is $|c\theta + d|$, if $a\theta + b = 1$, then we can take
\begin{equation}
\overline{Z} = V_1 = D(S(e_1)), \quad \overline{X} = V_2 = D(S(e_2)).
\end{equation}
If $a\theta + b \neq 1$, we first show that $a\theta + b$ and $c\theta + d$ are coprime. Define:
\begin{equation}
x := a\theta + b, \quad y := c\theta + d.
\end{equation}
Suppose a prime $p$ divides both $x$ and $y$:
\begin{equation}
p \mid x, \quad p \mid y.
\end{equation}
Then:
\begin{equation}
a\theta \equiv -b \mod p, \quad c\theta \equiv -d \mod p.
\end{equation}
Multiply the first congruence by $d$ and the second by $b$:
\begin{equation}
ad\theta \equiv -bd \mod p, \quad bc\theta \equiv -bd \mod p.
\end{equation}
Subtracting yields:
\begin{equation}
(ad - bc)\theta \equiv 0 \mod p.
\end{equation}
Since $ad - bc = -1$, we obtain $\theta \equiv 0 \mod p$. Substituting back,
\begin{equation}
x \equiv b \mod p, \quad y \equiv d \mod p.
\end{equation}
Thus, $p \mid b$ and $p \mid d$, and so $p \mid ad - bc = -1$, which implies $p = 1$ --- a contradiction. Hence $\gcd(x, y) = 1$.

Therefore, there exists an integer $k$ such that $k(a\theta + b) \equiv 1 \mod (c\theta + d)$. In this case, $(V_1)^k$ and $V_2$ satisfy the correct commutation relation. Hence, we can take
\begin{equation}
\overline{Z} = (V_1)^k = (D(S(e_1)))^k, \quad \overline{X} = V_2 = D(S(e_2)).
\end{equation}

\begin{remark}
    Note that in the above construction $g$ is determined by the matrix $\mathfrak{g}\in \begin{pmatrix} -a & -b \\ c & d \end{pmatrix} \in \rm{Sp}(2,\Z).$ 
\end{remark}

\section{Clifford Gates of LCA codes} 
\label{app:gates}

In this section, we will discuss the Clifford gates for the LCA codes in the cases \( M = \mathbb{R}^p \) and \( M = \mathbb{R} \times \mathbb{Z}_c \). These Clifford gates arise from certain automorphisms of noncommutative tori. The reader is referred to~\cite{elpw2010structure, jeong2015finite, chakluef2019metaplectic, chaktracing2023tracing, chaksymmet2025symmetrized} for a comprehensive study of these automorphisms in the setting of noncommutative tori.

\subsection{The Case \( M = \mathbb{R}^p \)}

Recall from Subsection~\ref{sec:rp} that, in the case when \( M = \mathbb{R}^p \), we have \( g = \begin{pmatrix}
0 & \dia{1}_{2p} \\
\dia{1}_{2p} & 0
\end{pmatrix} \), and the embedding map \( T = T_{\cv} \) is invertible. For an integer symplectic Gram matrix \( \varTheta \) with \( \text{det}(\varTheta) \neq 0 \), we define the following group:
\begin{equation} 
\operatorname{Aut}(\varTheta) = \left\{ W_{\varTheta} \in \mathrm{GL}(n, \mathbb{Z}) \mid W_{\varTheta}^{\intercal} \varTheta W_{\varTheta} = \varTheta \right\}
\end{equation} 

\begin{lemma}
If \( W_{\varTheta} \in \operatorname{Aut}(\varTheta) \), then \( W = T W_{\varTheta} T^{-1} \) is a symplectic matrix inside \( \operatorname{Sp}(2p, \mathbb{R}) \).
\end{lemma}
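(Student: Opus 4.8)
The plan is to verify the symplectic condition $W^{\intercal} J_{\cv} W = J_{\cv}$ by direct substitution of the definition $W = T W_{\varTheta} T^{-1}$, using only the two facts available in this case: the embedding identity $T^{\intercal} J_{\cv} T = \varTheta$ (condition (3) of an embedding map, specialized to $M = \mathbb{R}^p$, where the full form $J$ coincides with $J_{\cv}$), together with the defining automorphism relation $W_{\varTheta}^{\intercal} \varTheta W_{\varTheta} = \varTheta$. This is the exact analogue of the GKP computation already carried out in Eq.~\eqref{eq:clifford-derivation}, and the argument will run identically.

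First I would record two preliminary observations. Since $J_{\cv}$ is invertible and $\det \varTheta \neq 0$, the identity $\varTheta = T^{\intercal} J_{\cv} T$ forces $\det T \neq 0$, so that $T$ is a genuine invertible $n \times n$ matrix ($n = 2p$) and the conjugation $T W_{\varTheta} T^{-1}$ is well defined; here I write $T^{-\intercal} := (T^{\intercal})^{-1} = (T^{-1})^{\intercal}$. Moreover, because $T$ and $T^{-1}$ are real and $W_{\varTheta}$ is an integer matrix, the product $W = T W_{\varTheta} T^{-1}$ is automatically a real $2p \times 2p$ matrix. Hence $W$ lands in $\operatorname{Sp}(2p, \mathbb{R})$ rather than $\operatorname{Sp}(2p, \mathbb{Z})$, and the only remaining task is to check that it preserves the form $J_{\cv}$.

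The computation itself is then one line of algebra:
\begin{align}
W^{\intercal} J_{\cv} W
&= T^{-\intercal} W_{\varTheta}^{\intercal} T^{\intercal} J_{\cv} T W_{\varTheta} T^{-1} \\
&= T^{-\intercal} W_{\varTheta}^{\intercal} \varTheta W_{\varTheta} T^{-1} \\
&= T^{-\intercal} \varTheta T^{-1} \\
&= T^{-\intercal} T^{\intercal} J_{\cv} T T^{-1} = J_{\cv},
\end{align}
where the second equality uses the embedding identity, the third uses $W_{\varTheta} \in \operatorname{Aut}(\varTheta)$, and the fourth re-inserts $\varTheta = T^{\intercal} J_{\cv} T$. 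This establishes $W \in \operatorname{Sp}(2p, \mathbb{R})$.

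There is essentially no obstacle here; the statement is a conjugation identity whose only nontrivial ingredient is the invertibility of the embedding map $T = T_{\cv}$, which I would emphasize is special to the purely-oscillator case $M = \mathbb{R}^p$. Once discrete qudit factors are present, $T$ is no longer square and $T^{-1}$ must be replaced by a one-sided (pseudo)inverse on the $\dv$ block, which is only integer-valued under additional hypotheses such as $\dia{d} = \dia{1}$ (cf.\ Eq.~\eqref{eq:divisibility-clifford}); the clean one-line argument above does not carry over verbatim in that setting.
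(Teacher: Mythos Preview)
Your proof is correct and follows exactly the approach of the paper, which simply says the lemma is established by verifying the symplectic condition $W^{\intercal}J_{\cv}W=J_{\cv}$; your explicit chain of equalities is the same computation as Eq.~\eqref{eq:clifford-derivation}.
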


\begin{proof}
This is straightforwardly done by verifying the symplectic condition \(W^{\intercal}J_{\cv}W = J_{\cv}\).
\end{proof}

For any element \( W \in \mathrm{Sp}(2p, \mathbb{R}) \), we can choose an operator \( \widehat{W} \) from the complex metaplectic group \( \mathrm{Mp}(2p, \mathbb{R}) \) up to a phase. It is well known (see~\cite[Thm. 128]{de2011symplectic}) that
\begin{equation} \label{eq:meta_prp_R}
\widehat{W} A(l) D(T(l)) \widehat{W}^{\dagger} = A(WT(l)) D(WT(l)), \quad l \in \mathbb{Z}^{2p}.
\end{equation} 
Using $WT=TW_\varTheta$ we get
\begin{equation}\label{eq:metaplectic_property_1}
    \widehat{W} U_l \widehat{W}^{\dagger} = U_{W_{\varTheta} l}, \quad l \in \mathbb{Z}^{2p}.
    \end{equation}

Given an element in $\mathrm{Aut}(\varTheta)$, we now define an action on the logical operators. First, we define the subgroup
\begin{equation}\label{eq:group}
\mathcal{R} := \left\langle \rho(R) = \begin{pmatrix} R & 0 \\ 0 & (R^{-1})^{\intercal} \end{pmatrix} ~\bigg |~ R \in \mathrm{GL}(n, \mathbb{Z}) \right\rangle \subseteq \mathrm{SO}(n, n \mid \mathbb{Z}).
\end{equation}

Let $W_{\varTheta} \in \mathrm{Aut}(\varTheta)$ and assume that 
\begin{equation}
g \rho\left(W_{\varTheta}^{\intercal}\right) g^{-1} \subset \mathcal{R}
\end{equation}
inside $\mathrm{SO}(n, n \mid \mathbb{Z})$. Then $W_{\varTheta}$ defines an element in $\mathrm{Aut}(-g \cdot \varTheta)$. Indeed, since $g \rho\left(W_{\varTheta}^{\intercal}\right) g^{-1} \subset \mathcal{R}$, we have 
\begin{equation}
g \rho\left(W_{\varTheta}^{\intercal}\right) g^{-1} = \begin{pmatrix} S & 0 \\ 0 & (S^{-1})^{\intercal} \end{pmatrix}
\end{equation}
for some $S \in \mathrm{GL}(n, \mathbb{Z})$. Then $S^{\intercal}$ is an element of $\mathrm{Aut}(-g \cdot \varTheta)$: 
\begin{equation}
(S^{\intercal})^{\intercal} (g \cdot \varTheta) S^{\intercal} = S (g \cdot \varTheta) S^{\intercal} = (g \rho\left(W_{\varTheta}^{\intercal}\right) g^{-1}) \cdot (g \cdot \varTheta) = g \rho\left(W_{\varTheta}^{\intercal}\right) \cdot \varTheta = g \cdot \varTheta.
\end{equation}

Now, if $W_\varTheta \in \operatorname{Aut}(\varTheta)$ and 
$g = \begin{pmatrix}
0 & \dia{1}_{2p} \\
\dia{1}_{2p} & 0
\end{pmatrix}$, 
one can easily verify that $S^{\intercal} = (W_{\varTheta}^{-1})^{\intercal}$. Hence, $(W_{\varTheta}^{-1})^{\intercal} \in \operatorname{Aut}(-\varTheta^{\perp}).$  Therefore, $(W_{\varTheta}^{-1})^{\intercal}$ implements the desired action on the logical side. More explicitly, if \( \boldsymbol{\sigma} \in (\mathbb{Z}_{d_1} \times \cdots \times \mathbb{Z}_{d_p})^{\times 2} \) is the symplectic representation of the GKP logical Pauli operator \( P(\boldsymbol{\sigma}) = \prod_i \bar{Z}_i^{\hat{\sigma}_i} \bar{X}_i^{\hat{\sigma}_{i+p}} \) up to a phase (as described in Subsection~\ref{sec:rp}), then \( \boldsymbol{\sigma} \to (W_{\varTheta}^{-1})^{\intercal} \boldsymbol{\sigma} \mod \dia{d} \), where ``mod \( \dia{d} \)" means that the \( i \)-th and \( p+i \)-th rows of a vector are taken modulo \( d_i \).

The implementing unitary for the above logical action on the code space is given by the unitary operator \( \widehat{W} \).
One may need to apply a displacement operator in addition to achieve the correct action (see the issue mentioned on Page 9 of Ref.~\cite{fiber-bundle-ft}).

We now outline how to prepare the codewords for the code with respect to the symplectic Gram matrix \( \varTheta \) and a given \( T \) such that \( T^{\intercal} J T = \varTheta \). Suppose we can prepare a codeword \( \ket{\psi} \) for \( \varTheta_{\dia{d}} \) with the embedding map \( T_{\dia{d}} \), where \( \varTheta_{\dia{d}} \) is the standard form of \( \varTheta \), and \( T_{\dia{d}} \) is the standard lattice matrix. Then we know that \( (T_{\dia{d}})^{\intercal} J T_{\dia{d}} = \varTheta_{\dia{d}} \), and \( V^{\intercal} \varTheta V = \varTheta_{\dia{d}} \), for some matrix \( V \in \mathrm{GL}(2p, \mathbb{Z}) \). The matrix \( S = T V T_{\dia{d}}^{-1} \) is a symplectic matrix, and the corresponding metaplectic operator \( \widehat{S} \) transforms \( \ket{\psi} \) into the desired codeword. 

To prove the above, first note that $D(T_{\dia{d}}(l))\ket{\psi}=\ket{\psi},$ $l\in \Z^n.$ We have to show that \( \widehat{S}\ket{\psi} \) is a stabilized by $D(T(l))$ for all $l\in \Z^n.$ Now \begin{equation}\widehat{S}^\dagger D(T(l))\widehat{S}\ket{\psi} \propto  D(S^{-1}T(l))\ket{\psi}\propto D(T_{\dia{d}}V^{-1}(l))\ket{\psi}\propto \ket{\psi},\end{equation} since $V^{-1}(l)\in \Z^n.$ This proves our claim.

\subsection{The Case \( M = \mathbb{R} \times \Z_c \)}
\label{app:single-mode-symplectic}

Recall that in this case, the code is determined by a rational number $z = -d/c$. Now, start with a parameter $\theta$ such that $\theta \neq z$, so that the code is well-defined.

Assume that $W_\theta \in \mathrm{Aut}(\varTheta)$, where
\begin{equation}
\varTheta = \begin{pmatrix} 0 & \theta \\ -\theta & 0 \end{pmatrix}, \quad W_\theta = \begin{pmatrix} p & q \\ r & s \end{pmatrix}.
\end{equation}
By definition, $W_\theta$ is in the integer symplectic group $\mathrm{Sp}(2, \mathbb{Z})$. Using Eq.~\eqref{eq:g'forz}, one checks that
\begin{equation}
g' \rho\left(W_{\theta}^{\intercal}\right) g'^{-1} \subset \mathcal{R},
\end{equation}
where the group \(\cal R\) is defined in Eq.~\eqref{eq:group}, and where the corresponding action on the logical side is given by $(W_{\theta}^{-1})^{\intercal}$, as in the previous subsection.

We now aim to determine the unitary operator $W$ on $L^2(\mathbb{R} \times \mathbb{Z}_c)$ that implements the above action. To this end, we seek an analogue of Eq.~\eqref{eq:meta_prp_R} for this setting, and want to find the corresponding matrix $W$. Recall that in this case, $T = \begin{pmatrix} T_{\cv} \\ T_{\dv } \end{pmatrix}$, where $T_{\cv}$ satisfies $T_{\cv}^{\intercal} J_{\cv} T_{\cv} = \varTheta - Z$, and $T_{\dv } = \begin{pmatrix} -d & 0 \\ 0 & 1 \end{pmatrix}R$.

Now we claim that such a $W$ exists. Indeed, motivated by~\cite[Section 3]{2505.19869}, $W$ can be taken as
\begin{equation}\label{eq:symp_hybrid}
W = \begin{pmatrix} T_{\cv} W_\theta T_{\cv}^{-1} & 0 \\ 0 & T_{\dv} W_\theta T_{\dv}^{-1} \end{pmatrix},
\end{equation}
where
we have assumed that $T_{\dv} W_\theta T_{\dv}^{-1}=:W_c$ is an integer matrix. This condition assures that the entries of $W_c$ can be taken modulo $c$, so that $W_c \in \mathrm{Sp}(2, \mathbb{Z}_c).$ For $R=\dia{1}_{2},$ the integrality of $W_c$ boils down to the condition  $d \mid r.$ 
Then $W$ preserves the symplectic form $J$, i.e.,
\begin{equation}
W^{\intercal} J W = J, \quad \text{where} \quad J = \begin{pmatrix} J_{\cv} & 0 \\ 0 & J_{\dv } \end{pmatrix}.
\end{equation}

It is also straightforward to verify that
\begin{equation}
W T(l) = T(W_\theta(l)), \quad \text{for all } l \in \mathbb{Z}^2.
\end{equation}

To obtain an analogue of Eq.~\eqref{eq:meta_prp_R}, note that our $W$ does not entangle the qudit and the oscillator, so one can apply Eq.~\eqref{eq:meta_prp_R} separately to both parts.
For the qudits, the analogous equation is given in~\cite{10.1093/qmath/ham023} or in~\cite[Lemma 2.1, (1)]{zbMATH07374495}. It should be noted that due to the phase factors appearing in~\cite[Lemma 2.1, (1)]{zbMATH07374495}, a suitable ``Pauli correction"  is necessary (see, for example,~\cite[Page 9]{fiber-bundle-ft}).

Finally, to obtain an analogue of Eq.~\eqref{eq:metaplectic_property_1} in this setting, one can use the identity
\begin{equation}
W T(l) = T(W_\theta(l)),
\end{equation}
just as in the case of $\mathbb{R}^p$.

A standard form in this case, along with the corresponding metaplectic operator, can be obtained similarly to the case $M = \mathbb{R}^p$, by replacing $\varTheta$ with $\varTheta - Z$.

Consider the following subgroup of $\operatorname{Sp}(2,\mathbb{Z})$:
\begin{equation}
\Gamma_0(d) := \left\{
W = \begin{pmatrix}
p & q \\
r & s
\end{pmatrix}
\;\middle|\;
W \in \operatorname{Sp}(2,\mathbb{Z}), \; d \mid r
\right\}.
\end{equation}

\noindent This is known as the \textit{congruence subgroup} of $\operatorname{Sp}(2,\mathbb{Z})$ (see~\cite[Definition~2.1.6]{ZerbesModularForms2022}). From~\cite[Lemma~3.2.1]{ZerbesModularForms2022}, it follows that the reduction map (taking mod $c$)
\[
\Gamma_0(d) \longrightarrow \mathrm{Sp}(2,\mathbb{Z}_c)
\]
is surjective. Note that the proof of~\cite[Lemma~3.2.1]{ZerbesModularForms2022} also holds when $c$ and $d$ are coprime; it is not necessary to assume that $c$ is prime.
 This has the following consequence:

Suppose we want to implement a logical gate $\widetilde{W} \in \mathrm{Sp}(2,\mathbb{Z}_c)$ at the qubit level. We can always lift $\widetilde{W}$ to an element of $\Gamma_0(d)$ (which we again denote by $\widetilde{W}$). Now, define
\[
W_\theta := (\widetilde{W}^{-1})^{\intercal}.
\]
Then the metaplectic operator corresponding to the matrix $W$, as in formula~\eqref{eq:symp_hybrid}, provides the physical operation that implements $\widetilde{W}$ at the logical level. A similar analysis can be carried out for a general $R.$

The above analysis can be easily extended to this case:
\begin{equation}
T_{\dv } = 
\begin{pmatrix}
\dia{1}_k & 0 & 0 \\
0 & \dia{1}_k & 0
\end{pmatrix}
R,~.
\end{equation}
Here, we do not require any condition such as $d \mid r$ as in the previous case. Note that metaplectic operators for general finite abelian groups have been studied in~\cite{KAIBLINGER2009233}, where one can find an analogue of Equation~\eqref{eq:meta_prp_R} for the case of multiple qudits.

\prg{Oscillator-qudit symplectic transformations}
In the following, we prove that 
\begin{equation}
\operatorname{Sp}(\mathbb{R} \times \mathbb{Z}_c) \cong 
\left\{
\begin{pmatrix}
W & 0 \\
0 & V
\end{pmatrix}
\;\middle|\;
W \in \operatorname{Sp}(2, \mathbb{R}),\;
V \in \operatorname{Sp}(2, \mathbb{Z}_c)
\right\}
\cong \operatorname{Sp}(2, \mathbb{R}) \times \operatorname{Sp}(2, \mathbb{Z}_c).
\end{equation}

First, we note that
\begin{equation}
\operatorname{Sp}(\mathbb{R} \times \mathbb{Z}_c) \subseteq \operatorname{Aut}(\mathbb{R}^2 \times \mathbb{Z}_c^2).
\end{equation}
Since the groups involved are abelian, any $f \in \operatorname{Aut}(\mathbb{R}^2 \times \mathbb{Z}_c^2)$ can be written in block matrix form as
\begin{equation}
f = \begin{pmatrix}
f_{11} & f_{12} \\
f_{21} & f_{22}
\end{pmatrix},
\end{equation}
where each element is a group homomorphism,
\begin{equation}
f_{11} \in \operatorname{Hom}(\mathbb{R}^2, \mathbb{R}^2), \quad
f_{12} \in \operatorname{Hom}(\mathbb{R}^2, \mathbb{Z}_c^2), \quad
f_{21} \in \operatorname{Hom}(\mathbb{Z}_c^2, \mathbb{R}^2), \quad
f_{22} \in \operatorname{Hom}(\mathbb{Z}_c^2, \mathbb{Z}_c^2).
\end{equation}

Using the standard fact that a finite group cannot be mapped into the real line and vice versa,
\begin{equation}
\operatorname{Hom}(\mathbb{R}^2, \mathbb{Z}_c^2) = 0 = \operatorname{Hom}(\mathbb{Z}_c^2, \mathbb{R}^2),
\end{equation}
we conclude that $f_{12} = 0$ and $f_{21} = 0$. 
Thus, $f$ is of the form
\begin{equation}
f = \begin{pmatrix}
f_{11} & 0 \\
0 & f_{22}
\end{pmatrix},
\end{equation}
with $f_{11} \in \operatorname{Aut}(\mathbb{R}^2)$ and $f_{22} \in \operatorname{Aut}(\mathbb{Z}_c^2)$.
Moreover, if $f \in \operatorname{Sp}(\mathbb{R} \times \mathbb{Z}_c)$, then $f$ must preserve the symplectic form, which implies
\begin{equation}
f_{11} \in \operatorname{Sp}(\mathbb{R}^2) = \operatorname{Sp}(2, \mathbb{R}), \quad
f_{22} \in \operatorname{Sp}(\mathbb{Z}_c^2) = \operatorname{Sp}(2, \mathbb{Z}_c).
\end{equation}
This completes the proof.

\section{Standard form of $Z$ and codeword preparation}
\label{app:state-prep}

Let \( T \) be such that \( T_{\cv}^{\intercal} J T_{\cv} = Z \). Suppose we can prepare a codeword \( \ket{\psi} \) for \( Z_{\dia{d}} \) with the embedding map \( T_{\dia{d}}=\binom{T_{{\dia{d}},\cv}}{T_{\dia{d},\dv }} \), where \( Z_{\dia{d}} \) is the standard form of \( Z \), \( T_{\dia{d},\cv} \) is the standard lattice matrix and $T_{\dia{d},\dv }= \left(\begin{array}{ccc}
\dia{d} & 0 & 0 \\
0 & \dia{1}_k & 0
\end{array}\right).$ Then we know that \( (T_{\dia{d},\cv})^{\intercal} J_{\cv} T_{\dia{d},\cv} = Z_{\dia{d}} \), and \( R^{\intercal}  Z R= Z_{\dia{d}} \), for some matrix \( R \in \mathrm{GL}(2p, \mathbb{Z}) \). 
Also recall 
\begin{equation}
T^L=\binom{T_{\cv}}{LT_{\dia{d},\dv }R^{-1}} ~,    
\end{equation}
for some symplectic matrix $L$ on the qudit side.  The matrix 
\begin{equation}\label{eq:standard_form_symplectic}
  S = \left(\begin{array}{ccc}
S_{\cv} & 0  \\
0 & L 
\end{array}\right) = \left(\begin{array}{ccc}
T_{\cv} R T_{\dia{d},\cv}^{-1} & 0  \\
0 & L
\end{array}\right)    
\end{equation}
is a symplectic matrix in the hybrid system that relates \(T_{\dia{d},\cv}\) to \(T_{\cv}\) as
\begin{subequations}
\begin{align}
    (T_{\dia{d},\cv})^{\intercal} S_{\cv}^{\intercal} J_{\cv}S_{\cv}T_{D,\cv}&=(T_{\dia{d},\cv})^{\intercal} (T_{\dia{d},\cv}^{-1})^{\intercal} R^{\intercal} (T_{\cv})^{\intercal} J_{\cv}T_{\cv}RT_{\dia{d},\cv}^{-1}T_{\dia{d},\cv}\\&=R^{\intercal} T_{\cv}^{\intercal} J_{\cv}T_{\cv}R\\&=R^{\intercal} ZR\\&=Z_{\dia{d}}~.
\end{align}
\end{subequations}
In Eq.~\eqref{eq:standard_form_symplectic}, $L$ is viewed as an element of the symplectic group of $\Z_{\dia{c}}.$ 

The corresponding metaplectic operator \( \widehat{S} \) then transforms \( \ket{\psi} \) into the desired codeword. 
To prove this, first note that $D(T_{\dia{d}}(l))\ket{\psi}=\ket{\psi},$ $l\in \Z^n.$ We have to show that \( \widehat{S}\ket{\psi} \) is a stabilized by $D(T^L(l))$ for all $l\in \Z^n.$ Now \begin{equation}
\widehat{S}^\dagger D(T^L(l))\widehat{S}\ket{\psi} \propto  D(S^{-1}T^L(l))\ket{\psi}\propto D(T_{\dia{d}}R^{-1}(l))\ket{\psi}\propto \ket{\psi},
\end{equation}
since $R^{-1}(l)\in \Z^n.$

\end{document}